\documentclass[aos]{imsart}
\RequirePackage{amsthm,amsfonts,amssymb}
\RequirePackage[numbers,sort]{natbib}
\RequirePackage[colorlinks,citecolor=blue,urlcolor=blue]{hyperref}
\RequirePackage{graphicx}

\usepackage[reqno]{amsmath}
\usepackage{caption}
\usepackage{subcaption}
\usepackage{enumerate}
\usepackage{dsfont}
\usepackage[usenames,dvipsnames]{color}
\usepackage{mathrsfs}
\usepackage{placeins}
\usepackage{graphicx}
\usepackage{tikz}
\usetikzlibrary{arrows}
\usepackage{url}
\usepackage{comment}
\usepackage{multicol}
\usepackage{cleveref}

\usepackage[font=small,labelfont=small]{caption}

\usepackage{float}
\restylefloat{table}

\usepackage{natbib}

\numberwithin{equation}{section}

\usepackage{aligned-overset}
\usepackage{bbm}
\usepackage{longtable} 
\usepackage{booktabs}
\crefname{assumptionletter}{Model}{Model}
\Crefname{assumptionletter}{Model}{Model}
\usepackage{bibunits}

\defaultbibliographystyle{imsart-nameyear} 

\usepackage{etoolbox} 
\apptocmd{\sloppy}{\hbadness 10000\relax}{}{}

\crefname{equation}{}{} 
\Crefname{equation}{}{} 
\theoremstyle{plain} 
\newtheorem{theorem}{Theorem}[section]
\newtheorem{lemma}[theorem]{Lemma}
\newtheorem{proposition}[theorem]{Proposition}
\newtheorem{corollary}[theorem]{Corollary}
\newtheorem{assumptionletter}{{Model}}

\theoremstyle{remark}
\newtheorem{definition}[theorem]{Definition}

\newtheorem{example}[theorem]{Example}

\newtheorem{remark}[theorem]{Remark}

\newcommand{\bthe}{\begin{theorem}}
\newcommand{\ethe}{\end{theorem}}

\newcommand{\ben}{\begin{enumerate}}
\newcommand{\een}{\end{enumerate}}

\newcommand{\bit}{\begin{itemize}}
\newcommand{\eit}{\end{itemize}}

\newcommand{\beq}{\begin{equation}}
\newcommand{\eeq}{\end{equation}}

\newcommand{\ble}{\begin{lemma}}
\newcommand{\ele}{\end{lemma}}

\newcommand{\bde}{\begin{definition}\rm}
\newcommand{\ede}{\halmos\end{definition}}

\newcommand{\bco}{\begin{corollary}}
\newcommand{\eco}{\end{corollary}}

\newcommand{\bpr}{\begin{proposition}}
\newcommand{\epr}{\end{proposition}}

\newcommand{\brem}{\begin{remark}\rm}
\newcommand{\erem}{\end{remark}}

\newcommand{\bproof}{\begin{proof}}
\newcommand{\eproof}{\end{proof}}

\newcommand{\bexam}{\begin{example}\rm}
\newcommand{\eexam}{\end{example}}

\newcommand{\bfi}{\begin{fig}}
\newcommand{\efi}{\end{fig}}

\newcommand{\btab}{\begin{tab}}
\newcommand{\etab}{\end{tab}}

\newcommand{\beao}{\begin{eqnarray*}}
\newcommand{\eeao}{\end{eqnarray*}\noindent}

\newcommand{\balo}{\begin{align*}}
\newcommand{\ealo}{\end{align*}}

\newcommand{\balm}{\begin{align}}
\newcommand{\ealm}{\end{align}\noindent}

\newcommand{\beam}{\begin{eqnarray}}
\newcommand{\eeam}{\end{eqnarray}\noindent}

\newcommand{\barr}{\begin{array}}
\newcommand{\earr}{\end{array}}




\newcommand{\E}{\mathbb{E}}
\newcommand{\V}{\mathbb{VAR}}

\newcommand{\N}{\mathbb{N}}
\renewcommand\P{\mathbb{P}}
\newcommand{\R}{\mathbb{R}}

\newcommand{\vertiii}[1]{{\left\vert\kern-0.25ex\left\vert\kern-0.25ex\left\vert #1 
    \right\vert\kern-0.25ex\right\vert\kern-0.25ex\right\vert}}

\def\bV{\boldsymbol{V}}
\def\bI{\boldsymbol{I}}
\def\bSigma{\boldsymbol{\Sigma}}
\def\bSigman{{\boldsymbol{\Sigma}}^{(n)}}
\def\bGamma{\boldsymbol{\Gamma}}
\def\bGamman{\boldsymbol{\Gamma}_n}
\def\bUpsilon{\boldsymbol{\Upsilon}}
\def\bepsilon{\boldsymbol \varepsilon}

\def\bA{{\boldsymbol A}}
\def\bAn{{\boldsymbol A}^{(n)}}
\def\bB{\boldsymbol B}
\def\bBn{{\boldsymbol B}^{(n)}}

\def\bS{\boldsymbol S}
\def\bu{\boldsymbol u}

\def\bV{\boldsymbol V}
\def\bv{\boldsymbol v}
\def\bVn{{\boldsymbol V}^{(n)}}
\def\bX{\boldsymbol X}
\def\bXn{{\boldsymbol X}^{(n)}}


\def\bb{\boldsymbol b}

\def\bTn{{\boldsymbol T}^{(n)}}
\def\bW{ {\boldsymbol W}^{(n)} }
\def\bw{\boldsymbol w}

\def\bCn{{{\boldsymbol P}^{(n)}}}

\def\bM{\boldsymbol M}
\def\bm{\boldsymbol m}
\def\bN{\boldsymbol N}
\def\bH{\boldsymbol H}
\def\bP{\boldsymbol P}

\def\wbVn{{\mathcal{ V}}^{(n)}}

\def\sqrtsbSigman{ \sbSigman^{1/2} }
\def\wbTheta{  \sqrtsbSigman  \wbVn \bTn}

\def\bx{\boldsymbol x}

\def\ba{\boldsymbol a}
\def\be{\boldsymbol e}

\def\Un{U^{(n)}}
\def\Vn{V}

\def\obWn{{\overline{\boldsymbol{W}}^{(n)}}}
\def\tbWn{{\widetilde{\boldsymbol{W}}^{(n)}}}
\def\tbWnT{{\widetilde{\boldsymbol{W}}^{(n)\,\top}}}
\def\bTheta{\boldsymbol \Theta}
\def\bThetan{{\boldsymbol \Theta}^{(n)}}
\def\btheta{\boldsymbol \theta}

\def\bXin{{\boldsymbol{D}^{(n)}}}
\def\obXi{\overline{\boldsymbol{D}}_{n}}

\def\bhSigma{\widehat{\boldsymbol \Sigma }}
\def\bhSigman{{\widehat{\boldsymbol \Sigma }}{}^{(n)}}
\def\bhSigmanprime{\bhSigma{}^{(n)\prime}}
\def\sbhSigman{ \bUpsilon^{(n)}} 
\def\sbSigman{{{\boldsymbol{\Gamma}}^{(n)}}}   

\def\well{\widehat{\lambda}}

\def\wellprime{\well'}
\def\hslambda{\widehat{\xi}}
\newcommand{\slambda}[1]{\xi_{n,#1}}

\def\slambdaohnen{\xi}


\def\bhSigmanprimesub{\bhSigma{}^{(n)\prime}}
\newcommand{\slambdasub}[1]{\xi_{n,#1}}

\def\obWnsub{{\overline{\boldsymbol{W}}^{(n)}}}
\def\tbWnsub{{\widetilde{\boldsymbol{W}}^{(n)}}}
\def\tbWnTsub{{\widetilde{\boldsymbol{W}}^{(n)\,\top}}}

\def\bAnsub{{\boldsymbol A}^{(n)}}
\def\bBnsub{{\boldsymbol B}^{(n)}}

\def\bTnsub{{\boldsymbol T}^{(n)}}
\def\bWsub{ {\boldsymbol W}^{(n)} }
\def\bCnsub{{{\boldsymbol P}^{(n)}}}

\def\wbVnsub{{\mathcal{V}}^{(n)}}
\def\dnsub{d_{n}}
\def\bXinsub{{\boldsymbol{D}^{(n)}}}

\DeclareMathOperator*{\argmin}{arg\,min}

\newcommand{\vague}{\stackrel{\lower0.2ex\hbox{$\scriptscriptstyle
                    \it{v} $}}{\rightarrow}}
\newcommand{\weak}{\stackrel{\lower0.2ex\hbox{$\scriptscriptstyle
                    \it{w} $}}{\rightarrow}}
\newcommand{\what}{\stackrel{\lower0.2ex\hbox{$\scriptscriptstyle
                    \it{\hat{w}} $}}{\rightarrow}}
\newcommand{\eqdis}{\stackrel{\lower0.2ex\hbox{$\scriptscriptstyle
                    \mathrm{d}$}}{=}}
\newcommand{\distr}{\stackrel{\lower0.2ex\hbox{$\scriptscriptstyle
                    \it{d} $}}{\rightarrow}}
%

\DeclareMathOperator{\AIC}{AIC}
\DeclareMathOperator{\BIC}{BIC}
\newcommand{\QAIC}{\AIC^\circ}
\newcommand{\QBIC}{\BIC^\circ}
\newcommand{\QAICstar}{\AIC^*}
\newcommand{\QBICstar}{\BIC^*}

\newcommand{\vX}{\lVert \bX \rVert}
\newcommand{\vXi}{\lVert \bX_i \rVert} 
\newcommand{\ninf}{n \to \infty}

\newcommand{\limn}{\lim_{n \rightarrow \infty}}

\newcommand{\asconv}{\overset{\P \text{-a.s.}}{\longrightarrow}}
\newcommand{\Pconv}{\overset{\mathbb{P}}{\longrightarrow}}
\newcommand{\Dconv}{\overset{\mathcal{D}}{\longrightarrow}}
\newcommand{\di}{\, \mathrm{d}}

\newcommand{\ps}{p^*}
\newcommand{\pn}{p_n}

\renewcommand{\hat}{\widehat}

\DeclareMathOperator{\rank}{rank}
\DeclareMathOperator{\Cov}{Cov}
\DeclareMathOperator{\ve}{vec}
\DeclareMathOperator{\diag}{diag}


\setlength\marginparwidth{75pt} 


\allowdisplaybreaks 
\definecolor{darkgreen}{RGB}{0,139,0}


\begin{document}
\begin{bibunit}

\begin{frontmatter}
\title{Estimation of the number of principal components \vspace*{0.2cm}\\ in high-dimensional multivariate extremes}
\runtitle{PCA for multivariate extremes}

\begin{aug}
  \author{\fnms{Lucas} \snm{Butsch}\ead[label=e1]{lucas.butsch@kit.edu} }
    \and
   \author{\fnms{Vicky} \snm{Fasen-Hartmann}\ead[label=e2]{vicky.fasen@kit.edu}\orcid{0000-0002-5758-1999}}
 \address{Institute of Stochastics, Karlsruhe Institute of Technology \\[2mm] \printead[presep={ }]{e1,e2}}


  \runauthor{L. Butsch and  V. Fasen-Hartmann}
\end{aug}

\begin{abstract}

For multivariate regularly random vectors of dimension $d$, the dependence structure of the extremes is modeled by the so-called angular measure. When the dimension $d$ is high, estimating the angular measure is challenging because of its complexity.  In this paper, we use Principal Component Analysis (PCA) as a method for dimension reduction and estimate the 
number of significant principal components of the empirical covariance matrix of the angular measure under the assumption of a spiked covariance structure. Therefore, we develop Akaike Information Criteria ($\AIC$) and  Bayesian Information Criteria ($\BIC$) to estimate the location of the spiked eigenvalue of the covariance matrix, reflecting the number of significant components, and explore these information criteria on consistency. On the one hand, we investigate the case where the dimension $d$ is fixed, and on the other hand, where the dimension $d$ converges to $\infty$ under different high-dimensional scenarios. 
When the dimension $d$ is fixed, we establish that the $\AIC$ is not consistent, whereas the $\BIC$ is weakly consistent. In the high-dimensional setting, with techniques from random matrix theory, we derive sufficient conditions for the $\AIC$ and the $\BIC$ to be consistent. Finally, the performance of the different $\AIC$ and $\BIC$ versions is compared in a simulation study and applied to high-dimensional precipitation data.

\end{abstract}

\begin{keyword}[class=MSC]
\kwd[Primary ]{62G32}
\kwd{62H25}
\kwd[; Secondary ]{60G70}
\kwd{62G20} 
\end{keyword}

\begin{keyword}
\kwd{AIC}
\kwd{BIC}
\kwd{consistency}
\kwd{dimension reduction}
\kwd{high-dimension}
\kwd{multivariate extremes}
\kwd{PCA}
\kwd{multivariate regular variation}
\kwd{spiked model}
\end{keyword}


\end{frontmatter}

\section{Introduction}

In multivariate extreme value theory, extremes occur per se rarely so that the dimensionality of the data in fields such as finance, insurance, meteorology, hydrology, and, more broadly, environmental risk assessment often approaches or exceeds the number of extreme observations which is a big challenge in the statistical analysis of complex and high-dimensional data. 
Therefore, a standard approach from multivariate statistics is to apply a dimension reduction method to reduce the model complexity and circumvent the curse of dimensionality. 
A very nice overview of different methods for constructing sparsity in high-dimensional multivariate extremes is given in  \citet{Engelke:Ivanovs}, including Principle Component Analysis (PCA),  spherical $k$-means, graphical models, or sparse regular variation, to mention a few.

In multivariate statistics,  PCA is a widely used method for dimension reduction, data visualization, clustering and feature extraction
(\cite{muirhead:1982,anderson:2003}).  
In recent years, the literature on implementing PCA for high-dimensional and complex data of multivariate extremes to construct some sparsity in the data has grown rapidly 
(\cite{chautru,DS:21,Sabourin_et_al_2024,
CT:19,MR4582715, drees2025asymptoticbehaviorprincipalcomponent,wan2024characterizing}). A classical concept of multivariate extreme value theory is multivariate regular variation (\cite{resnick:1987,resnick:2007,Falk:Buch}). 
 A $d$-dimensional random vector $\bX$ is \textit{multivariate regularly varying }of index $\alpha>0$ if there exists a random vector $\bTheta$ on the unit sphere  such that
\begin{equation*} 
\P \left( \frac{\vX}{t} > r, \frac{\bX}{\vX} \in \cdot \Big| \, \vX > t \right) \Dconv r^{-\alpha} \P( \bTheta \in \cdot),\quad t \rightarrow \infty,
\end{equation*}
for all $r > 0$. 
The dependence structure of the extremes of $\bX$ is modeled in the 
spectral vector $\bTheta$ whose distribution is also called \textit{angular measure} and its covariance matrix is denoted as $\bSigma=\Cov(\bTheta)$. \citet{drees2025asymptoticbehaviorprincipalcomponent}  and
\citet{DS:21} set the mathematical framework for PCA for the empirical covariance estimator of $\bSigma=\Cov(\bTheta)$ by analyzing the squared reconstruction error, the excess risk and their asymptotic behavior.
However, until now, the research on the number of significant principal components, the so-called  \textit{dimensionality}, in PCA for multivariate extremes, is limited. In \citet{DS:21}, the dimension was estimated through the examination of empirical risk plots.  An alternative approach is to analyze the scree plot, which is the plot of the empirical eigenvalues, in search of an "elbow" as a cutoff point, indicating a minimal variation in the empirical eigenvalues after this point. But a big challenge in extreme value theory is the choice of the threshold $t$, which defines the extreme observations as the data whose norm is above $t$. Changing this threshold also changes the number of extreme observations and the estimates of the empirical eigenvalues. Thus, a change of the threshold results in a different scree plot and possibly also in a different elbow.  Given the ambiguity and uncertainty of both methods, as well as the need for case-by-case evaluation, a mathematically based approach is necessary.
One of the few works that developed a statistical method for estimating the dimensionality is given in \citet{drees2025asymptoticbehaviorprincipalcomponent}, whose method is based on asymptotic results for the reconstruction error of the projections and is a kind of testing problem with the disadvantage that it depends on different tuning parameters.

In this paper, we propose information criteria to estimate the number of significant principal components in multivariate extremes modeled through the covariance matrix $\bSigma$ of $\bTheta$. Therefore, we combine an approach of 
\citet{Bai:Choi:Fujikoshi:2018} and \citet{jiang:2023} from
high-dimensional statistics with methods from extreme value theory (\cite{resnick:1987,resnick:2007,HF:2006}) and random matrix theory (\cite{Bai:Silverstein:2010}). We assume a \textit{spiked covariance model} for the covariance matrix $\bSigma$ which goes back to \citet{johnstone:2001} and is widely used in high-dimensional statistics (\cite{Bai:Choi:Fujikoshi:2018,Bai:Yao:2012,fujikoshi:sakurai:2016,jiang:2023,johnstone:2018}) with applications in various fields, e.g., speech recognition, wireless communication and statistical learning as mentioned in 
\citet{Paul:2007}. 

\smallskip

\noindent\textsc{Spiked Covariance Model}: \hspace{0.2cm}
\textit{The eigenvalues $\lambda_1, \ldots, \lambda_d$ of $\bSigma$ satisfy}
\begin{align}
    \lambda_1 \ge \lambda_2 \ge \cdots \ge \lambda_{\ps} > \lambda_{{\ps}+1} = \cdots = \lambda_{d-1} \eqqcolon \lambda > 0 . \label{sec:model}
\end{align}

\smallskip

The smallest eigenvalue $\lambda_d$ is not considered to avoid numerical instability as it can be equal to $0$, if, for example, $\bTheta$ is concentrated on the subspace of the unit sphere with non-negative values.  
The main objective of this article is to develop an estimator $\widehat{p}_n$ for the unknown dimensionality parameter $\ps$  of significant eigenvalues of $\bSigma$  through information criteria. When $d$ is relatively large compared to $\widehat{p}_n$, the data are projected on the $\widehat{p}_n$-dimensional subspace spanned by the empirical eigenvectors of the largest  $\widehat{p}_n$ empirical eigenvalues $\well_{n,1}, \ldots, \well_{n,\widehat{p}_n}$. This lower-dimensional representation allows for more extensive and in-depth analyses of the dependence structure in the extremes.

The estimation of the dimensionality $\ps$ in PCA is explored in this paper using two information criteria: the Akaike Information Criterion ($\AIC$) and the Bayesian Information Criterion ($\BIC$). Similar information criteria were investigated in \citet{fujikoshi:sakurai:2016} for Gaussian random vectors in a large-sample asymptotic framework and in
  \citet{Bai:Choi:Fujikoshi:2018} for general data in the high-dimensional case. Both criteria are motivated by a Gaussian likelihood function, although our underlying model for $\bTheta$ is not Gaussian.  Since in a Gaussian spiked covariance model with $k_n$ observations, the log likelihood function can be written as a functional of the empirical eigenvalues in the form
  \begin{equation*}
      k_n\sum_{i=1}^{p^*} \log  ( \well_{n,i}  )  + k_n(d - p^*)  \log  \Big(  \sum_{j = p^*}^{d} \frac{\well_{n,j}}{d_n -p^*} \Big) + k_n  \log \Big( \frac{k_n -1}{k_n} \Big)^d + k_n d( \log(2 \pi) + 1),
  \end{equation*} 
   both the AIC and BIC are defined as functionals of the empirical eigenvalues $\well_{n,1}, \ldots, \well_{n,d}$. 

  The main goal of this paper is to derive necessary and sufficient conditions for our AIC and BIC to be consistent. Therefore, we require methods from random matrix theory to derive the asymptotic properties of the empirical eigenvalues, which are the basic components of the AIC and BIC. For this purpose,
  we differentiate between two cases when $n$ observations are available and of these $k_n$ are extreme. The first is the classic large sample size and fixed-dimension case, where $\ninf$ and the dimension $d$ is fixed. As is typical for such information criteria, we find that the BIC is consistent, whereas the AIC is, in general, not consistent. In the second case, we assume that $d = d_n$ also depends on $n$ and $d_n / k_n \rightarrow c >0$ as $\ninf$. In this case, the empirical eigenvalues are not consistent estimators for the eigenvalues anymore.  For high-dimensional i.i.d. data with finite fourth moments, it is well-known 
that the \textit{empirical spectral distribution function} converges to the Mar\v{c}enko-Pastur law (\cite{marchenko:pastur:1967}), which describes the bulk distribution of the empirical eigenvalues. The spiked covariance model, introduced by \citet{johnstone:2001}, extends the Mar\v{c}enko-Pastur framework by adding a small number of spiked eigenvalues corresponding to relevant dimensions for the PCA.  In the context of this paper, we derive as well the asymptotic properties of the empirical eigenvalues of $\bSigma$
with the Mar\v{c}enko-Pastur distribution in the limit
and use it for the investigation of the consistency of our information criteria.
To the best of our knowledge, this paper is the first paper to develop consistent information criteria for the dimensionality $\ps$ of the PCA in high-dimensional multivariate extremes. The only other information criteria of \citet{meyer_muscle23} and  \citet{butsch2024fasen} use the concept of sparse regular variation to construct sparsity in the data, in contrast to PCA.

\subsection*{Structure of the paper} This paper is organized as follows. 
In \Cref{sec:asymptotic}, we properly define the empirical eigenvalues $\well_{n,1}, \ldots, \well_{n,d}$ of $\bSigma$, which are the main components in the definition of the information criteria. In addition, we explore the asymptotic properties of the empirical eigenvalues, where in the high-dimensional case we restrict our study to a parametric family of distributions, the so-called \textit{directional model}. The subject of \Cref{sec:fixed_dim} are the $\AIC$  and the $\BIC$  for estimating 
the location $p^*$ of the spiked eigenvalue in the fixed-dimensional case, where \Cref{sec:high_dim} explores the high-dimensional case when $d_n / k_n \rightarrow c >0$ as $\ninf$.
We will examine  the case $0<c<1$ and $c>1$ separately in 
\Cref{sec:high_dim_d_klein}  and
\Cref{sec:high_dim_d_groß}, respectively. In both cases, we derive sufficient criteria for the 
$\AIC$ and the $\BIC$ to be weakly consistent.
In a simulation study in \Cref{sec:simulation}, we compare the different information criteria and apply them to precipitation data in \Cref{sec:data_set}. Finally, we state a conclusion in \Cref{sec:conclusion}. The proofs for the results presented in this paper are provided in the Appendix.

\subsection*{Notation} Throughout the paper, we use the following notation and assume that all random variables are defined on the same probability space $(\Omega,\mathcal{F},\P)$.   First of all, $\Vert \bx \Vert$ is the Euclidean norm for $\bx\in\R^d$ and $\Vert \bA \Vert$ is the spectral norm for matrices $\bA\in\R^{d\times d}$.  
 The matrix $\bI_d\in\R^{d\times d}$ is  the  identity matrix, $\be_i$ is the $i$-th unit vector with  $1$ at the $i$-th entry and $0$ else, $\boldsymbol{0}_d \coloneqq (0, \ldots, 0)^\top \in \R^d$ is the zero vector  and $\boldsymbol{1}_d: = (1, \ldots, 1)^\top \in \R^d$ is the vector  containing only $1$. 
 For a vector $\bx\in \R^d$ we write $\diag(\bx)\in\R^{d\times d}$ for a diagonal matrix with the components of $\bx$ on the diagonal and for $\bA = (\ba^{(1)}, \ldots, \ba^{(d)})   \in \R^{d \times d}$ the operator $\ve (\bA) \in \R^{d^2}$ stacks the columns of $\bA$ in a vector such that $\ve (\bA) = ({\ba^{(1)}}^\top, \ldots, {\ba^{(d)}}^\top)^\top $ and $\lambda_i(\bA)$ denotes the $i$-th largest eigenvalue of $\bA$.  If $\bB  \in \R^{d \times d}$ and $\bA = \bB^2 \in \R^{d \times d}$ then $\bA^{1/2} \coloneqq \bB$ denotes the square root of a matrix. A sequence of matrices $\bA_1, \bA_2, \ldots \in \R^{d \times d}$ with fixed dimension $d$ is denoted by $(\bA_n)_{n \in \N}$ and if the dimensions $d = d_n$ depends on $n$, we write $(\bAn)_{n \in \N}$, where $\bAn \in \R^{d_n \times d_n}$. For a univariate distribution function $F$ the function $F^\leftarrow : (0,1) \rightarrow \R $ with $  p \mapsto \inf \{ x \in \R: F(x) \ge p \}$ is the  generalized inverse of $F$. By $\delta_{\bx}$ we denote the Dirac measure in $\bx \in \R^{d}$.
  Finally,  $\Dconv$ is the notation for convergence in distribution,  $\Pconv$ is the notation for convergence in probability and  $\asconv$ is the notation for almost sure convergence. 

\section{Asymptotic behavior of the empirical eigenvalues of \texorpdfstring{$\bSigma$}{Sigma}} \label{sec:asymptotic}

The information criteria AIC and BIC of this paper are defined by the empirical eigenvalues  $\well_{n,1}, \ldots, \well_{n,d}$ of $\bSigma$.   Therefore, in the first step, in \Cref{sec:fixed_dim_asymptotic}, we define and explore the empirical eigenvalues and their asymptotic properties in the fixed-dimensional case, and then, in \Cref{sec:high_dim_asymptotic}, in the high-dimensional case. With the knowledge of the asymptotic behavior of the empirical eigenvalues,  we will be able to derive the asymptotic behavior of the AIC and the BIC in \Cref{sec:fixed_dim} and \Cref{sec:high_dim}. The proofs of this section are moved to \Cref{sec:asymptotic_proof}.

\subsection{Fixed-dimensional case}  \label{sec:fixed_dim_asymptotic}

In the case where the dimension $d$ is fixed, we consider the following model. 

\begin{assumptionletter}~ \label{asu:fixed_dim}
\begin{enumerate}
    \item[(A1)] Let $\bX, \bX_1, \bX_2, \ldots$ be a sequence of i.i.d. regularly varying random vectors with tail index $\alpha>0$ and spectral vector $\bTheta$.
    \item[(A2)] The eigenvalues $\lambda_1, \ldots, \lambda_d$ of $\bSigma=\Cov(\Theta)$ satisfy
\begin{align*}
    \lambda_1 \ge \lambda_2 \ge \cdots \ge \lambda_{\ps} > \lambda_{{\ps}+1} = \cdots = \lambda_{d-1} \eqqcolon \lambda > 0 . 
\end{align*}
    \item[(A3)] Let $(k_n)_{n \in \N}$ be a sequence in $\N$ with $k_n \rightarrow \infty$ and $k_n / n  \rightarrow 0$ for $ n \rightarrow \infty$.
    \item[(A4)] \label{(A4)} Suppose  $u_n$ is a sequence such that for $\ninf$, $n \P( \Vert \bX \Vert > u_n) / k_n \to 1$ and
    \begin{align*}
       \sup_{x \in [\frac{1}{1 + \tau}, 1 + \tau]} \sqrt{k_n}  \left\Vert   \frac{ n }{k_n}\E \left[ \begin{pmatrix}
            \frac{ \ve (\bX \bX^\top)}{\vX^2}\\
            1
        \end{pmatrix}  \mathbbm{1} \{ x \vX > u_n \} \right]  - x^\alpha \begin{pmatrix}
            \ve (\E[\bTheta \bTheta^\top] )\\
            1
        \end{pmatrix}  \right\Vert \rightarrow 0,  
   \end{align*}
   as $\ninf$.
\end{enumerate}
\end{assumptionletter}
The last assumption (A4) is a technical assumption that we require for some proofs (cf. \Cref{Remark 2.2}). 
Under these model assumptions, the empirical estimator for $\bTheta$ is defined as
\begin{align*}
   \widehat{\bTheta}_n  \coloneqq \frac{1}{k_n} \sum_{i=1}^n  \frac{\bX_i}{\vXi} \mathbbm{1}\{ \vXi >  \Vert \bX_{(k_n+1,n)} \Vert  \},
\end{align*}
and hence,  the empirical covariance matrix $\bhSigma_n $ of $\bSigma$ is  
\begin{align}
     \bhSigma_n &\coloneqq    \frac{1}{k_n} \sum_{j=1}^n  \left( \frac{\bX_j}{\Vert \bX_j \Vert} -  \widehat{\bTheta}_n  \right) \left( \frac{\bX_j}{\Vert \bX_j \Vert} -  \widehat{\bTheta}_n  \right)^\top \mathbbm{1} \{ \Vert \bX_j \Vert > \Vert \bX_{(k_n+1,n)}\Vert  \} \label{def:sigma_estimator}
\end{align}
with eigenvalues $\well_{n,1}, \ldots, \well_{n,d}$ where $n\in\N$ is the number of observations and $\bX_{(k_n+1,n)}$ denotes the observation with the $(k_n+1)$-th largest norm. 
Both the AIC and the BIC information criteria for the estimation of $\ps$  will be defined by the empirical eigenvalues $\well_{n,1}, \ldots, \well_{n,d}$. Therefore, it is important to know the asymptotic behavior. We start to derive the asymptotic behavior of the empirical covariance matrix $\bhSigma_n $ in the next proposition and use this to derive the asymptotic behavior of the empirical eigenvalues. 

\begin{proposition} \label{th:asymp_theta_cov} 
 Let \Cref{asu:fixed_dim} be given.  
   Then  as $\ninf$,
    \begin{equation*}
       \sqrt{k_n} \bigl( \bhSigma_n - \bSigma \bigr) \Dconv \bS,
    \end{equation*}
  where $\ve (\bS)$ follows a centered normal distribution with covariance matrix 
\begin{equation*}
    \Cov \bigl( \ve( (\bTheta - \E[\bTheta])(\bTheta - \E[\bTheta])^\top  )  \bigr).
\end{equation*}
\end{proposition}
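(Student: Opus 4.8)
The plan is to reduce the claim to a joint central limit theorem for the empirical first and second angular moments and then to apply the delta method. Write $\boldsymbol{\mu}:=\E[\bTheta]$ and $\bh(\bu):=\bigl(\ve(\bu\bu^\top)^\top,\bu^\top\bigr)^\top$ for $\bu\in\Sd$. Since exactly $k_n$ of the summands in \eqref{def:sigma_estimator} are non-zero, one has the exact identity
\[
  \bhSigma_n \;=\; \frac{1}{k_n}\sum_{j=1}^n \frac{\bX_j\bX_j^\top}{\Vert\bX_j\Vert^2}\,\mathbbm{1}\{\Vert\bX_j\Vert>\Vert\bX_{(k_n+1,n)}\Vert\} \;-\; \widehat{\bTheta}_n\widehat{\bTheta}_n^\top \;=:\; \widehat{\bM}_n-\widehat{\bTheta}_n\widehat{\bTheta}_n^\top ,
\]
while $\bSigma=\E[\bTheta\bTheta^\top]-\boldsymbol{\mu}\boldsymbol{\mu}^\top$, so that $\bhSigma_n=\phi(\widehat{\bM}_n,\widehat{\bTheta}_n)$ for the map $\phi(\bM,\bm):=\bM-\bm\bm^\top$, which is differentiable at $(\E[\bTheta\bTheta^\top],\boldsymbol{\mu})$ with derivative $(\dot{\bM},\dot{\bm})\mapsto\dot{\bM}-\dot{\bm}\boldsymbol{\mu}^\top-\boldsymbol{\mu}\dot{\bm}^\top$. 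Setting $\widehat{\bV}_n:=k_n^{-1}\sum_{j=1}^n\bh(\bX_j/\Vert\bX_j\Vert)\,\mathbbm{1}\{\Vert\bX_j\Vert>\Vert\bX_{(k_n+1,n)}\Vert\}$, which stacks $\ve(\widehat{\bM}_n)$ and $\widehat{\bTheta}_n$, it thus suffices to show $\sqrt{k_n}\bigl(\widehat{\bV}_n-\E[\bh(\bTheta)]\bigr)\Dconv N\bigl(\boldsymbol{0},\Cov(\bh(\bTheta))\bigr)$ and then to push this through the delta method for $\phi$.

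First I would prove the corresponding statement for the deterministically thresholded version $\widetilde{\bV}_n:=k_n^{-1}\sum_j\bh(\bX_j/\Vert\bX_j\Vert)\mathbbm{1}\{\Vert\bX_j\Vert>u_n\}$, an average of $n$ i.i.d.\ bounded vectors that are non-zero with probability $\P(\Vert\bX\Vert>u_n)\sim k_n/n$. Decomposing $\sqrt{k_n}(\widetilde{\bV}_n-\E[\bh(\bTheta)])$ into its centred part and the deterministic bias $\sqrt{k_n}\bigl(\tfrac{n}{k_n}\E[\bh(\bX/\Vert\bX\Vert)\mathbbm{1}\{\Vert\bX\Vert>u_n\}]-\E[\bh(\bTheta)]\bigr)$, the bias vanishes by the second-order expansion in (A4) (and its first-order counterpart for the block of $\bh$ linear in $\bu$, cf.\ \Cref{Remark 2.2}), while the centred part obeys the multivariate Lindeberg CLT with limiting covariance $\E[\bh(\bTheta)\bh(\bTheta)^\top]$ — the mean-correction cross term in the variance being of order $k_n/n\to0$ by multivariate regular variation, and the Lindeberg condition being trivial since the summands divided by $\sqrt{k_n}$ are uniformly $o(1)$. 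Hence $\sqrt{k_n}(\widetilde{\bV}_n-\E[\bh(\bTheta)])\Dconv N(\boldsymbol{0},\E[\bh(\bTheta)\bh(\bTheta)^\top])$.

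Next I would replace the fixed level $u_n$ by $\Vert\bX_{(k_n+1,n)}\Vert$, using $\Vert\bX_{(k_n+1,n)}\Vert/u_n\Pconv1$. This is a standard tail-empirical-process step (in the spirit of \citet{resnick:2007,HF:2006}): one establishes asymptotic tightness on $[(1+\tau)^{-1},1+\tau]$ of the process $x\mapsto k_n^{-1/2}\sum_j\bigl(\bh(\bX_j/\Vert\bX_j\Vert)\mathbbm{1}\{x\Vert\bX_j\Vert>u_n\}-\tfrac{k_n}{n}x^\alpha\E[\bh(\bTheta)]\bigr)$, whose drift is controlled uniformly in $x$ by (A4), together with a.s.\ continuity of its limit at $x=1$, so that evaluating at the data-dependent point $\widehat{x}_n:=u_n/\Vert\bX_{(k_n+1,n)}\Vert\Pconv1$ does not change the weak limit. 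Because $k_n^{-1}\sum_j\mathbbm{1}\{\Vert\bX_j\Vert>\Vert\bX_{(k_n+1,n)}\Vert\}\equiv1$, the random level produces the familiar correction term: $\sqrt{k_n}(\widehat{\bV}_n-\E[\bh(\bTheta)])\Dconv\bG-\E[\bh(\bTheta)]W_0$, where $(\bG,W_0)$ is centred jointly Gaussian, $W_0\sim N(0,1)$, with $\Cov(\bG)=\E[\bh(\bTheta)\bh(\bTheta)^\top]$ and $\Cov(\bG,W_0)=\E[\bh(\bTheta)]$; equivalently the limit is $N(\boldsymbol{0},\Cov(\bh(\bTheta)))$, which is the required statement.

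Finally, the delta method applied to $\bhSigma_n=\phi(\widehat{\bM}_n,\widehat{\bTheta}_n)$ (using also the consistency $\widehat{\bM}_n\Pconv\E[\bTheta\bTheta^\top]$, $\widehat{\bTheta}_n\Pconv\boldsymbol{\mu}$) yields $\sqrt{k_n}(\bhSigma_n-\bSigma)\Dconv\bS$ with $\ve(\bS)=\bigl[\,\bI_{d^2}\;\big|\;-(\boldsymbol{\mu}\otimes\bI_d+\bI_d\otimes\boldsymbol{\mu})\,\bigr]\bigl(\bG-\E[\bh(\bTheta)]W_0\bigr)$, a centred Gaussian vector with covariance $\Cov\bigl(\ve(\bTheta\bTheta^\top)-(\boldsymbol{\mu}\otimes\bI_d+\bI_d\otimes\boldsymbol{\mu})\bTheta\bigr)$. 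Since $\ve(\bTheta\bTheta^\top)-(\boldsymbol{\mu}\otimes\bI_d+\bI_d\otimes\boldsymbol{\mu})\bTheta=\ve\bigl((\bTheta-\boldsymbol{\mu})(\bTheta-\boldsymbol{\mu})^\top\bigr)-\ve(\boldsymbol{\mu}\boldsymbol{\mu}^\top)$ and an additive constant is irrelevant for a covariance, this equals $\Cov\bigl(\ve((\bTheta-\E[\bTheta])(\bTheta-\E[\bTheta])^\top)\bigr)$, as asserted. I expect the random-threshold passage to be the main obstacle: it is precisely the correction $-\E[\bh(\bTheta)]W_0$, created by replacing $u_n$ with the order statistic $\Vert\bX_{(k_n+1,n)}\Vert$, that turns the raw second-moment matrix $\E[\bh(\bTheta)\bh(\bTheta)^\top]$ of the second paragraph into the centred $\Cov(\bh(\bTheta))$, and hence makes the limiting covariance come out as $\Cov(\ve((\bTheta-\E[\bTheta])(\bTheta-\E[\bTheta])^\top))$ rather than its uncentred analogue; carrying it out rigorously needs the full tail-empirical-process apparatus together with the uniform bias control furnished by (A4).
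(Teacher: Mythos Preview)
Your approach is correct and is essentially the detailed execution of what the paper only sketches: the paper omits the proof entirely and refers to \citet[Theorem~1]{Larsson:Resnick:2012} (see \Cref{Remark 2.2}), stating that the tail-empirical-process techniques there generalize to $\ve(\bhSigma_n)$ under (A4). Your decomposition $\bhSigma_n=\widehat{\bM}_n-\widehat{\bTheta}_n\widehat{\bTheta}_n^\top$, the joint CLT for $(\widehat{\bM}_n,\widehat{\bTheta}_n)$ via a deterministic-threshold CLT followed by the random-threshold correction $-\E[\bh(\bTheta)]W_0$, and the final delta-method step are precisely the Larsson--Resnick machinery specialized to the functional $\bh$; your computation showing that the correction turns $\E[\bh\bh^\top]$ into $\Cov(\bh)$, and that the delta-method image equals $\Cov\bigl(\ve((\bTheta-\boldsymbol{\mu})(\bTheta-\boldsymbol{\mu})^\top)\bigr)$ up to an additive constant, is clean and correct.

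One point you flag yourself deserves emphasis: assumption (A4) controls the bias only for the block $\ve(\bX\bX^\top)/\Vert\bX\Vert^2$ and the indicator, not for the linear block $\bX/\Vert\bX\Vert$ needed for $\widehat{\bTheta}_n$. Your parenthetical ``first-order counterpart'' is indeed required for the argument to go through, and it is not implied by (A4) as written. This is a gap in the paper's stated hypotheses rather than in your method; in practice one either assumes the analogous uniform bias condition for $\E[(\bX/\Vert\bX\Vert)\mathbbm{1}\{x\Vert\bX\Vert>u_n\}]$ or notes that in concrete models (e.g.\ the directional model) both hold simultaneously.
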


\begin{remark} \label{Remark 2.2}
In the bivariate case and for $h: \R^2\mapsto \R $ defined as $ h(x,y)=xy$, the asymptotic distribution  of 
\begin{align*}
    \frac{1}{k_n} \sum_{i=1}^n  h \left( \frac{\bX_i}{\vXi} \right) \mathbbm{1}\{ \vXi >  \Vert \bX_{(k_n+1,n)} \Vert  \}
\end{align*}
 was derived in \citet[Theorem 1]{Larsson:Resnick:2012}. The techniques of the proof can be generalized and applied to $\ve (\bhSigma_n)$ with the technical assumption (A4), and therefore, the proof of \Cref{th:asymp_theta_cov} is omitted. Note that if $\Vert \btheta \Vert = 1$ for $\btheta \in \R^d$ then  $\Vert \ve(\btheta \btheta^\top) \Vert = 1$ and  higher moments of $\bTheta$ exist, since $\bTheta$ is bounded. A complementary result on the asymptotic behavior of the empirical covariance matrix is also given in the recent publication 
 \citet[Theorem 2.1]{drees2025asymptoticbehaviorprincipalcomponent}.
\end{remark}

Now, we are able to present the asymptotic distribution of the empirical eigenvalues.
\begin{theorem} \label{cor:ev_sqrtn_normality}
 Let \Cref{asu:fixed_dim} be given. 
 \begin{enumerate}[(a)]
     \item Then as $\ninf$,
     \begin{align*}
         (\well_{n,1}, \ldots, \well_{n,d-1}) = (\lambda_1, \ldots, \lambda_{d-1}) + O_\P(1/ \sqrt{k_n}),
     \end{align*}
     \item and \begin{align*}
        \sqrt{k_n} \big( (\well_{n,\ps+1}, \ldots, \well_{n,d-1}) - \lambda \mathbf{1}_{d- \ps -1} \big) \Dconv \bM,
    \end{align*}
  where the entries of the random vector $\bM\in\R^{d-\ps -1}$ are the $(d-\ps -1)$ largest eigenvalues of $\bP_\lambda \bS \bP_\lambda$ in decreasing order, $\bS$ is defined as in \Cref{th:asymp_theta_cov} and $\bP_\lambda \in \R^{d \times d}$ is the orthogonal projection onto the space spanned by the eigenvectors with respect to the eigenvalue $\lambda$ of $\bSigma$.

 \end{enumerate}
 
\end{theorem}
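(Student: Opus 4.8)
Both parts reduce, via perturbation theory for eigenvalues of symmetric matrices, to \Cref{th:asymp_theta_cov}; the plan is to obtain (a) from a Lipschitz bound and (b) from a first-order eigenvalue expansion at the degenerate eigenvalue $\lambda$.

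For (a) I would invoke Weyl's inequality: for every $i$ one has $|\well_{n,i} - \lambda_i| = |\lambda_i(\bhSigma_n) - \lambda_i(\bSigma)| \le \Vert \bhSigma_n - \bSigma \Vert$. Since $\sqrt{k_n}(\bhSigma_n - \bSigma) \Dconv \bS$ by \Cref{th:asymp_theta_cov}, the sequence $\sqrt{k_n}\Vert \bhSigma_n - \bSigma\Vert$ converges in distribution (continuous mapping theorem) and is therefore $O_\P(1)$, which yields $\well_{n,i} - \lambda_i = O_\P(1/\sqrt{k_n})$ for all $i$, in particular for $i = 1,\dots,d-1$.

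For (b), write $m$ for the multiplicity of the eigenvalue $\lambda$ of $\bSigma$ (so $m \ge d-\ps-1$, with $m = d-\ps-1$ unless $\lambda_d = \lambda$), let $\bP_\lambda = \bU\bU^\top$ be its orthogonal eigenprojection with $\bU \in \R^{d\times m}$, $\bU^\top\bU = \bI_m$, and set $\bS_n \coloneqq \sqrt{k_n}(\bhSigma_n - \bSigma)$. First I would use (a) to localize: since $\lambda_1,\dots,\lambda_{\ps}$ exceed $\lambda$ and $\lambda_d$ (if $<\lambda$) lies below $\lambda$, all by fixed amounts, with probability tending to one the eigenvalues of $\bhSigma_n$ lying in a fixed small neighbourhood of $\lambda$ are exactly $\well_{n,\ps+1} \ge \cdots \ge \well_{n,\ps+m}$. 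The core step is then the first-order eigenvalue perturbation expansion for the cluster at $\lambda$,
\begin{align*}
\sqrt{k_n}\,\bigl(\well_{n,\ps+i} - \lambda\bigr) \;=\; \lambda_i\bigl(\bU^\top \bS_n \bU\bigr) \;+\; O_\P\bigl(1/\sqrt{k_n}\bigr), \qquad i = 1,\dots,m,
\end{align*}
the remainder being the second-order term, which is $O_\P(1/k_n)$ because $\bS_n = O_\P(1)$ and the reduced-resolvent factor is bounded (the other eigenvalues of $\bSigma$ stay a fixed distance from $\lambda$). Passing to the limit, the continuous mapping theorem applied to $\bA \mapsto \bigl(\lambda_1(\bU^\top\bA\bU),\dots,\lambda_m(\bU^\top\bA\bU)\bigr)$ together with \Cref{th:asymp_theta_cov} gives $\bigl(\lambda_1(\bU^\top\bS_n\bU),\dots,\lambda_m(\bU^\top\bS_n\bU)\bigr) \Dconv \bigl(\lambda_1(\bU^\top\bS\bU),\dots,\lambda_m(\bU^\top\bS\bU)\bigr)$; since $\bP_\lambda\bS\bP_\lambda = \bU(\bU^\top\bS\bU)\bU^\top$ and $\bU^\top\bU = \bI_m$, the matrices $\bU^\top\bS\bU$ and $\bP_\lambda\bS\bP_\lambda$ have the same $m$ largest eigenvalues. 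By Slutsky's theorem, $\sqrt{k_n}\bigl((\well_{n,\ps+1},\dots,\well_{n,\ps+m}) - \lambda\boldsymbol{1}_m\bigr) \Dconv \bigl(\lambda_1(\bP_\lambda\bS\bP_\lambda),\dots,\lambda_m(\bP_\lambda\bS\bP_\lambda)\bigr)$, and retaining the first $d-\ps-1$ coordinates yields exactly the assertion with $\bM = \bigl(\lambda_1(\bP_\lambda\bS\bP_\lambda),\dots,\lambda_{d-\ps-1}(\bP_\lambda\bS\bP_\lambda)\bigr)$.

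The step I expect to be the main obstacle is making the first-order perturbation expansion rigorous and uniform in the stochastic setting — i.e., establishing that the second-order remainder is genuinely $O_\P(1/k_n)$ for an eigenvalue of multiplicity $m > 1$. I would address this either by citing an existing statement on the limiting distribution of the eigenvalues of a perturbed symmetric matrix that allows multiplicities (in the spirit of the Wielandt-inequality approach of Eaton and Tyler), or by a self-contained argument: localize the $m$ relevant empirical eigenvalues via (a), use the Davis--Kahan $\sin\Theta$ theorem to show that the empirical eigenprojection onto the corresponding $m$ eigenvectors is within $O_\P(1/\sqrt{k_n})$ of $\bP_\lambda$, and then diagonalize the $m\times m$ compression of $\bhSigma_n$ to that perturbed subspace.
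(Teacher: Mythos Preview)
Your argument for (a) is exactly the paper's: Weyl's inequality (cited there as \cite[Theorem~A.46]{Bai:Silverstein:2010}) combined with $\sqrt{k_n}(\bhSigma_n-\bSigma)=O_\P(1)$ from \Cref{th:asymp_theta_cov}.

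For (b) the paper does not carry out the perturbation expansion at all; it simply invokes \cite[Proposition~8]{Dauxois:Pousse:Romain:1982}, which establishes precisely this limiting distribution for the eigenvalue cluster at a repeated eigenvalue once one has the weak convergence of $\sqrt{k_n}(\bhSigma_n-\bSigma)$. Your sketch is the machinery behind that reference, and your first fallback option---citing an existing result allowing multiplicities---is exactly what the paper does, just with Dauxois--Pousse--Romain rather than Eaton--Tyler. So your approach is correct but longer: the paper gains brevity by outsourcing the delicate step you flagged as the main obstacle, while your route would make the argument self-contained at the cost of working through the $O_\P(1/k_n)$ remainder (or the Davis--Kahan localization) explicitly.
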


\subsection{Directional model in the high-dimensional case} \label{sec:high_dim_asymptotic}  \label{sec:lsd}

In the high-dimensional setting, where $d = d_n$ depends on $n$ and $d_n \rightarrow \infty$ as $\ninf$, we restrict our studies to a parametric family of distributions, the so-called \textit{directional model}. A directional model
has the advantage that the underlying random vectors have an independent radial and directional component, but still the covariance matrix $\bSigma^{(n)}$ has a spiked structure. The explicit definition of a directional model is the following.

\medskip
\noindent \textsc{Directional Model (D)}: \hspace{0.2cm}
\textit{Suppose for any $n\in\N$ that }
\begin{align} \label{def:Gamma}
    \sbSigman \coloneqq \begin{pmatrix}
        \bGamman & \boldsymbol{0}_{\ps \times d_n} \\
        \boldsymbol{0}_{d_n \times \ps} & \bI_{d_n -\ps}
    \end{pmatrix} \in \R^{d_n \times d_n},
\end{align} 
\textit{where $\bGamman \in \R^{\ps \times \ps}$ is a covariance matrix  with eigenvalues $$\slambda{1} \ge \cdots \ge \slambda{\ps} > 1,$$  $\bVn=(V_1,\ldots,V_{d_n})^{\top} \in \R^{d_n}$ is a centered random vector consisting of i.i.d. symmetric components with variance $1$ and finite fourth moment and $Z$ is a standard Fréchet distributed random variable. Then the sequence of random vectors $( \bXn)_{n \in \N}$ with 
\begin{align*}
    \bXn \coloneqq \frac{ \phantom{\Vert} \sqrtsbSigman \bVn  \phantom{\Vert}}{\Vert \sqrtsbSigman \bVn \Vert} \cdot Z \in \R^{d_n},
\end{align*} 
follows the so-called directional model.}
\medskip

Due to construction, we see directly that the directional component  
\begin{align*}
   \bThetan \coloneqq \frac{ \phantom{\Vert}  \bXn  \phantom{\Vert}}{\Vert \bXn \Vert } =   \frac{  \phantom{\Vert} \sqrtsbSigman \bVn  \phantom{\Vert}}{\Vert \sqrtsbSigman \bVn \Vert},
\end{align*}
of $\bXn$ is independent of the radial component $\Vert \bXn \Vert=Z$, and additionally, $\bThetan$  is the spectral vector of the multivariate regularly varying random vector $\bXn$  of index $1$. Thus, the dependence structure of $\bXn$ is completely determined by $\bThetan$. 

\begin{remark}~
\begin{enumerate}[(a)]
   \item In high-dimensional models it is necessary to specify the model as we have done with the directional model, because due to the increase in dimensionality, the empirical covariance matrix and even the covariance matrix do not converge and hence, it will be impossible to get any kind of limit results without assuming some structure on the model.
    \item Scaling of $\bVn$ has no influence on the distribution of $\bXn$, therefore setting the variance  of $V_i$ to $1$ is no restriction.
    \item The \textit{empirical spectral distribution} (\citet[p. 5]{Bai:Silverstein:2010}) of $\sbSigman$ is defined as
\begin{align*}
    F^{\sbSigman}(x) = \frac{1}{d_n} \sum_{i=1}^{d_n} \mathbbm{1} \{ \slambda{i} \le  x \}, \quad x\in\R,
\end{align*}
and the \textit{limiting spectral distribution} (LSD) of $\sbSigman$ is the Dirac measure $\delta_1$, since
\begin{align*}
   \limn F^{\sbSigman } (x)  
   =  \limn \frac{1}{d_n} \sum_{j=1}^{p} \mathbbm{1} \{ \slambda{j}\le   x \} + \frac{d_n -p}{d_n} \mathbbm{1} \{1 \le  x  \} =  \mathbbm{1} \{1 \le  x  \}, \quad x \in \R.
\end{align*}
\end{enumerate}
    
\end{remark}
In the following, we denote 
 the covariance matrix of $\bThetan$ as 
 \begin{equation*}
     \bSigman\coloneqq \Cov(\bThetan)
 \end{equation*}
 whereas $\sbSigman$ is the covariance matrix of the non-standardized directional component $\sqrtsbSigman \bVn$. Not only $\bGamman$  has the eigenvalues $\slambda{1}, \ldots, \slambda{\ps}$ but $\sbSigman$ has likewise these eigenvalues.  Additionally, $\sbSigman$ has $(d_n - {\ps})$-times the eigenvalue $1$ which we denote as well as $\slambda{\ps+1}, \ldots, \slambda{d_n}$.
 We are still in the setup of the last section because not only the eigenvalues of $\sbSigman$  satisfy the spiked covariance structure 
 \begin{equation*}
    \slambda{1}\geq \cdots \geq  \slambda{\ps}>1=\slambda{\ps+1}= \cdots =\slambda{d_n} 
 \end{equation*}
 in \Cref{sec:model} but as well the eigenvalues of $\bSigman$ satisfy
 the structure in \Cref{sec:model} although  $\bSigman$ has different eigenvalue than  $\sbSigman$.

\begin{lemma} \label{lem:hd_spiked_covariance_structure}
   Let the Directional Model (D) be given.
   Then for any $n\in\N$,
   \begin{equation*}
       \lambda_{\ps}(\bSigman) > \lambda_{\ps + 1}(\bSigman) = \cdots = \lambda_{d_n}(\bSigman).
   \end{equation*}
\end{lemma}
Hence, there is a spike after the $\ps$-th eigenvalue of $\bSigman$ and the eigenvalues $\lambda_{\ps +1}(\bSigman),$ $\ldots, \lambda_{d_n - 1}(\bSigman)$ are all equal, as required in the definition of the spiked covariance model in \Cref{sec:model}.
We summarize the model in the following.

\begin{assumptionletter}~ \label{asu:high_dim}
\begin{enumerate}
    \item[(B1)]
Let  $\bXn, \bXn_1, \bXn_2, \ldots, \bXn_n$ be an i.i.d. sequence of $d_n$-dimensional random vectors satisfying the Directional Model (D) with  $\E[|V_1|^4] < \infty$. 
\item[(B2)] The eigenvalues $\slambda{1}, \ldots, \slambda{d_n}$ of $\sbSigman$  in \Cref{def:Gamma} satisfy
\begin{equation*}
\slambda{1} \ge \cdots \ge \slambda{\ps} > 1=\slambda{\ps+1}=\cdots=\slambda{d_n}.
\end{equation*}
    \item[(B3)] Let
 $(k_n)_{n \in \N}$ be a sequence in $\N$ with $k_n \rightarrow \infty$,  $k_n/n \rightarrow 0$ and $$d_n/ k_n \rightarrow c > 0, \quad \text{ as } \ninf.$$ 
\end{enumerate}
\end{assumptionletter}

\begin{remark}~
\begin{enumerate}[(a)]
    \item  The assumption $d_n/ k_n \rightarrow c > 0$ as $\ninf$ guarantees that the dimension $d_n$ increases with a rate similar to the number of extremes $k_n$.
    \item  Eigenvalues which are larger than $1 + \sqrt{c}$, are \textit{called distant spiked eigenvalues}, whereby the asymptotic behavior of the corresponding empirical eigenvalues changes if they are above or below $1 + \sqrt{c}$; see 
 the following theorem. Due to \citet[Theorem 4.1 and Theorem 4.2]{Silverstein:Choi:1995}, the assumption $\slambda{\ps} > 1 + \sqrt{c}$  is equivalent  to $\varphi_c'( \slambda{\ps}) > 0$ where
    \begin{equation}
    \varphi_c(x)  \coloneqq x \left(  1 +  c \int \frac{t}{x - t } \di \delta_{1}(t) \right)= x \left( 1 + \frac{c}{x-1}  \right). \label{eq:def_phi}
\end{equation}   
\end{enumerate}
\end{remark}
Analog to \Cref{def:sigma_estimator} 
we define the $d_n \times d_n$ empirical covariance matrix of $\bSigman$ as
\begin{equation}
     \bhSigman\coloneqq    \frac{1}{k_n}  \sum_{j=1}^n  \left( \frac{\bXn_j}{\Vert \bXn_j \Vert} -   \widehat{\bTheta}{}^{(n)}  \right) \cdot \left( \frac{\bXn_j}{\Vert \bXn_j \Vert} -   \widehat{\bTheta}{}^{(n)}  \right)^\top \mathbbm{1}\{ \Vert \bXn_{i} \Vert  >  \Vert \bXn_{(k_n+1,n)} \Vert  \} , \label{def:sigma_estimator_hd}
\end{equation}
with eigenvalues $\well_{n,1}, \ldots, \well_{n,d_n}$, where 
\begin{equation*}
   \widehat{\bTheta}{}^{(n)}  \coloneqq \frac{1}{k_n} \sum_{i=1}^n  \frac{\bXn_i}{\Vert \bXn_i \Vert} \mathbbm{1}\{ \Vert \bXn_{i} \Vert  >  \Vert \bXn_{(k_n+1,n)} \Vert  \}.
\end{equation*}
In contrast to the empirical covariance matrix $\bhSigma_n $  in \Cref{def:sigma_estimator} with a fixed dimension $d\times d$, the dimension of the empirical covariance matrix $\bhSigman$ in \eqref{def:sigma_estimator_hd} is $d_n\times d_n$ and hence, growing in $n$. 

Let us first present the asymptotic distribution of the eigenvalue $\well_{n,1}, \ldots, \well_{n,d_n}$ of $\bhSigman$ under the constraint that $\bGamman$ and its eigenvalues $\slambda{1}, \ldots, \slambda{\ps}$ are converging, and afterwards when $\slambda{\ps}\to\infty$.

\begin{theorem} \label{lem:Bai_Yaio2}
     Let \Cref{asu:high_dim} be given. Suppose that {$\bGamman \rightarrow \bGamma$} 
     and $(\slambda{1}, \ldots, \slambda{\ps}) \rightarrow (\slambdaohnen_1, \ldots, \slambdaohnen_{\ps})$ as $\ninf$ with $\slambdaohnen_{\ps} > 1 + \sqrt{c}$. 
    \begin{enumerate}[(a)]
        \item Let $i \in\{ 1, \ldots, \ps\}$. Then the asymptotic behavior 
        \begin{equation*}
            d_n \well_{n,i} \Pconv \varphi_c(\slambdaohnen_i),  \quad \text{ as }  \ninf 
        \end{equation*}        
        holds, where $\varphi_c$ is defined as in \Cref{eq:def_phi}.
        \item Let $(i_n(\alpha))_{n \in \N}$ be a sequence in $\N$ with $i_n(\alpha) > \ps$ and $i_n(\alpha)/d_n \rightarrow \alpha \in [0,1]$ for any $\alpha\in(0,1)$. Then  
        \begin{equation*}
           \sup_{\alpha\in(0,1)}\left\vert d_n \well_{n,i_n(\alpha)}-  F^\leftarrow_c(1 - \alpha)\right\vert \Pconv 0,  \quad   \text{ as }  \ninf,
        \end{equation*}
         where $F^\leftarrow_c$ is the generalized inverse of $F_c$  with density 
\begin{equation*}  
    f_c(x) =\begin{cases}
        \frac{1}{2 \pi x c} \sqrt{((1 + \sqrt{c})^2-x) ( x-(1 - \sqrt{c})^2)}, & x \in ((1 - \sqrt{c})^2,(1 + \sqrt{c})^2), \\
        0, & \text{otherwise},
    \end{cases}
\end{equation*}
and point mass $1 - 1/c$ at $0$ if $c > 1$.
In particular, if $(q_n)_{n\in\N}$ is a sequence in $\N$ with $q_n=o(d_n)$ and $q_n>p^*$, then
         $d_n \well_{n,q_n} \Pconv (1+\sqrt{c})^2$.
        \item Suppose $0<c\le1$ and $(q_n)_{n\in\N}$ is a sequence in $\N$ with $q_n=o(d_n)$ as $n\to\infty$.
        Then as $n\to\infty$,
        \begin{equation*}
             \frac{1}{d_n-q_n} \sum_{i=q_n+1}^{d_n} d_n \well_{n,i} \Pconv 1.
\end{equation*}
        \item Suppose $c>1$ and $(q_n)_{n\in\N}$ is a sequence in $\N$ with $q_n=o(d_n)$ as $n\to\infty$.
        Then as $n\to\infty$,
        \begin{equation*}
             \frac{1}{k_n-q_n} \sum_{i=q_n+1}^{k_n} d_n  \well_{n,i} \Pconv   c.
\end{equation*}
    \end{enumerate}
\end{theorem}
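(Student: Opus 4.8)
Write $\mu_{n,i}\coloneqq d_n\well_{n,i}$, let $\bThetan_j\coloneqq \bXn_j/\Vert\bXn_j\Vert$ be the (unit-length) directional parts, and let $E\coloneqq\{j\le n:\Vert\bXn_j\Vert>\Vert\bXn_{(k_n+1,n)}\Vert\}$, so that $|E|=k_n$ almost surely and $\widehat{\bTheta}{}^{(n)}=k_n^{-1}\sum_{j\in E}\bThetan_j$. My first step is a purely algebraic reduction. Expanding the squared norms in \eqref{def:sigma_estimator_hd} and using $\Vert\bThetan_j\Vert=1$ together with $\sum_{j\in E}\bThetan_j=k_n\widehat{\bTheta}{}^{(n)}$ gives the trace identity $\tr(\bhSigman)=1-\Vert\widehat{\bTheta}{}^{(n)}\Vert^2$. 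Moreover, the $k_n$ vectors $\{\bThetan_j-\widehat{\bTheta}{}^{(n)}:j\in E\}$ sum to zero, hence span a subspace of dimension at most $k_n-1$, so $\rank(\bhSigman)\le k_n-1$ and $\well_{n,i}=0$ for every $i\ge k_n$. Therefore, for $n$ large enough that $q_n<k_n$,
\[
\frac{1}{k_n-q_n}\sum_{i=q_n+1}^{k_n}\mu_{n,i}
=\frac{d_n\bigl(1-\Vert\widehat{\bTheta}{}^{(n)}\Vert^2\bigr)}{k_n-q_n}-\frac{1}{k_n-q_n}\sum_{i=1}^{q_n}\mu_{n,i},
\]
and it suffices to show the first term converges to $c$ and the second to $0$ in probability.

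For the first term I would prove $\Vert\widehat{\bTheta}{}^{(n)}\Vert^2\Pconv 0$. Since the components of $\bVn$ are i.i.d.\ and symmetric, the map $\bv\mapsto\sqrtsbSigman\bv/\Vert\sqrtsbSigman\bv\Vert$ is odd, so $\bThetan\eqd-\bThetan$ and hence $\E[\bThetan]=\boldsymbol 0_{d_n}$. Each $\bThetan_j$ depends only on $\bVn_j$ and is therefore independent of all radial parts $Z_1,\dots,Z_n$, hence independent of the selection event $\{j\in E\}$ and of the other directional parts; using $\Vert\bThetan_j\Vert=1$ and $\E[\bThetan_j]=\boldsymbol 0_{d_n}$, a short second-moment computation gives $\E\Vert\widehat{\bTheta}{}^{(n)}\Vert^2=k_n^{-1}\to 0$, so $\Vert\widehat{\bTheta}{}^{(n)}\Vert^2\Pconv 0$ by Markov's inequality. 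Because $q_n=o(d_n)$ together with $d_n/k_n\to c$ forces $q_n/k_n\to 0$ and thus $d_n/(k_n-q_n)\to c$, Slutsky's lemma yields $\frac{d_n(1-\Vert\widehat{\bTheta}{}^{(n)}\Vert^2)}{k_n-q_n}\Pconv c$.

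For the second term, monotonicity of the $\mu_{n,i}$ gives $\sum_{i=1}^{q_n}\mu_{n,i}\le q_n\mu_{n,1}$, hence
\[
0\le\frac{1}{k_n-q_n}\sum_{i=1}^{q_n}\mu_{n,i}\le\frac{q_n}{k_n-q_n}\,\mu_{n,1},
\]
and since $q_n/(k_n-q_n)\to 0$ it is enough to know $\mu_{n,1}=d_n\well_{n,1}=O_\P(1)$. This is already contained in parts (a) and (b): if $\ps\ge 1$, part (a) with $i=1$ gives $d_n\well_{n,1}\Pconv\varphi_c(\slambdaohnen_1)$; if $\ps=0$, part (b) applied to the constant sequence $q_n\equiv1>\ps$ gives $d_n\well_{n,1}\Pconv(1+\sqrt c)^2$. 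Combining the two displays proves $\frac{1}{k_n-q_n}\sum_{i=q_n+1}^{k_n}\mu_{n,i}\Pconv c$.

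The algebra of the trace identity and the rank bound are routine, and once the independence structure is in place the moment computation for $\widehat{\bTheta}{}^{(n)}$ is immediate. The one step that genuinely matters is recognizing and exploiting that the selection of the $k_n$ extremes depends only on the radial variables $Z_j$, which are independent of the mean-zero directional variables $\bThetan_j$ — this is precisely the structural property the directional model supplies, and it is the main obstacle one would face in trying to run the same argument for a general multivariate regularly varying model rather than the directional one.
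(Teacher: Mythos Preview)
Your argument for part (d) is correct and takes a genuinely different route from the paper's. The paper treats (a)--(d) uniformly: it first passes, via the distributional identity $\bhSigman\eqd\bhSigmanprime$ and a uniform eigenvalue comparison between $d_n\bhSigmanprime$ and the non-normalized sample covariance $\sbhSigman$, to the setting where the classical results of \citet{Bai:Yao:2012} and \citet{silverstein:1995} apply; part (d) then follows by rewriting the partial sum as an integral of the empirical quantile function of $\sbhSigman$ and invoking uniform convergence of the quantiles to those of the Mar\v{c}enko--Pastur law. Your proof of (d) instead exploits a feature special to the unit-sphere geometry --- the exact trace identity $\tr(\bhSigman)=1-\Vert\widehat{\bTheta}{}^{(n)}\Vert^2$ --- together with the symmetry $\bThetan\eqd-\bThetan$ and the radial/angular independence to kill the mean in $L^2$. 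The gain is that (d) becomes essentially elementary once (a) is in hand, and the same trace computation would give (c) as well. The cost is that your method gives no information about \emph{where} the bulk eigenvalues sit, so it cannot replace the Mar\v{c}enko--Pastur analysis needed for (b); and since part (a), on which you rely for the bound $\mu_{n,1}=O_\P(1)$, is itself proved through that machinery, your approach does not avoid random matrix theory altogether --- it only bypasses it for the averaging statements (c) and (d).
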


\begin{remark}
        The limiting spectral distribution $F_c$ is called Mar\v{c}enko-Pastur law after the authors of  \cite{marchenko:pastur:1967} and plays an important role in random matrix theory (cf. \citet{Bai:Silverstein:2010}). \citet{marchenko:pastur:1967} first derived for random matrices with i.i.d. components the asymptotic distribution of the eigenvalues of the empirical covariance matrix when the sample size and the dimension tend to infinity, which differs from the classical statistical setting with fixed dimension. 
\end{remark}

So far we have assumed that the first $\ps$ eigenvalues $\slambda{1}, \ldots, \slambda{\ps}$ of $\sbSigman$  are bounded. Alternatively, it is also possible to suppose that   $\slambda{\ps} \rightarrow \infty$ as $\ninf$. 
\begin{theorem} \label{th:Bai_Lemma2_2_lambda}
        Let \Cref{asu:high_dim} be given. Suppose  $\slambda{\ps}  \rightarrow \infty$ {and  $\slambda{1} = o(d_n^{1/2})$} as $\ninf$. 
    \begin{enumerate}[(a)]
        \item Let $i \in\{ 1, \ldots, \ps\}$. Then the asymptotic behavior  \label{th:Bai_Lemma2_2_lambda_a}
        \begin{equation*}
            d_n \well_{n,i}/\slambda{i} \Pconv 1,  \quad \text{ as }  \ninf 
        \end{equation*}
        holds.
        \item Let $(i_n(\alpha))_{n \in \N}$ be a sequence in $\N$ with $i_n(\alpha) > \ps$ and $i_n(\alpha)/d_n \rightarrow \alpha \in [0,1]$ for any $\alpha\in(0,1)$. Then  
        \begin{equation*}
           \sup_{\alpha\in(0,1)}\left\vert d_n \well_{n,i_n(\alpha)}-  F^\leftarrow_c(1 - \alpha)\right\vert \Pconv 0,  \quad   \text{ as }  \ninf,
        \end{equation*}
         where $F^\leftarrow_c$ is defined as in \Cref{lem:Bai_Yaio2}. 
In particular, if $(q_n)_{n\in\N}$ is a sequence in $\N$ with $q_n=o(d_n)$ and $q_n>p^*$, then
         $d_n \well_{n,q_n} \Pconv (1+\sqrt{c})^2$.
        \item Suppose $0<c\le1$ and $(q_n)_{n\in\N}$ is a sequence in $\N$ with $q_n=o(d_n)$ as $n\to\infty$. \label{th:Bai_Lemma2_2_lambda_b}
        Then as $n\to\infty$,
        \begin{equation*}
             \frac{1}{d_n-q_n} \sum_{i=q_n+1}^{d_n} d_n \well_{n,i} \Pconv 1.
\end{equation*}
        \item Suppose $c>1$ and $(q_n)_{n\in\N}$ is a sequence in $\N$ with $q_n=o(d_n)$ as $n\to\infty$. \label{th:Bai_Lemma2_2_lambda_c}
        Then as $n\to\infty$,
        \begin{equation*}
             \frac{1}{k_n-q_n} \sum_{i=q_n+1}^{k_n} d_n \well_{n,i} \Pconv   c.
\end{equation*}
        \item Suppose $0<c<1$ and let $i \in\{ 1, \ldots, \ps\}$. Then as $n\to\infty$, \label{th:Bai_Lemma2_2_lambda_d}
        \begin{equation*}
             \frac{d_n  \well_{n,i}}{\frac{1}{d_n-i} \sum_{j=i+1}^{d_n} d_n \well_{n,j}} \Pconv \infty.
\end{equation*}
    \end{enumerate}
    
\end{theorem}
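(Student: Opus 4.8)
The plan is to reduce $d_n\bhSigman$ to a generalized spiked sample covariance (Wishart-type) matrix and then read the five statements off from random matrix theory, in the same spirit as the proof of \Cref{lem:Bai_Yaio2}; the only genuinely new analytic input is the asymptotics of the leading sample eigenvalues under \emph{diverging} spikes, and part (e) will be a one-line consequence of (a) and (c). First I would exploit the product structure of the directional model: since the Fréchet radius $Z$ is independent of the direction $\bThetan=\sqrtsbSigman\bVn/\Vert\sqrtsbSigman\bVn\Vert$, the set of indices selected by the indicator in \eqref{def:sigma_estimator_hd} is a function of the radii only, hence independent of $(\bThetan_1,\dots,\bThetan_n)$; by exchangeability $\bhSigman$ is therefore distributed as the sample covariance matrix of $k_n$ i.i.d.\ copies of $\bThetan$. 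Writing $\bThetan_j=r_j\sqrtsbSigman\bVn_j$ with $r_j=\Vert\sqrtsbSigman\bVn_j\Vert^{-1}$, the law of large numbers (using $\E[|V_1|^4]<\infty$) together with $\tr(\sbSigman)=d_n+\sum_{i=1}^{\ps}\slambda{i}=d_n(1+o(1))$ — where the restriction $\slambda{1}=o(d_n^{1/2})$ enters — gives $d_n r_j^2\Pconv1$, while the empirical mean $\widehat{\bTheta}{}^{(n)}=\tfrac1{k_n}\sum_j\bThetan_j$ satisfies $\Vert\widehat{\bTheta}{}^{(n)}\Vert^2=O_\P(1/k_n)$ because $\E[\bThetan]=\boldsymbol0$ by the symmetry of $\bVn$. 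Combining these with the moment bound on $\bVn$, one obtains the operator-norm approximation
\begin{equation*}
\Bigl\Vert\, d_n\bhSigman-\sqrtsbSigman\Bigl(\tfrac1{k_n}\sum_{j=1}^{k_n}\bVn_j\bVn_j^\top\Bigr)\sqrtsbSigman\,\Bigr\Vert\Pconv0 .
\end{equation*}

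Since the eigenvalues of $\sqrtsbSigman\widehat{\bV}_n\sqrtsbSigman$ with $\widehat{\bV}_n\coloneqq\tfrac1{k_n}\sum_j\bVn_j\bVn_j^\top$ coincide with those of $\sbSigman\widehat{\bV}_n$, and $\widehat{\bV}_n$ is the sample covariance of $k_n$ i.i.d.\ vectors with identity population covariance with $d_n/k_n\to c$, we are in a generalized spiked model whose population spectral distribution converges to $\delta_1$; by Weyl's inequality the display above transports every spectral conclusion from $\sbSigman\widehat{\bV}_n$ to $d_n\bhSigman$. The empirical spectral distribution then converges to the Mar\v{c}enko--Pastur law $F_c$ with density $f_c$, and a monotonicity (Pólya-type) argument, exactly as in \Cref{lem:Bai_Yaio2}, upgrades the pointwise quantile limits to the uniform statement (b), with right-edge value $(1+\sqrt c)^2$. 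For the $\ps$ leading eigenvalues, which here diverge but satisfy $\slambda{1}=o(d_n^{1/2})$, the diverging-spike counterpart of the result underlying \Cref{lem:Bai_Yaio2} yields $d_n\well_{n,i}/\slambda{i}\Pconv1$ for $i\le\ps$, which is (a); this is consistent with \Cref{lem:Bai_Yaio2} since $\varphi_c(x)/x\to1$ as $x\to\infty$.

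For the averaged statements (c) and (d) I would bypass the reduction and use the exact identity $\tr(\bhSigman)=1-\Vert\widehat{\bTheta}{}^{(n)}\Vert^2$, which holds because $\Vert\bThetan_j\Vert=1$ and gives $\sum_{i=1}^{d_n}d_n\well_{n,i}=d_n\tr(\bhSigman)=d_n(1+o_\P(1))$; moreover $\sum_{i=1}^{q_n}d_n\well_{n,i}=o_\P(d_n)$, since its first $\ps$ summands total $o_\P(d_n^{1/2})$ by (a) and the remaining $q_n-\ps$ summands are each $O_\P(1)$ by (b) with $q_n=o(d_n)$. When $0<c\le1$ the bulk eigenvalues are bounded, $\sum_{i=q_n+1}^{d_n}d_n\well_{n,i}=d_n\tr(\bhSigman)-\sum_{i=1}^{q_n}d_n\well_{n,i}=d_n(1+o_\P(1))$, and dividing by $d_n-q_n=d_n(1+o(1))$ gives (c). When $c>1$, $\bhSigman$ is a sum of $k_n$ rank-one matrices, so $\well_{n,i}=0$ for $i>k_n$ and $\sum_{i=1}^{k_n}d_n\well_{n,i}=d_n\tr(\bhSigman)=d_n(1+o_\P(1))$; dividing the tail $\sum_{i=q_n+1}^{k_n}d_n\well_{n,i}=d_n(1+o_\P(1))$ by $k_n-q_n=k_n(1+o(1))$ and using $d_n/k_n\to c$ gives (d). Finally, for (e) with $0<c<1$ and fixed $i\le\ps$, part (a) gives $d_n\well_{n,i}\Pconv\infty$ since $\slambda{i}\ge\slambda{\ps}\to\infty$, whereas (c) applied with the constant sequence $q_n\equiv i$ gives $\tfrac1{d_n-i}\sum_{j=i+1}^{d_n}d_n\well_{n,j}\Pconv1$, so the quotient diverges.

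The step I expect to be the main obstacle is the reduction itself: controlling the random scalings $r_j$ and the centering $\widehat{\bTheta}{}^{(n)}$ in operator norm \emph{uniformly} over the $k_n$ rank-one terms, so that Weyl's inequality legitimately carries the limiting spectral distribution, the quantile limits, and the spike limits over to $d_n\bhSigman$; and, hand in hand with it, securing the correct diverging-spike version of the generalized-spiked-model eigenvalue theorem under $\slambda{1}=o(d_n^{1/2})$, rather than the bounded-spike statement used for \Cref{lem:Bai_Yaio2}.
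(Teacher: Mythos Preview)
Your reduction to $k_n$ i.i.d.\ copies of $\bThetan$ is correct and matches the paper, as is your argument for (e). Your trace-identity route to (c) and (d) is actually cleaner than the paper's (which integrates the Mar\v cenko--Pastur quantile function), although it still relies on (a) and on the edge statement in (b) to control the first $q_n$ sample eigenvalues.

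The gap is the displayed operator-norm approximation. Splitting the error into centering and scaling, the centering contributes $d_n\widehat{\bTheta}{}^{(n)}\widehat{\bTheta}{}^{(n)\top}$, a rank-one term of norm $d_n\Vert\widehat{\bTheta}{}^{(n)}\Vert^2=O_\P(d_n/k_n)\to c$, which is not $o_\P(1)$ but, being rank one, is harmless for (b) by interlacing. The scaling part is the substantive obstacle: the ingredients you cite give only
\[
\Bigl\Vert\,\sqrtsbSigman\Bigl(\tfrac{1}{k_n}\sum_{j}(d_nr_j^2-1)\,\bVn_j\bVn_j^{\top}\Bigr)\sqrtsbSigman\,\Bigr\Vert\;\le\;\slambda{1}\cdot\max_{j}\bigl\lvert d_nr_j^2-1\bigr\rvert\cdot\bigl\Vert\tfrac{1}{k_n}\wbVn\wbVn^{\top}\bigr\Vert,
\]
and under $\E|V_1|^4<\infty$ the law of large numbers yields $\max_j|d_nr_j^2-1|=o_\P(1)$ with no rate, while $\slambda{1}\to\infty$; the product is $o_\P(\slambda{1})$, not $o_\P(1)$, which is useless for the $O(1)$ bulk eigenvalues in (b) and, when the spikes $\slambda{1},\dots,\slambda{\ps}$ are of different orders, also insufficient for (a). The paper explicitly flags this (``\Cref{th:ev_high_dim_asymptotic} is not useful and an adapted version does not exist'') and replaces the global Weyl comparison by two structural detours: for (a), a direct Courant--Fischer min-max argument sandwiches $d_n\wellprime_{n,i}/\slambda{i}$ between two quantities converging to $1$, working with quadratic forms rather than matrix norms; for (b), one first passes to the $(d_n-\ps)$-dimensional block $\tbWnT(\cdot)\tbWn$ obtained by projecting out the spiked directions, where $\sbSigman$ acts as the identity so that the operator-norm comparison \emph{does} give $o_\P(1)$, and then recovers the full empirical spectral distribution by Cauchy interlacing. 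Your final paragraph correctly locates the obstacle; the resolution, however, is not a sharper uniform estimate on the $r_j$ but a detour around the global spectral-norm comparison altogether.
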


\begin{remark}
    The assumption $\slambda{1} = o(d_n^{1/2})$ as $n\to\infty$ guarantees that the largest eigenvalue grows sufficiently slowly compared to the dimension $d_n$. When all moments of $V_1$ exist this assumption can be relaxed to $\slambda{1} = o(d_n^\beta)$ as $n\to\infty$ for any $\beta < 1$ due to \Cref{rem:order_slambda}.
\end{remark}

\section{Information criteria for the number of principal components in the fixed-dimensional case} \label{sec:fixed_dim}

The aim of the paper is to derive estimators for $\ps$, the location of the spike in the eigenvalues of $\bSigma=\Cov(\Theta)$, which defines the dimensionality of the PCA, by exploiting information criteria. In the context of PCA for Gaussian data, an Akaike information criteria (AIC) and a Bayesian information criteria (BIC) was developed in \citet{fujikoshi:sakurai:2016} and the consistency in the high-dimensional case for general data was analyzed in \citet{Bai:Choi:Fujikoshi:2018}. 
The $\AIC$ (\citet{akaike}) is based on minimizing the Kullback-Leibler divergence between the true distribution and the model, and the $\BIC$ (\citet{schwarz}) maximizes the posterior probability. 
In this paper, we adopt these information criteria and study their statistical properties. We start in this section with the fixed-dimensional case and give the proper definitions of the information criteria under \Cref{asu:fixed_dim}.  The proofs of this section are moved to \Cref{sec:fixed_dim_proof}.

\begin{definition} \label{def:aic} Suppose $\well_{n,1}, \ldots, \well_{n,d-1}$ are the empirical eigenvalues of $\bhSigma_n$ as defined in \Cref{def:sigma_estimator}.
\begin{enumerate}[(a)]
    \item The \textit{Akaike information criterion} ($\AIC$) for the fixed-dimensional case is defined as
\begin{align*} 
    \AIC_{k_n}  (p) \coloneqq &  {k_n} \sum_{i=1}^{p} \log  ( \well_{n,i}  )  + {k_n} (d - 1 - p)  \log \big( \frac{1}{d - 1-p} \sum_{j = p+1}^{d - 1} \well_{n,j} \big) \\
    &   + k_n  \log \Big( \frac{k_n -1}{k_n} \Big)^{d-1} + k_n (d-1) ( \log(2 \pi) + 1) \\
    &+ 2(p+1) (d - p/2 ), \;  
\end{align*}
for $p = 1, \ldots, d-2$ and an estimator for $\ps$ is $\widehat{p}_n \coloneqq \argmin_{1 \le p \le d-2} \AIC_{k_n}(p)$.
   \item The \textit{Bayesian information criterion} ($\BIC$) for the fixed-dimensional case is defined as
\begin{align*}
    \BIC_{k_n}  (p) \coloneqq& \, {k_n} \sum_{i=1}^{p} \log  ( \well_{n,i}  )  + {k_n} (d - 1 - p)  \log \big( \frac{1}{d - 1-p} \sum_{j = p+1}^{d - 1} \well_{n,j} \big) \\
    &  + k_n  \log \left( \frac{k_n -1}{k_n} \right)^{d-1} + k_n (d-1) ( \log(2 \pi) + 1) \\
    &+ \log(k_n) (p+1) (d -  \frac{p}{2}  /2),
\end{align*}
for $p = 1, \ldots, d-2$
and an estimator for $\ps$ is $\widehat{p}_n \coloneqq \argmin_{1 \le p \le d-2} \BIC_{k_n}(p)$.
\end{enumerate}
\end{definition}

\begin{remark}\label{rem:scale_invariant}~
\begin{enumerate}[(a)]
    \item 
     The penalty term $2(p+1) (d - p/2 )=(p+1) d - p (1+p)/2$ arises as it is the number of parameters that define a $(d-1)$-dimensional normal distribution with an arbitrary mean vector and covariance matrix following the $p$-th spiked covariance model  (cf. \citet[Section 2]{fujikoshi:sakurai:2016}). 
    As baseline model, we take a $(d-1)$-dimensional normal distribution instead of a $d$-dimensional distribution because  
    $\bTheta$ is a random vector on the unit sphere and hence the first $(d-1)$ components already determine the last component.  In summary, we use a modified version of the $\AIC$ and the $\BIC$ of \citet{fujikoshi:sakurai:2016} by replacing $d$ with $d-1$ and dropping the last empirical eigenvalue $\well_{n,d_n}$.

   \item The $\AIC$ and $\BIC$ are invariant to scaling of the eigenvalues. Consequently, scaling the sample covariance matrix $\bhSigma_n$, or equivalently the eigenvalues $\well_{n,1}, \ldots, \well_{n,d-1}$, does not affect the point at which the information criteria achieve their minimum.
    \end{enumerate}
\end{remark}

Next, we check the consistency of the $\AIC$ and the $\BIC$. First, we present the result for the AIC and second for the BIC.
 
\begin{theorem} \label{th:AIC_Cons}
Let \Cref{asu:fixed_dim} be given and $\bM$  be the limit vector in \Cref{cor:ev_sqrtn_normality}. Then
\begin{equation*}
    \limn \P( \AIC_{k_n}(p) > \AIC_{k_n}(\ps) ) =\begin{cases}
\P( g_p(\bM) >  0 )   &\text{ for } p > {\ps}, \\
 1  &\text{ for } p < {\ps},
\end{cases}
\end{equation*}
where \begin{align*}
    g_p(\bm) &\coloneqq  - \frac{1}{2}  \sum_{i=\ps+1}^{p}  m_{i}^2 - \frac{1}{2(d - 1-p)} \left(\sum_{j = p+1}^{d - 1}  m_{j} \right)^2    + \frac{1}{2(d - 1-\ps)} \left(\sum_{j = \ps+1}^{d - 1}  m_{j} \right)^2  \nonumber  \\
    & \qquad -  (d-p-2)(d-p+1) +   (d-\ps-2)(d-\ps+1)
\end{align*}
for $\bm = (m_1, \ldots, m_d) \in \R^{d}$.

\end{theorem}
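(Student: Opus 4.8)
The plan is to work directly with the difference $\AIC_{k_n}(p)-\AIC_{k_n}(\ps)$. The two $p$‑free summands $k_n\log((k_n-1)/k_n)^{d-1}$ and $k_n(d-1)(\log(2\pi)+1)$ cancel, and writing $\pi(p):=2(p+1)(d-p/2)$ one verifies the elementary identity $\pi(p)-\pi(q)=-(d-p-2)(d-p+1)+(d-q-2)(d-q+1)$, so that the $p$‑dependent part of the penalty is exactly the deterministic part of $g_p$. Since by Remark~\ref{rem:scale_invariant}(b) the difference $\AIC_{k_n}(p)-\AIC_{k_n}(\ps)$ is unchanged if every $\well_{n,i}$ is divided by $\lambda$, we may and do assume $\lambda=1$, and the limit vector $\bM$ of \Cref{cor:ev_sqrtn_normality}(b) is understood for this normalization. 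The two cases are then treated by very different means.

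\textbf{Underestimation ($p<\ps$).} Here $\sum_{i=1}^{p}\log\well_{n,i}-\sum_{i=1}^{\ps}\log\well_{n,i}=-\sum_{i=p+1}^{\ps}\log\well_{n,i}$, and \Cref{cor:ev_sqrtn_normality}(a) gives $\well_{n,i}\Pconv\lambda_i$ for $i\le d-1$, with $\lambda_{p+1}\ge\cdots\ge\lambda_{\ps}>1=\lambda_{\ps+1}=\cdots=\lambda_{d-1}$. Dividing by $k_n$, the penalty being $O(1)$, one obtains
\begin{equation*}
 k_n^{-1}\bigl(\AIC_{k_n}(p)-\AIC_{k_n}(\ps)\bigr)\Pconv (d-1-p)\log a_p-\sum_{i=p+1}^{\ps}\log\lambda_i,\qquad a_p:=\frac{(d-1-\ps)+\sum_{i=p+1}^{\ps}\lambda_i}{d-1-p}.
\end{equation*}
Because the $d-1-p$ numbers $\lambda_{p+1},\dots,\lambda_{\ps},1,\dots,1$ are not all equal ($\lambda_{\ps}>1$), strict concavity of $\log$ yields $(d-1-p)\log a_p>\sum_{i=p+1}^{\ps}\log\lambda_i$, so the right‑hand limit is strictly positive; hence $\AIC_{k_n}(p)-\AIC_{k_n}(\ps)\Pconv+\infty$ and $\P(\AIC_{k_n}(p)>\AIC_{k_n}(\ps))\to1$.

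\textbf{Overestimation ($p>\ps$).} Now only $\well_{n,\ps+1},\dots,\well_{n,d-1}$ enter, and \Cref{cor:ev_sqrtn_normality}(b) gives $\well_{n,j}=1+k_n^{-1/2}m_{n,j}$ with $(m_{n,\ps+1},\dots,m_{n,d-1})\Dconv\bM$. Substituting this into the three eigenvalue terms
\begin{equation*}
 k_n\sum_{i=\ps+1}^{p}\log\well_{n,i}+k_n(d-1-p)\log\!\Bigl(\tfrac1{d-1-p}\sum_{j=p+1}^{d-1}\well_{n,j}\Bigr)-k_n(d-1-\ps)\log\!\Bigl(\tfrac1{d-1-\ps}\sum_{j=\ps+1}^{d-1}\well_{n,j}\Bigr)
\end{equation*}
and expanding $\log(1+k_n^{-1/2}x)=k_n^{-1/2}x-\tfrac12 k_n^{-1}x^2+O_\P(k_n^{-3/2})$ (legitimate since each argument tends to $1$ and $m_{n,j}=O_\P(1)$), the $O(k_n)$ contributions vanish because they multiply $\log 1=0$, while the $O(\sqrt{k_n})$ contributions cancel by the telescoping $\sum_{\ps+1}^{p}m_{n,i}+\sum_{p+1}^{d-1}m_{n,j}=\sum_{\ps+1}^{d-1}m_{n,j}$. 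What survives, together with the penalty difference $\pi(p)-\pi(\ps)$, is precisely $g_p\bigl((m_{n,\ps+1},\dots,m_{n,d-1})\bigr)+o_\P(1)$; by the continuous mapping theorem and \Cref{cor:ev_sqrtn_normality}(b) this converges in distribution to $g_p(\bM)$, and since $g_p(\bM)$ is a non‑trivial quadratic form in $\bM$ plus a constant it has no atom at $0$, so $\P(\AIC_{k_n}(p)>\AIC_{k_n}(\ps))=\P\bigl(\AIC_{k_n}(p)-\AIC_{k_n}(\ps)>0\bigr)\to\P(g_p(\bM)>0)$.

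\textbf{Main difficulty.} The delicate step is the case $p>\ps$: one must check that, after the second‑order Taylor expansion, the $O(k_n)$ and $O(\sqrt{k_n})$ terms cancel \emph{exactly}, so that the difference is genuinely $O_\P(1)$ and its weak limit is the finite random variable $g_p(\bM)$ rather than $\pm\infty$. This cancellation rests on the telescoping of the linear terms and, crucially, on the penalty identity $\pi(p)-\pi(q)=-(d-p-2)(d-p+1)+(d-q-2)(d-q+1)$; it is exactly this balance that makes the $\AIC$ fail to be consistent from above. By contrast, the case $p<\ps$ is comparatively soft, requiring only consistency of the empirical eigenvalues and strict concavity of the logarithm.
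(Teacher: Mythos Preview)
Your proof is correct and follows essentially the same route as the paper: scale invariance to set $\lambda=1$; for $p<\ps$ divide by $k_n$, use consistency of the $\well_{n,i}$ and the strict AM--GM inequality (equivalently, strict concavity of $\log$); for $p>\ps$ write $\well_{n,j}=1+k_n^{-1/2}m_{n,j}$, Taylor-expand $\log(1+\cdot)$ to second order, observe that the first-order terms telescope, and apply \Cref{cor:ev_sqrtn_normality}(b) with the continuous mapping theorem. The penalty identity $\pi(p)-\pi(q)=-(d-p-2)(d-p+1)+(d-q-2)(d-q+1)$ you state is exactly the one the paper uses.

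One small caveat: your justification that $g_p(\bM)$ has no atom at $0$ (``non-trivial quadratic form in $\bM$ plus a constant'') is not airtight, because $\bM$ need not have a density --- in the example following the theorem one has $M_3=-M_2$, so $\bM$ lives on a line. The paper does not address this point either; it simply asserts the conclusion after establishing $\AIC_{k_n}(p)-\AIC_{k_n}(\ps)\Dconv g_p(\bM)$.
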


\begin{remark}
    Under some technical assumptions on the distribution of $\bTheta$,  it is possible to state a density for $\bM$ (cf. \citet{davis:1977}) and derive that $\P( g_p(\bM) >  0)<1$. For the present paper, it is sufficient to give an example such that the $\AIC$ is not consistent.  
\end{remark}

\begin{example}
 We assume the Directional Model (D), where $\sbSigman:=\bGamma: = \diag(9,4,4,$ $1)$, $\bVn \coloneqq \bV:=(V_{1}, V_{2},V_{3},V_{4}  )^{\top} $ with $V_{i} \sim \mathcal{U}(\{-1, 1  \}),$ $i = 1, \ldots, 4$, $Z \sim \text{Fréchet}(1)$ and the dimension $d = 4$ is fixed. Then, we have $\Vert \sqrtsbSigman \bVn \Vert =\Vert\bGamma\bV \Vert= \sqrt{9 + 4 + 4 +1} = \sqrt{18}$ and $\bThetan:=\bTheta: = (3V_{1},$ $2V_{2},$ $2V_{3},$ $V_{4}  )/\sqrt{18}$. 
 We have $\E[\bTheta] = \mathbf{0}_4,$  $\bSigma = \bGamma/18$, where the eigenvalues of $\bSigma$ are $(1/2,$ $2/9,$ $2/9,$ $1/18)$ and the corresponding eigenvectors are the unit vectors $\be_1, \ldots, \be_4 \in \R^4$. Consequently, the spiked covariance assumption is satisfied with $\lambda = 2/9, d = 4$ and $\ps = 1$. 
 
In the following, we calculate the probability $\P(g_2(\bM) <  0)$ by first determining the asymptotic distribution of $\bM$.
An application of \Cref{cor:ev_sqrtn_normality} (b) yields 
 \begin{equation*}
     \sqrt{k_n} \big( (\well_{n,2},  \well_{n,3}) - (\lambda,  \lambda) \big) \Dconv (M_2,M_3)
 \end{equation*}
 in $\R^{2}$ is the joint distribution of the decreasingly ordered non-null eigenvalues of 
 \begin{equation*}
     \bP_\lambda \bS \bP_\lambda =  \begin{pmatrix}
      0 & 0 & 0 & 0 \\
      0 & 1 & 0 & 0 \\
      0 & 0 & 1 & 0 \\
      0 & 0 & 0 & 0 \\
  \end{pmatrix}
    \begin{pmatrix}
     S_{11}  & S_{12}  & S_{13}  & S_{14} \\
        S_{12}  & S_{22}  & S_{23}  & S_{24} \\
       S_{13}  & S_{23}  & S_{33}  & S_{34} \\
       S_{14}  & S_{24}  & S_{34}  & S_{44} \\
  \end{pmatrix}
    \begin{pmatrix}
       0 & 0 & 0 & 0 \\
       0 & 1 & 0 & 0 \\
       0 & 0 & 1 & 0 \\
       0 & 0 & 0 & 0 \\
  \end{pmatrix}
  = \begin{pmatrix}
      0 & 0 & 0 & 0 \\
       0  & S_{22}  & S_{23}  & 0 \\
       0  & S_{23}  & S_{33}  & 0 \\
        0 & 0 & 0 & 0 \\
  \end{pmatrix},
 \end{equation*}
 where $\ve(\bS)$ follows a centered multivariate normal distribution with covariance $\Cov ( \ve( $ $(\bTheta - \E[\bTheta])(\bTheta - \E[\bTheta])^\top  ) )$ and $\bP_\lambda  \coloneqq ( \be_2, \be_3) \cdot  ( \be_2, \be_3)^\top  \in \R^{4 \times 4}$ is the projection onto the $2$-dimensional eigenspace of the orthonormal eigenvectors $\be_2, \be_3$ corresponding to $\lambda = 2/9$.
Since $\V(S_{22}) = \E[ \Theta_2^4] - (\E[ \Theta_2^2])^2 = 0$ and $\V(S_{33}) = 0$, the distributions of $S_{22}$ and $S_{33}$ are  degenerate with expectation zero. By the symmetry of $\bP_\lambda \bS \bP_\lambda$, the non-null eigenvalues of the matrix $ \bP_\lambda \bS \bP_\lambda$ can be calculated  directly and are given by
\begin{equation*}
    M_2 =  S_{23} \qquad \text{ and }  \qquad  M_3 = -S_{23}. 
\end{equation*}
Next, since  $(d-\ps-2)(d-\ps+1) -  (d-p-2)(d-p+1) = 4$ for $p = 2$ and $\ps = 1$, the inequality $g_2(\bM) <  0$ is equivalent to 
\begin{align*}
   4 < \frac12 M_2^2 + \frac12 M_3^2 - \frac14 (M_2 + M_3)^2 = S_{23}^2 . 
\end{align*}
 Due to the definition of $\bS$, the distribution of $S_{23}$ is Gaussian with expectation zero and $\V(S_{23}) = \E[  \Theta_2^2  \Theta_3^2] = 1 $ so that   $\P(g_2(\bM) <  0)>0$. 

\end{example}

In contrast to the AIC, the $\BIC$ is a weakly consistent information criterion and selects the true dimension $\ps$ with probability converging to $1$ as stated in the next theorem.

\begin{theorem} \label{th:BIC_Cons}
Let \Cref{asu:fixed_dim}  be given. Then
\begin{equation*}
    \limn \P( \BIC_{k_n}(p) > \BIC_{k_n}(\ps) ) = 1 \quad  \text{ for } p \ne {\ps}.
\end{equation*}
\end{theorem}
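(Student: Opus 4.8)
The plan is to split the analysis into the two cases $p>\ps$ and $p<\ps$ and, in each, to show that the difference $\BIC_{k_n}(p)-\BIC_{k_n}(\ps)$ diverges to $+\infty$ in probability. Write
\[
    \BIC_{k_n}(p)-\BIC_{k_n}(\ps)
    = k_n\,D_n(p,\ps) + \log(k_n)\,\bigl[(p+1)(d-\tfrac{p}{2}/2)-(\ps+1)(d-\tfrac{\ps}{2}/2)\bigr],
\]
where $D_n(p,\ps)$ collects the eigenvalue-dependent terms: the common summands $k_n\log((k_n-1)/k_n)^{d-1}+k_n(d-1)(\log 2\pi+1)$ cancel, leaving
\[
    D_n(p,\ps)=\sum_{i=1}^{p}\log\well_{n,i}-\sum_{i=1}^{\ps}\log\well_{n,i}
    +(d-1-p)\log\Bigl(\tfrac{1}{d-1-p}\!\!\sum_{j=p+1}^{d-1}\!\!\well_{n,j}\Bigr)
    -(d-1-\ps)\log\Bigl(\tfrac{1}{d-1-\ps}\!\!\sum_{j=\ps+1}^{d-1}\!\!\well_{n,j}\Bigr).
\]
By Theorem~\ref{cor:ev_sqrtn_normality}(a) we have $\well_{n,i}\to\lambda_i$ in probability for every $i\le d-1$; in particular $\well_{n,i}\to\lambda$ for all $i>\ps$.

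\emph{Case $p>\ps$.} Here the penalty difference is of order $\log(k_n)$ with a strictly positive coefficient (since $p\mapsto (p+1)(d-p^2/4)$ is increasing on the relevant range, or at any rate the coefficient is a positive constant), so it suffices to show $k_n D_n(p,\ps)$ is bounded below by something of smaller order, ideally $o_\P(\log k_n)$. The key point is that $D_n(p,\ps)$ measures a log-likelihood-ratio-type quantity for nested spiked models that differ only among equal population eigenvalues, so $D_n(p,\ps)\to 0$; a second-order (delta-method) expansion using $\sqrt{k_n}(\well_{n,i}-\lambda)=O_\P(1)$ for $i>\ps$ shows $k_n D_n(p,\ps)=O_\P(1)$ — indeed it converges to $-g_p(\bM)$ minus the AIC penalty difference, i.e.\ to a tight random variable, exactly as in the proof of Theorem~\ref{th:AIC_Cons}. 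Since $O_\P(1)+c\log(k_n)\to+\infty$ in probability for $c>0$, we conclude $\P(\BIC_{k_n}(p)>\BIC_{k_n}(\ps))\to 1$.

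\emph{Case $p<\ps$.} Now the penalty difference is $-O(\log k_n)$, so the eigenvalue term $k_n D_n(p,\ps)$ must diverge to $+\infty$ \emph{at a strictly faster rate than $\log k_n$} to dominate; in fact it is of order $k_n$. The mechanism is that dropping a genuine spike $\lambda_{p+1}>\cdots$ down to $\ldots$ forces the pooled ``noise average'' $\frac{1}{d-1-p}\sum_{j=p+1}^{d-1}\well_{n,j}$ to overshoot, and by strict concavity of $\log$ (Jensen, with a quantitative gap because $\lambda_{p+1}>\lambda_{p+2}$ is strict at least at the $\ps$-boundary) one gets $D_n(p,\ps)\to\delta_p$ for some constant $\delta_p>0$ determined by $\lambda_{p+1},\dots,\lambda_{\ps},\lambda$. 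Concretely, using $\well_{n,i}\to\lambda_i$,
\[
    D_n(p,\ps)\;\longrightarrow\;\sum_{i=p+1}^{\ps}\log\lambda_i
    +(d-1-p)\log\!\Bigl(\tfrac{\sum_{i=p+1}^{\ps}\lambda_i+(d-1-\ps)\lambda}{d-1-p}\Bigr)
    -(d-1-\ps)\log\lambda
    \;=:\;\delta_p,
\]
and $\delta_p>0$ follows from strict concavity of $\log$ since the eigenvalues $\lambda_{p+1},\dots,\lambda_{\ps},\lambda,\dots,\lambda$ are not all equal (as $\lambda_{\ps}>\lambda$). Hence $k_n D_n(p,\ps)=k_n\delta_p+o_\P(k_n)\to+\infty$, which overwhelms the $-O(\log k_n)$ penalty, giving $\P(\BIC_{k_n}(p)>\BIC_{k_n}(\ps))\to 1$.

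\emph{Main obstacle.} The delicate part is the $p>\ps$ case: there the eigenvalue term does \emph{not} grow, so one needs the sharp statement that $k_nD_n(p,\ps)=O_\P(1)$ rather than merely $o_\P(k_n)$, and this requires the joint fluctuation result of Theorem~\ref{cor:ev_sqrtn_normality}(b) together with a careful second-order Taylor expansion of the $\log$ and of $(d-1-p)\log(\frac{1}{d-1-p}\sum\well_{n,j})$ around $\lambda$, tracking that the first-order terms cancel between the $p$- and $\ps$-blocks and only the quadratic remainder (whose scale is $k_n\cdot(1/k_n)=O_\P(1)$) survives. Once that expansion is in place — it is essentially the same computation already carried out for $g_p$ in Theorem~\ref{th:AIC_Cons} — the comparison with the $\log k_n$ penalty is immediate. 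The $p<\ps$ case is comparatively routine, resting only on the consistency in Theorem~\ref{cor:ev_sqrtn_normality}(a) and strict concavity of the logarithm.
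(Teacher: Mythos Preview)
Your proposal is correct and follows essentially the same route as the paper: for $p>\ps$ you reuse the second-order Taylor expansion from the $\AIC$ proof to get $k_n D_n(p,\ps)=O_\P(1)$ and let the positive $\log(k_n)$ penalty dominate, and for $p<\ps$ you use consistency of the empirical eigenvalues together with the strict AM--GM inequality to show $D_n(p,\ps)\to\delta_p>0$, so that $k_n\delta_p$ swamps the $O(\log k_n)$ penalty. One small slip: in your displayed limit for $\delta_p$ the first sum should carry a minus sign (it is $-\sum_{i=p+1}^{\ps}\log\lambda_i$, coming from $\sum_{i=1}^p-\sum_{i=1}^{\ps}$), which is exactly what makes the concavity argument you invoke go through.
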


The non-consistency of the AIC and the consistency of the BIC are typical for these information criteria in the fixed-dimensional case. In the high-dimensional case, the asymptotic properties, derived in the next section, differ.

\section{Information criteria for the number of principal components in the high-dimensional case}  \label{sec:high_dim}

The topic in this section is information criteria in the high-dimensional case of \Cref{asu:high_dim}, where $d = d_n$ depends on $n$ and $d_n / k_n \rightarrow  c > 0$ as $\ninf$. For the definition of the information criteria and the asymptotic properties, we need to differentiate between the cases $c < 1$ and $c > 1$. The reason behind it is that if $d_n > k_n$, the last  $d_n - k_n$ empirical eigenvalues of $\bhSigman$ are equal to zero, i.e. $\well_{n,k_n+1}= \cdots= \well_{n,d_n} = 0$. Therefore, in \Cref{sec:high_dim_d_klein}, we analyze the information criteria for $0 < c < 1$ and in \Cref{sec:high_dim_d_groß} for $c > 1$. 
The proofs of this section are provided in \Cref{{sec:proofs:C}}.

\subsection{Information criteria for  \texorpdfstring{$0 < c < 1$}{0 < c < 1}} \label{sec:high_dim_d_klein}
In the case $0 < c < 1$, the definition of the information criteria are similar to the fixed-dimensional setting but we would like to point out that in the high-dimensional setting, we do not necessarily evaluate the information criteria at all possible values $1, \ldots, d_n-1$ but rather restrict to $1, \ldots, q_n$ with $q_n \le d_n$. The number $q_n$ is called the number of candidate dimensions. 

\begin{definition} \label{def:aic_high_dim_c_klein} Suppose $\well_{n,1}, \ldots, \well_{n,d_n-1}$ are the empirical eigenvalues of $\bhSigman$ as defined in \Cref{def:sigma_estimator_hd} and let $q_n \le d_n-2$.
\begin{enumerate}[(a)]
    \item The \textit{Akaike information criterion} ($\QAIC$) for the high-dimensional case with $d_n < k_n$ is defined as
\begin{align*}
      \QAIC_{k_n}  (p) \coloneqq&  \, \sum_{i=1}^{p} \log  ( \well_{n,i}  )  +  (d_n - 1 - p)  \log \big( \frac{1}{d_n - 1-p} \sum_{j = p+1}^{d_n - 1} \well_{n,j} \big) \\
    &  + \log \left( \frac{k_n -1}{k_n} \right)^{d_n-1} +  (d_n-1) ( \log(2 \pi) + 1)+  \frac{ (p+1) (2d_n - p)}{k_n} ,
\end{align*}
for $p = 1, \ldots,d_n -2$ and an estimator for $\ps$ is $\widehat{p}_n \coloneqq \argmin_{1 \le p \le q_n} \QAIC_{k_n}(p)$.
\item The \textit{Bayesian information criterion} ($\QBIC$) for the high-dimensional case with $d_n < k_n$ is  defined as
\begin{align*}
          \QBIC_{k_n}  (p) \coloneqq&  \, \sum_{i=1}^{p} \log  ( \well_{n,i}  )  +  (d_n - 1 - p)  \log \big( \frac{1}{d_n - 1-p} \sum_{j = p+1}^{d_n - 1} \well_{n,j} \big) \\
    &  + \log \left( \frac{k_n -1}{k_n} \right)^{d_n-1} +  (d_n-1) ( \log(2 \pi) + 1) \\
    &  +\log(k_n) \frac{(p+1) (d_n -  p/2)}{k_n} ,
\end{align*}
for $p = 1, \ldots, d_n -2$ and an estimator for $\ps$ is $\widehat{p}_n \coloneqq \argmin_{1 \le p \le q_n} \QBIC_{k_n}(p)$.
\end{enumerate}
\end{definition}

In the next theorem, we present sufficient assumptions for the $\QAIC$ to be weakly consistent, i.e.,
\begin{align*}
            \limn  \P\Big(   \argmin_{1 \le p < q_n }  \QAIC_{k_n}(p)  = \ps \Big) = 1
        \end{align*}
and afterwards for the $\QBIC$.
 
\begin{theorem} \label{th:cons_aic_high_dim}  Let \Cref{asu:high_dim}  with $0 < c < 1$ be given and let the number $q_n$ of candidate dimensions  satisfy $q_n = o(d_n)$ as $n\to\infty$. 
    \begin{enumerate}[(a)]
        \item Suppose $\bGamman \rightarrow \bGamma$ and $(\slambda{1}, \ldots, \slambda{\ps}) \rightarrow (\slambdaohnen_1, \ldots, \slambdaohnen_{\ps})$ as $\ninf$ with $\slambdaohnen_{\ps} > 1 + \sqrt{c}$. If the gap condition 
        \begin{align}
    \varphi_c( \slambdaohnen_{\ps}) - 1 - \log( \varphi_c( \slambdaohnen_{\ps})) - 2 c > 0 \label{cond:gap}
\end{align}
with $\varphi_c$ as defined  in \Cref{eq:def_phi} holds, then the $\QAIC$ is weakly consistent.
        \item Suppose $\bGamman \rightarrow \bGamma$ and $(\slambda{1}, \ldots, \slambda{\ps}) \rightarrow (\slambdaohnen_1, \ldots, \slambdaohnen_{\ps})$ as $\ninf$ with $\slambdaohnen_{\ps} > 1 + \sqrt{c}$. If the gap condition \Cref{cond:gap} does not hold,  then
        \begin{align*}
            \limn \P\Big(   \min_{1 \le p < {\ps} }  \Big\{ \QAIC_{k_n}(p) - \QAIC_{k_n}(\ps) \Big\} > 0\Big) < 1 
         \end{align*}
         and the $\QAIC$ is not weakly consistent.
       \item Suppose $\slambda{\ps}  \rightarrow \infty$ and $\slambda{1} = o(d_n^{1/2})$ as $\ninf$. Then the $\QAIC$ is weakly consistent.
    \end{enumerate}
    
\end{theorem}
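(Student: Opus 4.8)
The plan is to analyze the increments $\QAIC_{k_n}(p) - \QAIC_{k_n}(\ps)$ for $p \ne \ps$ (with $1 \le p \le q_n$) and show that under the stated hypotheses each increment is bounded below in probability by a strictly positive quantity, uniformly over $p$, which forces the minimizer to equal $\ps$ with probability tending to $1$. I would split the analysis into two regimes: overestimation ($\ps < p \le q_n$) and underestimation ($1 \le p < \ps$). In both cases the key tool is the asymptotics of the empirical eigenvalues from \Cref{lem:Bai_Yaio2} (for parts (a),(b)) and from \Cref{th:Bai_Lemma2_2_lambda} (for part (c)): after multiplying by $d_n$ the spiked empirical eigenvalues $d_n\well_{n,i}$, $i\le \ps$, converge to $\varphi_c(\slambdaohnen_i)$ (resp. behave like $\slambda{i}$), while the bulk eigenvalues $d_n\well_{n,i}$, $i>\ps$, have Cesàro average converging to $1$ (since $0<c<1$, part (c) of \Cref{lem:Bai_Yaio2} applies with $q_n=o(d_n)$), and $d_n\well_{n,q_n}\Pconv(1+\sqrt c)^2$. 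Because the $\QAIC$ is scale-invariant in the eigenvalues (\Cref{rem:scale_invariant}(b)), I may freely replace $\well_{n,i}$ by $d_n\well_{n,i}$ throughout.

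For the \textbf{overestimation case} $p>\ps$, I would write
\begin{align*}
\QAIC_{k_n}(p) - \QAIC_{k_n}(\ps)
&= \sum_{i=\ps+1}^{p}\log(d_n\well_{n,i}) \\
&\quad + (d_n-1-p)\log\Big(\tfrac{1}{d_n-1-p}\textstyle\sum_{j=p+1}^{d_n-1} d_n\well_{n,j}\Big) \\
&\quad - (d_n-1-\ps)\log\Big(\tfrac{1}{d_n-1-\ps}\textstyle\sum_{j=\ps+1}^{d_n-1} d_n\well_{n,j}\Big) \\
&\quad + \tfrac{(p+1)(2d_n-p)-(\ps+1)(2d_n-\ps)}{k_n}.
\end{align*}
The penalty difference, after dividing by the leading scale, behaves like $2(p-\ps)d_n/k_n \to 2(p-\ps)c$; meanwhile the log-terms, using that the pooled bulk average is $1+o_\P(1)$ and each $\log(d_n\well_{n,i})$ for $i>\ps$ lies between $\log((1-\sqrt c)^2)$ and $\log((1+\sqrt c)^2)$, combine — via a second-order (Taylor) expansion of $x\log x$ around the pooled mean — into a term of the same order $(p-\ps)d_n$ whose normalized limit is bounded. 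The upshot is that the increment, divided by $(p-\ps)$, is asymptotically at least $2c - \log((1+\sqrt c)^2) + (1+\sqrt c)^2 - 1 + o_\P(1)>0$ (one checks this elementary inequality for $0<c<1$), uniformly in $p$; hence overestimation is ruled out. This part uses no gap condition and is identical in structure for (a), (b), (c).

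For the \textbf{underestimation case} $p<\ps$, the decisive contribution comes from dropping the genuinely spiked eigenvalues $\well_{n,p+1},\ldots,\well_{n,\ps}$ out of the first sum and into the pooled average. By Jensen/concavity of $\log$, moving a large value $d_n\well_{n,i}\approx\varphi_c(\slambdaohnen_i)$ (which, since $\slambdaohnen_{\ps}>1+\sqrt c$, exceeds $1$ strictly) from the $\sum\log$ block into the log-of-average block strictly decreases the likelihood part, and the per-index deficit converges to $\varphi_c(\slambdaohnen_i) - 1 - \log\varphi_c(\slambdaohnen_i)$ (the bulk contributing mean $1$ in the denominator); the penalty moves the other way by $\approx 2d_n/k_n\to 2c$ per index. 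Thus the normalized increment for each $i\in\{p+1,\ldots,\ps\}$ tends to $\varphi_c(\slambdaohnen_i)-1-\log\varphi_c(\slambdaohnen_i)-2c \ge \varphi_c(\slambdaohnen_{\ps})-1-\log\varphi_c(\slambdaohnen_{\ps})-2c$, since $x\mapsto x-1-\log x$ is increasing for $x>1$ and $\varphi_c$ is increasing on $(1+\sqrt c,\infty)$. This is exactly the gap condition \eqref{cond:gap}: if it holds, the increment is asymptotically positive uniformly in $p<\ps$, giving (a); if it fails (with strict reverse inequality, which one arranges by taking $p=\ps-1$ and noting the limit is the negative of a positive quantity), then $\P(\QAIC_{k_n}(\ps-1)-\QAIC_{k_n}(\ps)>0)$ does not tend to $1$, giving (b). For (c), under $\slambda{\ps}\to\infty$ we instead invoke \Cref{th:Bai_Lemma2_2_lambda}\eqref{th:Bai_Lemma2_2_lambda_d}: the ratio $d_n\well_{n,i}/\big(\tfrac{1}{d_n-i}\sum_{j>i}d_n\well_{n,j}\big)\Pconv\infty$ for $i\le\ps$, so $\log(d_n\well_{n,i})$ dominates every other term (including the $O(d_n/k_n)=O(1)$ penalty) and the underestimation increment diverges to $+\infty$; combined with the overestimation bound this yields weak consistency.

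The \textbf{main obstacle} is the overestimation estimate: one must control the difference of the two ``log-of-average'' blocks across $p$ and $\ps$ \emph{uniformly} in $p\in\{\ps+1,\ldots,q_n\}$, since naively each block is of order $d_n$ while their difference should be of order $(p-\ps)$ only after cancellation. I would handle this by writing the difference as a telescoping sum over single-index moves and expanding $x\mapsto x\log x$ to second order around the running pooled mean $\bar\ell_n := \tfrac{1}{d_n-1-\ps}\sum_{j=\ps+1}^{d_n-1}d_n\well_{n,j}$, using that $\bar\ell_n\Pconv 1$, that the increments $d_n\well_{n,j}$ stay in the compact interval $[(1-\sqrt c)^2,(1+\sqrt c)^2]$ up to $o_\P(1)$ edge effects, and that $q_n=o(d_n)$ makes the number of moved indices negligible relative to the denominators; a uniform second-order Taylor remainder bound then shows the normalized increment converges, uniformly in $p$, to $(p-\ps)$ times a constant that one checks is $>0$ for $0<c<1$. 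The extreme-value layer — replacing $\bXn$ by its directional component and passing to $\bhSigman$ — is already absorbed into \Cref{lem:Bai_Yaio2} and \Cref{th:Bai_Lemma2_2_lambda}, so no further work is needed there.
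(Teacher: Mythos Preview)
Your approach is in substance the direct argument that \citet{Bai:Choi:Fujikoshi:2018} carry out in their Theorem~3.1, whereas the paper proves \Cref{th:cons_aic_high_dim} by a three-line reduction: it uses the scale invariance of the criterion to replace $\well_{n,i}$ by $d_n\well_{n,i}$, notes that by \Cref{lem:Bai_Yaio2} (for (a),(b)) and \Cref{th:Bai_Lemma2_2_lambda} (for (c)) the scaled empirical eigenvalues satisfy exactly the same asymptotics as the sample eigenvalues in \citet{Bai:Choi:Fujikoshi:2018} (with convergence in probability replacing almost-sure convergence), and then invokes their Theorem~3.1 verbatim. Your self-contained route has pedagogical value but is considerably longer; the paper's route is short but opaque unless one has the cited argument at hand.

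Your sketch has the right architecture --- overestimation handled uniformly without any gap assumption, underestimation producing precisely the gap quantity --- but two computations are off. In the overestimation step, the per-index limit of $\QAIC_{k_n}(p)-\QAIC_{k_n}(p-1)$ for $\ps<p\le q_n$ is $c-2\sqrt{c}+2\log(1+\sqrt{c})$, not the expression you wrote: expanding $(d_n-1-p)\log\bar\ell_p-(d_n-p)\log\bar\ell_{p-1}$ around $\bar\ell_{p-1}\to 1$ gives the contribution $1-(1+\sqrt c)^2$, which together with $\log(d_n\well_{n,p})\to 2\log(1+\sqrt c)$ and the penalty $2c$ yields the correct constant (still positive on $(0,1)$ since its derivative in $\sqrt c$ is $2c/(1+\sqrt c)$). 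In part~(c), what drives $\QAIC_{k_n}(p)-\QAIC_{k_n}(\ps)$ to $+\infty$ for $p<\ps$ is not $\log(d_n\well_{n,i})$ but the linear term from the log-of-average expansion, which contributes essentially $d_n\well_{n,i}/\bar\ell_{\ps}\sim d_n\well_{n,i}\to\infty$; the logarithm is of lower order. Neither slip breaks the conclusion, but both should be corrected if you keep the direct argument.
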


\begin{remark}~
    \begin{enumerate}[(a)]
        \item The division of $\QAIC$ by $k_n$ in contrast to the AIC has no influence in applications, as it does not affect the location of the minimum of the information criteria for a fixed sample size $n$. As a result, in the simulation study, the minima of $\AIC$ and $\QAIC$ coincide, and we do not need to distinguish between these criteria. The  division by $k_n$ in the definition of $\QAIC$, as  in \citet{Bai:Choi:Fujikoshi:2018}, ensures that the limit of the information criteria exists.
        \item The gap condition \Cref{cond:gap} was introduced in \citet{Bai:Choi:Fujikoshi:2018} and it also guarantees that the gap between $\slambdaohnen_{\ps}$ and the non-dominant eigenvalues is sufficiently large.
    \end{enumerate}
\end{remark}

In the following theorem, consistency criteria for the $\QBIC$ are stated, which are slightly different from the results for the $\QAIC$.

\begin{theorem} \label{th:cons_bic_high_dim}
Let \Cref{asu:high_dim} with $0 < c < 1$ be given. Suppose that either $$\bGamman \rightarrow \bGamma \text{ such that } (\slambda{1}, \ldots, \slambda{\ps}) \rightarrow (\slambdaohnen_1, \ldots, \slambdaohnen_{\ps})  \text{ as } n\to \infty \text{ with }\slambdaohnen_{\ps} > 1 + \sqrt{c},$$ 
or $$\slambda{\ps} \rightarrow \infty \quad \text{ and } \quad \slambda{1} = o(d_n^{1/2}) \quad\text{ as } n\to\infty.$$ 
    \begin{enumerate}[(a)]
        \item If $\slambda{\ps} / \log(d_n) \rightarrow 0$ as $\ninf$, then 
        \begin{align*} 
            \limn \P\Big(   \min_{1 \le p < {\ps} }  \Big\{ \QBIC_{k_n}(p) - \QBIC_{k_n}(\ps) \Big\} > 0\Big) < 1  
         \end{align*}
         and the $\QBIC$ is not weakly consistent.
        \item If $\slambda{\ps} / \log(d_n) \rightarrow \infty$ as $\ninf$, then the $\QBIC$ is weakly consistent.
    \end{enumerate}
\end{theorem}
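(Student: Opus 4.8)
The plan is to control the pairwise differences $D_n(p)\coloneqq\QBIC_{k_n}(p)-\QBIC_{k_n}(\ps)$ for $p\in\{1,\dots,q_n\}$: weak consistency of the $\QBIC$ is equivalent to $\P(D_n(p)>0\text{ for every }p\ne\ps)\to1$, while for (a) it suffices to exhibit one $p<\ps$, namely $p=\ps-1$, with $\P(D_n(\ps-1)>0)\to0$, which forces $\P(\min_{1\le p<\ps}D_n(p)>0)\to0<1$. The first simplification is to drop the two $p$-free summands of the $\QBIC$ and to write $\log\well_{n,i}=\log(d_n\well_{n,i})-\log d_n$ throughout; the extra term $-(d_n-1)\log d_n$ produced this way is again $p$-free, so $D_n(p)$ equals $L_n(p)-L_n(\ps)$ plus the penalty difference, where $L_n(p)\coloneqq\sum_{i=1}^{p}\log(d_n\well_{n,i})+(d_n-1-p)\log\!\big(\tfrac{1}{d_n-1-p}\sum_{j=p+1}^{d_n-1}d_n\well_{n,j}\big)$ and the penalty difference equals $\tfrac{\log k_n}{k_n}(p-\ps)(d_n-\tfrac{p+\ps+1}{2})=(p-\ps)\tfrac{d_n}{k_n}\log k_n(1+o(1))$ for $p\le q_n=o(d_n)$. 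Everything is then expressed through the \emph{rescaled} eigenvalues $d_n\well_{n,i}$, whose asymptotics are supplied by \Cref{lem:Bai_Yaio2} and \Cref{th:Bai_Lemma2_2_lambda}.

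For the overestimation range $\ps<p\le q_n$ I would use that, by \Cref{lem:Bai_Yaio2}(b) / \Cref{th:Bai_Lemma2_2_lambda}(b), $d_n\well_{n,i}\Pconv(1+\sqrt c)^2$ uniformly in $\ps<i\le q_n$ (the quantile $F_c^\leftarrow(1-i/d_n)$ approaches the right edge $(1+\sqrt c)^2$ since $i/d_n\to0$), and by \Cref{lem:Bai_Yaio2}(c) / \Cref{th:Bai_Lemma2_2_lambda}(c), $\tfrac{1}{d_n-1-p}\sum_{j=p+1}^{d_n-1}d_n\well_{n,j}\Pconv1$ uniformly in $p\le q_n$. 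Expanding the two pooled-average logarithms to first order — the $\log d_n$-contributions cancel against the ones from $\sum_i\log(d_n\well_{n,i})$, and the quadratic remainders are $o_\P(p-\ps)$ because $q_n/d_n\to0$ — one gets $L_n(p)-L_n(\ps)=(p-\ps)\big(2\log(1+\sqrt c)+1-(1+\sqrt c)^2\big)+o_\P(p-\ps)$ uniformly, hence $D_n(p)=(p-\ps)\big(\tfrac{d_n}{k_n}\log k_n(1+o_\P(1))+O_\P(1)\big)$, which is positive for all $\ps<p\le q_n$ simultaneously with probability tending to $1$ because $\tfrac{d_n}{k_n}\log k_n\to\infty$. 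This step uses only $\slambdaohnen_{\ps}>1+\sqrt c$ (resp. $\slambda{\ps}\to\infty$ with $\slambda{1}=o(d_n^{1/2})$) and is therefore valid under both hypotheses; for a candidate number $q_n$ not of order $o(d_n)$ the same conclusion follows from the Mar\v{c}enko--Pastur description of the bulk, the penalty still dominating the $O(d_n)$-sized likelihood part.

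The decisive range is $1\le p<\ps$, which contains only finitely many values since $\ps$ is fixed. Here the pooled average in $L_n(p)$ still carries the spiked rescaled eigenvalues $d_n\well_{n,p+1},\dots,d_n\well_{n,\ps}$. Writing $B_p\coloneqq\sum_{j=p+1}^{\ps}d_n\well_{n,j}$ and using $\tfrac{1}{d_n-1-\ps}\sum_{j=\ps+1}^{d_n-1}d_n\well_{n,j}\Pconv1$ together with $B_p=o_\P(d_n)$ (which holds in both the bounded and the unbounded case, the latter thanks to $\slambda{1}=o(d_n^{1/2})$), a Taylor expansion of the pooled-average logarithms should collapse $L_n(p)-L_n(\ps)$ to $\sum_{i=p+1}^{\ps}\big(d_n\well_{n,i}-\log(d_n\well_{n,i})-1\big)+o_\P(1)$, in which the nonnegative convex function $x\mapsto x-\log x-1$ appears (it vanishes only at $x=1$ and is of size $x$ as $x\to\infty$). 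Since $\log k_n=\log d_n+O(1)$, the penalty difference is $-(\ps-p)c\log d_n(1+o(1))$. Under (a) I take $p=\ps-1$: by \Cref{lem:Bai_Yaio2}(a) or \Cref{th:Bai_Lemma2_2_lambda}(a), $d_n\well_{n,\ps}\Pconv\varphi_c(\slambdaohnen_{\ps})$ (see \Cref{eq:def_phi}) or $d_n\well_{n,\ps}=\slambda{\ps}(1+o_\P(1))$, so $d_n\well_{n,\ps}-\log(d_n\well_{n,\ps})-1=o_\P(\log d_n)$ because $\slambda{\ps}/\log d_n\to0$, whereas the penalty is of exact order $\log d_n$; hence $D_n(\ps-1)\to-\infty$ in probability and the $\QBIC$ is not consistent. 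Under (b) necessarily $\slambda{\ps}\to\infty$ with $\slambda{\ps}/\log d_n\to\infty$, so for every $p<\ps$ each summand $d_n\well_{n,i}-\log(d_n\well_{n,i})-1$, $p<i\le\ps$, is at least $d_n\well_{n,\ps}-\log(d_n\well_{n,\ps})-1=\slambda{\ps}(1+o_\P(1))$, which swamps the penalty $c\log d_n$; thus $D_n(p)\to\infty$ in probability for all $p<\ps$, and combined with the overestimation bound this yields weak consistency.

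The step I expect to be the main obstacle is the uniform treatment of the pooled-average log terms $(d_n-1-p)\log\!\big(\tfrac{1}{d_n-1-p}\sum_{j=p+1}^{d_n-1}d_n\well_{n,j}\big)$, which are of the indeterminate form ``$d_n$ times $\log(1+\text{small})$'': the whole argument rests on the fact that, after the leading pieces of order $d_n\log d_n$ cancel in $L_n(p)-L_n(\ps)$, what survives is exactly the clean expression above up to an error that is negligible relative to the penalty — and this uniformly over all candidate $p$. Making this rigorous requires not just the pointwise limits but the uniform edge and bulk eigenvalue asymptotics of \Cref{lem:Bai_Yaio2} and \Cref{th:Bai_Lemma2_2_lambda}, plus bookkeeping of the quadratic Taylor remainders controlled through $q_n=o(d_n)$ and $\slambda{1}=o(d_n^{1/2})$. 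By contrast, with $\ps$ fixed the underestimation part is essentially a finite computation, and the dichotomy in the theorem is simply the comparison of the signal of size $\slambda{\ps}$ with the penalty of size $\log d_n$.
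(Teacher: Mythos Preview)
Your proposal is correct and follows essentially the same route as the paper. The paper's own proof consists of one sentence: by scale invariance of the $\QBIC$, the fact that $\log(d_n)/\log(k_n)\to 1$, and the eigenvalue asymptotics of \Cref{lem:Bai_Yaio2} and \Cref{th:Bai_Lemma2_2_lambda}, the argument of Bai, Choi and Fujikoshi (2018, Theorem~3.2) for the ordinary sample eigenvalues $\hslambda_{n,i}$ carries over verbatim to the rescaled eigenvalues $d_n\well_{n,i}$. What you have written is precisely that Bai--Choi--Fujikoshi computation spelled out in detail --- the rescaling by $d_n$, the split into over- and underestimation, the reduction of $L_n(p)-L_n(\ps)$ to $\sum_{i=p+1}^{\ps}\big(d_n\well_{n,i}-\log(d_n\well_{n,i})-1\big)$ for $p<\ps$, and the comparison of $\slambda{\ps}$ with the penalty scale $\log d_n$ --- and your identification of the ``$d_n\times\log(1+\text{small})$'' cancellation as the delicate point, controlled via $B_p=o_\P(d_n^{1/2})$ (which holds in both the bounded and the $\slambda{1}=o(d_n^{1/2})$ regimes), is exactly the bookkeeping that makes the cited proof go through here.
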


\begin{remark}~
    \begin{enumerate}[(a)]
   
        \item When the gap condition is fulfilled,  the $\QAIC$ is weakly consistent whereas the consistency of the $\QBIC$ depends on the properties of $\slambda{\ps}$.  The $\QBIC$ 
        and, if the gap condition is violated, the $\QAIC$, tends to underestimate the number of significant principal components. A similar result was also obtained by \citet{MVT_regression_high_dim} for multivariate linear regressions in high dimensions. 
        \item The consistency of the $\QAIC$ and $\QBIC$ in the high-dimensional case is opposite to the fixed-dimensional case. Specifically, while the $\AIC$ may not be consistent and the $\BIC$ is consistent in the fixed-dimensional setting, the opposite behavior is observed in the high-dimensional setting. Moreover, in \Cref{th:cons_aic_high_dim} (b), we have
      \begin{align*}
            \limn \P\Big(   \min_{1 \le p < {\ps} }  \Big\{ \QAIC_{k_n}(p) - \QAIC_{k_n}(\ps) \Big\} > 0\Big) < 1 , 
         \end{align*}
        which is opposite to the fixed-dimensional case, where the $\AIC$ tends to overestimate rather than underestimate the number of principal components.  
        \item  The case  $ c = 1$ is excluded from the consideration due to potential complications with the asymptotic behavior of the eigenvalues (see \citet[Section 4]{Bai:Choi:Fujikoshi:2018}). While \Cref{lem:Bai_Yaio2} and \Cref{th:Bai_Lemma2_2_lambda} are valid for $c = 1$, issues arise with the convergence of ratios of quantiles of the Mar\v{c}enko-Pastur law in \citet[Lemma 2.3]{Bai:Choi:Fujikoshi:2018}) when $ q_n = o(d_n) $ is not assumed. If   $q_n = o(d_n)$  is assumed, then the results for $ 0 < c < 1 $ also apply to $ c = 1 $. Additionally, the support of the Mar\v{c}enko-Pastur law for $ c = 1 $ is given by the interval $ (0, 4) $, which can lead to empirical eigenvalues close to zero, causing numerical problems when calculating the logarithm of the empirical eigenvalues.
        \item If $\limn \slambda{\ps} / \log(d_n) \in (0,\infty)$ further assumptions are needed to assess the consistency of the $\QBIC$.
    \end{enumerate}
\end{remark}

\subsection{Information criteria for  \texorpdfstring{$c > 1$}{c > 1}} \label{sec:high_dim_d_groß}
For the case $c > 1$ we have to adapt the information criteria. Therefore, we follow the definition of the $\AIC$ and the $\BIC$ in \citet{Bai:Choi:Fujikoshi:2018}, which leads to the following definition.

\begin{definition} \label{def:aic_high_dim} Suppose $\well_{n,1}, \ldots, \well_{n,d_n-1}$ are the empirical eigenvalues of $\bhSigman$ as defined in \Cref{def:sigma_estimator_hd} and let $q_n \le k_n-2$.
\begin{enumerate}[(a)]
    \item The \textit{Akaike information criterion} ($\QAICstar$) for the high-dimensional case with $d_n > k_n$ is defined as
\begin{align*}
        \QAICstar_{k_n}  (p) \coloneqq&  \, \sum_{i=1}^{p} \log  ( \well_{n,i}  )  +  (k_n - 1 - p)  \log \big( \frac{1}{k_n - 1-p} \sum_{j = p+1}^{k_n - 1} \well_{n,j} \big) \\
    & + \log \left( \frac{d_n -1}{d_n} \right)^{k_n-1} +  (k_n-1) ( \log(2 \pi) + 1)+ \frac{ (p+1)(2 k_n - p)}{d_n} ,
\end{align*}
for $p = 1, \ldots,k_n -2$ and an estimator for $\ps$ is $\widehat{p}_n \coloneqq \argmin_{1 \le p \le q_n} \QAICstar_{k_n}(p)$.
\item The \textit{Bayesian information criterion} ($\QBICstar$) for the high-dimensional case with $d_n > k_n$ is  defined as
\begin{align*}
    \QBICstar_{k_n}  (p) \coloneqq&   \sum_{i=1}^{p} \log  ( \well_{n,i}  )  +  (k_n - 1 - p)  \log \big( \frac{1}{k_n - 1-p} \sum_{j = p+1}^{k_n - 1} \well_{n,j} \big) \\
    &   + \log \left( \frac{d_n -1}{d_n} \right)^{k_n-1} +  (k_n-1) ( \log(2 \pi) + 1)  \\
    &+ \log(d_n) \frac{(p+1) (k_n -  p/2)}{d_n},
\end{align*}
for $p = 1, \ldots, k_n -2$ and an estimator for $\ps$ is $\widehat{p}_n \coloneqq \argmin_{1 \le p \le q_n} \QBICstar_{k_n}(p)$.
\end{enumerate}
\end{definition}

For the consistency analysis of the $\QAICstar$ and $\QBICstar$ we use the same definition for weakly consistent as for the $\QAIC$ in \Cref{sec:high_dim_d_klein}.

\begin{theorem}  \label{th:cons_high_dim_greater_1}
    Let \Cref{asu:high_dim} with $c > 1$ be given and let the number $q_n$ of candidate dimensions satisfy $q_n = o(d_n)$ as $n\to\infty$. 
    \begin{enumerate}[(a)]
        \item Suppose $\bGamman \rightarrow \bGamma$ and $(\slambda{1}, \ldots, \slambda{\ps}) \rightarrow (\slambdaohnen_1, \ldots, \slambdaohnen_{\ps})$ as $\ninf$ with $\slambdaohnen_{\ps} > 1 + \sqrt{c}$. If the modified gap condition \begin{align}
    \frac{\varphi_c( \slambda{\ps})}{c} - 1 - \log \left(  \frac{\varphi_c( \slambda{\ps})}{c} \right) - \frac{2}{c} > 0 \label{cond:gap_high_dim}
\end{align} with $\varphi_c$ as defined  in \Cref{eq:def_phi} holds, then the $\QAICstar$ is weakly consistent.
        \item Suppose $\bGamman \rightarrow \bGamma$ and $(\slambda{1}, \ldots, \slambda{\ps}) \rightarrow (\slambdaohnen_1, \ldots, \slambdaohnen_{\ps})$ as $\ninf$ with $\slambdaohnen_{\ps} > 1 + \sqrt{c}$. If the modified gap condition \Cref{cond:gap_high_dim} does not hold,  then the $\QAICstar$ is not weakly consistent.
       \item Suppose that $\slambda{\ps}  \rightarrow \infty$ and $\slambda{1} = o(d_n^{1/2})$ as $\ninf$. Then the $\QAICstar$ is weakly consistent.
    \end{enumerate}
\end{theorem}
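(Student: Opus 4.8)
The plan is to run the argument behind the proof of \Cref{th:cons_aic_high_dim}, with the roles of $d_n$ and $k_n$ interchanged wherever \Cref{def:aic_high_dim} differs from the definition of $\QAIC$, and with the $c>1$ parts of \Cref{lem:Bai_Yaio2} and \Cref{th:Bai_Lemma2_2_lambda} replacing the $0<c<1$ parts. First note that a common rescaling $\well_{n,i}\mapsto a\well_{n,i}$ shifts the first two summands of $\QAICstar_{k_n}(p)$ by $p\log a$ and $(k_n-1-p)\log a$, hence shifts $\QAICstar_{k_n}(p)$ by the $p$-independent amount $(k_n-1)\log a$; consequently all comparisons $\QAICstar_{k_n}(p)-\QAICstar_{k_n}(\ps)$ are scale invariant, and I may work throughout with the rescaled eigenvalues $\mu_{n,i}:=d_n\well_{n,i}$, whose asymptotic behaviour is exactly what \Cref{lem:Bai_Yaio2} and \Cref{th:Bai_Lemma2_2_lambda} describe.

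After cancelling the $p$-independent terms, $\QAICstar_{k_n}(p)-\QAICstar_{k_n}(\ps)$ decomposes into a \emph{spike part} $\pm\sum\log\mu_{n,i}$ over the indices strictly between $p$ and $\ps$; a \emph{bulk part} built from the two averaged logarithmic terms; and a \emph{penalty part} $\bigl[(p+1)(2k_n-p)-(\ps+1)(2k_n-\ps)\bigr]/d_n=(p-\ps)\tfrac{2k_n-p-\ps-1}{d_n}=\tfrac2c(p-\ps)+o(|p-\ps|)$, using $p,\ps=o(d_n)$. For the bulk part, at most $q_n=o(d_n)$ of the summands $\mu_{n,j}$ are the $O_{\P}(1)$ spike or near-edge eigenvalues controlled by \Cref{lem:Bai_Yaio2}(a),(b) (resp.\ \Cref{th:Bai_Lemma2_2_lambda}(a),(b)), so after division by $k_n-1-p\asymp k_n$ they become negligible and \Cref{lem:Bai_Yaio2}(d) (resp.\ \Cref{th:Bai_Lemma2_2_lambda}(c)) gives $\tfrac1{k_n-1-p}\sum_{j=p+1}^{k_n-1}\mu_{n,j}\Pconv c$ uniformly over $p\le q_n$; a first-order Taylor expansion of $\log$ around $c$, keeping the $O(1)$ correction $\tfrac1c\sum_i\mu_{n,i}$ over the indices strictly between $p$ and $\ps$, then yields the limits below.

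For overestimation, $\ps<p\le q_n$, monotonicity of the eigenvalues forces every $\mu_{n,j}$ with $\ps<j\le q_n$ between $\mu_{n,q_n}$ and $\mu_{n,\ps+1}$, both converging to $(1+\sqrt c)^2$ by \Cref{lem:Bai_Yaio2}(b) (resp.\ \Cref{th:Bai_Lemma2_2_lambda}(b)); assembling the three parts yields, uniformly in $p$,
\[
\QAICstar_{k_n}(p)-\QAICstar_{k_n}(\ps)=(p-\ps)\,h(1/\sqrt c)+o_{\P}(p-\ps),\qquad h(t):=2\log(1+t)+t^2-2t ,
\]
and since $h(0)=0$ and $h'(t)=2t^2/(1+t)>0$ we get $h(1/\sqrt c)>0$ and hence $\P\bigl(\min_{\ps<p\le q_n}\{\QAICstar_{k_n}(p)-\QAICstar_{k_n}(\ps)\}>0\bigr)\to1$. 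For underestimation, $1\le p<\ps$, in cases (a)--(b), \Cref{lem:Bai_Yaio2}(a) gives $\mu_{n,i}\Pconv\varphi_c(\slambdaohnen_i)$ for $i\le\ps$ (with $\varphi_c$ as in \Cref{eq:def_phi}), whence
\[
\QAICstar_{k_n}(p)-\QAICstar_{k_n}(\ps)\ \Pconv\ \sum_{i=p+1}^{\ps}\Bigl(\frac{\varphi_c(\slambdaohnen_i)}{c}-\log\frac{\varphi_c(\slambdaohnen_i)}{c}-1-\frac2c\Bigr) .
\]
Since $\slambdaohnen_i\ge\slambdaohnen_{\ps}>1+\sqrt c$, $\varphi_c$ is increasing on $(1+\sqrt c,\infty)$ with $\varphi_c(1+\sqrt c)=(1+\sqrt c)^2$, and $y\mapsto y-\log y-1-2/c$ is increasing on $(1,\infty)$, which contains $\varphi_c(\slambdaohnen_{\ps})/c$; hence every summand is at least the $i=\ps$ one, which is precisely the left-hand side of \Cref{cond:gap_high_dim}. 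Thus, under \Cref{cond:gap_high_dim}, all these limits are strictly positive, and together with the overestimation bound and a union bound over the finitely many $p<\ps$ this gives $\P(\widehat p_n=\ps)\to1$, i.e.\ part (a); if \Cref{cond:gap_high_dim} fails the $i=\ps$ summand is $\le0$, so $\QAICstar_{k_n}(\ps-1)-\QAICstar_{k_n}(\ps)$ does not converge to a positive limit and the $\QAICstar$ is not weakly consistent, i.e.\ part (b). For part (c), \Cref{th:Bai_Lemma2_2_lambda}(a) gives $\mu_{n,i}/\slambda{i}\Pconv1$ with $\slambda{\ps}\to\infty$, so keeping the averaged value $a_{n,\ps}:=\tfrac1{k_n-1-\ps}\sum_{j=\ps+1}^{k_n-1}\mu_{n,j}$ (rather than its limit $c$) and exploiting the monotonicity of $x\mapsto x/a_{n,\ps}-\log x$ shows each of the above summands tends to $\infty$ in probability, whence $\QAICstar_{k_n}(p)-\QAICstar_{k_n}(\ps)\Pconv\infty$ for every $p<\ps$; combined with the overestimation bound, which is unchanged once the assumption $\slambda{1}=o(d_n^{1/2})$ makes \Cref{th:Bai_Lemma2_2_lambda} applicable, this proves part (c).

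The main obstacle is the uniform bulk-part estimate: extracting the correct $O(1)$ correction from the two averaged logarithmic terms requires controlling $\sum_{j=p+1}^{k_n-1}\mu_{n,j}$ simultaneously under the $O(1)$ index shift $p\leftrightarrow\ps$ and as $p$ ranges over the whole growing set $\{1,\dots,q_n\}$, while isolating the $o(d_n)$ many large eigenvalues from the Mar\v{c}enko--Pastur bulk. This is exactly where the full force of \Cref{lem:Bai_Yaio2} and \Cref{th:Bai_Lemma2_2_lambda}, together with the assumptions $q_n=o(d_n)$ and (in part (c)) $\slambda{1}=o(d_n^{1/2})$, enters; the remaining steps are parallel to the proof of \Cref{th:cons_aic_high_dim}.
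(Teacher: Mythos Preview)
Your proposal is correct and follows essentially the same strategy as the paper: exploit scale invariance to work with $d_n\well_{n,i}$, invoke the $c>1$ parts of \Cref{lem:Bai_Yaio2} and \Cref{th:Bai_Lemma2_2_lambda}, and then rerun the argument underlying \Cref{th:cons_aic_high_dim} (which in the paper is itself a pointer to \citet[Theorem~3.3]{Bai:Choi:Fujikoshi:2018}). The paper's own proof is a one-line reduction to that reference, whereas you spell out the spike/bulk/penalty decomposition and the explicit limits, including the neat identification of the overestimation constant as $h(1/\sqrt c)$ with $h(t)=2\log(1+t)+t^2-2t$; one small slip is that the $c>1$ bulk average is part~(d) of \Cref{th:Bai_Lemma2_2_lambda}, not~(c).
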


\begin{theorem} \label{th:cons_bic_high_dim_greater_1}
  Let \Cref{asu:high_dim}  with $ c > 1$ be given.
  Suppose that either $$\bGamman \rightarrow \bGamma \text{ such that } (\slambda{1}, \ldots, \slambda{\ps}) \rightarrow (\slambdaohnen_1, \ldots, \slambdaohnen_{\ps})  \text{ as } n\to \infty \text{ with }\slambdaohnen_{\ps} > 1 + \sqrt{c},$$ 
or $$\slambda{\ps} \rightarrow \infty \quad \text{ and } \quad \slambda{1} = o(d_n^{1/2}) \quad\text{ as } n\to\infty.$$ 
        \begin{enumerate}[(a)]
        \item If $\slambda{\ps} / \log(d_n) \rightarrow 0$ as $\ninf$, then the $\QBICstar$ is not weakly consistent.
        \item If $\slambda{\ps} / \log(d_n) \rightarrow \infty$ as $\ninf$, then the $\QBICstar$ is weakly consistent.
    \end{enumerate}
\end{theorem}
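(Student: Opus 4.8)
The plan is to follow the proof of \Cref{th:cons_bic_high_dim} (the case $0<c<1$), with the bulk mean $1$ replaced by $c$ and the pooled sums truncated at the $(k_n-1)$-st empirical eigenvalue, since for $c>1$ one has $\well_{n,k_n+1}=\cdots=\well_{n,d_n}=0$. Put $\widehat p_n=\argmin_{1\le p\le q_n}\QBICstar_{k_n}(p)$. The terms $\log((d_n-1)/d_n)^{k_n-1}$ and $(k_n-1)(\log(2\pi)+1)$ do not depend on $p$, so every pairwise comparison reduces to the sign of
\[
 D_n(p):=\bigl[L_n(p)-L_n(\ps)\bigr]+\bigl[\mathrm{pen}_n(p)-\mathrm{pen}_n(\ps)\bigr],
\]
with $L_n(p):=\sum_{i=1}^{p}\log\well_{n,i}+(k_n-1-p)\log\bigl(\tfrac{1}{k_n-1-p}\sum_{j=p+1}^{k_n-1}\well_{n,j}\bigr)$ and $\mathrm{pen}_n(p):=\log(d_n)(p+1)(k_n-p/2)/d_n$; up to ties, $\P(\widehat p_n=\ps)=\P(\min_{p\ne\ps}D_n(p)>0)$. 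I would treat the overestimation range $\ps<p\le q_n$ and the underestimation range $1\le p<\ps$ separately, keeping $\ps$ fixed (for part (a) one needs $\ps\ge2$, so that $\ps-1$ is an admissible candidate).

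\textbf{Eigenvalue inputs and the expansion.} For the building blocks I would invoke \Cref{lem:Bai_Yaio2} (in the regime $\bGamman\to\bGamma$) or \Cref{th:Bai_Lemma2_2_lambda} (in the regime $\slambda{\ps}\to\infty$): for $i\le\ps$, $d_n\well_{n,i}\Pconv\varphi_c(\slambdaohnen_i)>(1+\sqrt c)^2$, respectively $d_n\well_{n,i}/\slambda{i}\Pconv1$; for the leading bulk eigenvalues, $\sup_{\ps<j\le q_n}\lvert d_n\well_{n,j}-(1+\sqrt c)^2\rvert\Pconv0$, using monotonicity of $j\mapsto\well_{n,j}$ together with the limits at $j=\ps+1$ and $j=q_n$; and, since $q_n=o(d_n)$, part (d) of \Cref{lem:Bai_Yaio2} / part (c) of \Cref{th:Bai_Lemma2_2_lambda} give $\tfrac{1}{k_n-q_n}\sum_{i=q_n+1}^{k_n}d_n\well_{n,i}\Pconv c$, whence the pooled mean in $L_n(p)$ is $\tfrac{c}{d_n}(1+o_\P(1))$ uniformly over $p\le q_n$ (the at most $\ps$ spikes contribute $o(k_n)$ to the sum, because $\slambda{1}=o(d_n^{1/2})$). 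Substituting these into $L_n$ and expanding the logarithms — grouping the pooled-sum logarithms into ratios before linearising, so that the $O(k_n)$ pieces cancel — produces, with the convex function $\psi(x):=x-1-\log x\ge0$ on $(0,\infty)$: $L_n(p)-L_n(\ps)=-(p-\ps)\,\psi\!\bigl((1+\sqrt c)^2/c\bigr)(1+o_\P(1))$ for $\ps<p\le q_n$, and $L_n(p)-L_n(\ps)=\sum_{i=p+1}^{\ps}\psi\!\bigl(d_n\well_{n,i}/c\bigr)(1+o_\P(1))$ for $1\le p<\ps$, both uniformly in $p$; while $\mathrm{pen}_n(p)-\mathrm{pen}_n(\ps)=(p-\ps)\bigl(k_n-\tfrac{p+\ps+1}{2}\bigr)\log(d_n)/d_n=(p-\ps)\,\tfrac{\log d_n}{c}\,(1+o(1))$ by a direct computation. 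Heuristically, one unit change of $p$ costs $\tfrac{\log d_n}{c}$ in the penalty (since $k_n/d_n\to c^{-1}$) and changes $L_n$ by $\mp\psi(d_n\well_{n,p}/c)$, evaluated at the Mar\v{c}enko--Pastur edge $(1+\sqrt c)^2$ when $p>\ps$ and at a spike when $p\le\ps$.

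\textbf{Conclusion.} With these in hand: overestimation gives $D_n(p)=(p-\ps)\bigl[\tfrac{\log d_n}{c}-\psi((1+\sqrt c)^2/c)+o_\P(\log d_n)\bigr]$, which tends to $+\infty$ in probability uniformly since $\log d_n\to\infty$ dominates the constant $\psi((1+\sqrt c)^2/c)$; hence $\P(\min_{\ps<p\le q_n}D_n(p)>0)\to1$ in both cases. For underestimation, as $\well_{n,i}$ is non-increasing and $\psi$ is increasing on $(1,\infty)$, $D_n(p)\ge(\ps-p)\bigl[\psi(d_n\well_{n,\ps}/c)-\tfrac{\log d_n}{c}\bigr]+o_\P(\log d_n)$, the bound being attained (up to the error) at $p=\ps-1$. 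In case (b), $\slambda{\ps}/\log d_n\to\infty$ rules out converging spikes, so $d_n\well_{n,\ps}\sim\slambda{\ps}$ and $\psi(d_n\well_{n,\ps}/c)=\tfrac{\slambda{\ps}}{c}(1+o_\P(1))$ beats $\tfrac{\log d_n}{c}$; thus $\min_{1\le p<\ps}D_n(p)\to+\infty$ in probability, and together with the overestimation bound, $\P(\widehat p_n=\ps)\to1$, i.e.\ $\QBICstar$ is weakly consistent. In case (a), $\psi(d_n\well_{n,\ps}/c)=O_\P(1)$ if the spikes converge and $\psi(d_n\well_{n,\ps}/c)=o_\P(\log d_n)$ if $\slambda{\ps}\to\infty$ with $\slambda{\ps}/\log d_n\to0$; either way $D_n(\ps-1)=\psi(d_n\well_{n,\ps}/c)-\tfrac{\log d_n}{c}+o_\P(\log d_n)\to-\infty$ in probability, so $\QBICstar_{k_n}(\ps-1)<\QBICstar_{k_n}(\ps)$ with probability tending to $1$, $\P(\widehat p_n=\ps)\to0$, and $\QBICstar$ is not weakly consistent.

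\textbf{Main obstacle.} The hard part will be the expansion of $L_n(p)-L_n(\ps)$: the individual logarithmic terms in $L_n$ are of size $k_n\log d_n$, so they must be grouped into ratios before being linearised, so that these large contributions cancel and the true $O(\log d_n)$ and $O(1)$ scales of $D_n(p)$ surface; moreover this expansion, together with the eigenvalue estimates of \Cref{lem:Bai_Yaio2} and \Cref{th:Bai_Lemma2_2_lambda}, must hold uniformly over the possibly growing candidate range $1\le p\le q_n$, which is precisely where $q_n=o(d_n)$ and the monotonicity of the empirical eigenvalues are used. The same template should prove \Cref{th:cons_high_dim_greater_1} for $\QAICstar$: there the per-step penalty increment is the constant $2/c$ rather than $\log(d_n)/c$, so the decisive underestimation comparison becomes $\psi(\varphi_c(\slambdaohnen_{\ps})/c)\gtrless 2/c$, i.e.\ the modified gap condition \eqref{cond:gap_high_dim}.
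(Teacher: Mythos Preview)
Your approach is essentially what the paper does: the paper's proof is a one-line reduction—invoke the scale invariance of $\QBICstar$, the eigenvalue asymptotics from \Cref{lem:Bai_Yaio2} and \Cref{th:Bai_Lemma2_2_lambda}, note that $\log(d_n)/\log(k_n)\to1$, and then declare that the proof of \citet[Theorem~3.4]{Bai:Choi:Fujikoshi:2018} carries over verbatim with $\hslambda_{n,i}$ replaced by $d_n\well_{n,i}$. You have, in effect, unpacked that citation and written out the Bai--Choi--Fujikoshi computation by hand: the decomposition $D_n(p)=[L_n(p)-L_n(\ps)]+[\mathrm{pen}_n(p)-\mathrm{pen}_n(\ps)]$, the reduction of the likelihood increment to $\psi(x)=x-1-\log x$ evaluated at spike-to-bulk ratios, and the per-step penalty $\log(d_n)/c$ are exactly the ingredients of their argument. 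The underestimation analysis (a finite range, comparison of $\psi(d_n\well_{n,\ps}/c)$ against $\log(d_n)/c$, hence the dichotomy on $\slambda{\ps}/\log d_n$) is correct, and your remark that part~(a) tacitly requires $\ps\ge2$ is well taken.

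There is one discrepancy worth flagging. You use $q_n=o(d_n)$ throughout the overestimation analysis—for the edge convergence $d_n\well_{n,q_n}\Pconv(1+\sqrt c)^2$ and for the pooled-mean limit via part~(d)—but \Cref{th:cons_bic_high_dim_greater_1} does \emph{not} carry this hypothesis (only the $\QAICstar$ theorems do). Consequently your precise expansion $L_n(p)-L_n(\ps)=-(p-\ps)\,\psi\!\bigl((1+\sqrt c)^2/c\bigr)(1+o_\P(1))$ does not cover all admissible $p\le q_n\le k_n-2$. For the $\QBICstar$ this precision is unnecessary: since the bulk eigenvalues $d_n\well_{n,j}$, $\ps<j\le k_n-1$, are confined in probability to a compact subset of $\bigl((\sqrt c-1)^2,(1+\sqrt c)^2\bigr)$, the per-step likelihood change $L_n(p)-L_n(p-1)$ is $O_\P(1)$ uniformly in $p$, and this cruder bound is already dominated by the penalty increment $\log(d_n)/c\to\infty$. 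Replacing your exact limit in the overestimation range by this uniform $O_\P(1)$ bound removes the need for $q_n=o(d_n)$ and brings your argument in line with the stated theorem.
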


\begin{remark}
    \begin{enumerate}[(a)]~
        \item The $\QAICstar$ is weakly consistent when the gap condition is fulfilled and not consistent otherwise, whereas the consistency of the $\QBICstar$ depends on the asymptotic behavior of $\slambda{\ps}$. The results are identical to the case $0 < c < 1$.
        \item Since the last $(d_n - k_n)$ eigenvalues of $\bhSigman$ are equal to $0$,  additional simulation studies showed that if the dimension $d_n$ is sufficiently large, setting some \ eigenvalues of $\bSigman$ to zero has no big influence on the performance of the $\QAICstar$ and $\QBICstar$. However, when $c < 1$, the zero eigenvalues do influence the performance of the $\AIC$ and $\BIC$. In such cases, we recommend first projecting the data onto a lower-dimensional space to ensure that the zero eigenvalues have no impact on the analysis.
    \end{enumerate}
\end{remark}

\section{Simulation study} \label{sec:simulation}
In this section, we compare the performance of the different information criteria through a simulation study.  In the following, we simulate $n$ times a multivariate regularly varying random vector $\bX$ of dimension $d_n$. For the distribution of $\bX$, we distinguish three models. First, in \Cref{sec:simulation_directional} we use the directional model and in \Cref{sec:simulation_directional_noise}, we extend the directional model by adding an additional noise term. 
Finally, the model in \Cref{sec:AGM} exhibits asymptotic dependence but differs from the directional model. In all models, we estimate the parameter $\ps$ by $\widehat{p}_n$ based on $n$ observations. We run the simulations with $500$ repetitions. Throughout these examples,  $c = d_n / k_n$. When $c < 1$ we use the $\AIC$ and the $\BIC$, and if $c > 1$ we use the $\QAICstar$ and the $\QBICstar$.  If for some $c$, $k_n$ is  larger than $n$, we set $k_n = n$.  The code for the simulations is available at \url{https://gitlab.kit.edu/projects/178647}.

\subsection{Directional model} \label{sec:simulation_directional}
First, we consider the Directional Model (D) with $\ps = 10$ as introduced in \Cref{sec:high_dim_asymptotic}. On the one hand, we investigate the fixed-dimensional case with $d = 20$ and on the other hand, the high-dimensional case with $d = 100,$ $200$ and $300$. For comparison, we run simulations with sample sizes $n = 1000,$ $5000,$ $10000$. The matrix $\bGamman$ from \Cref{def:Gamma} is fixed and the eigenvalues $\slambda{1}, \ldots, \slambda{{\ps}}$ are all equal to $\lambda^*$, which is chosen to be larger than 1 and to satisfy the distant spiked eigenvalue condition $\lambda^*>1+\sqrt{c}$.  The entries of $\bVn$ are i.i.d. standard normally distributed.

\begin{figure}[ht]
    \centering
    \includegraphics[width=1\textwidth]{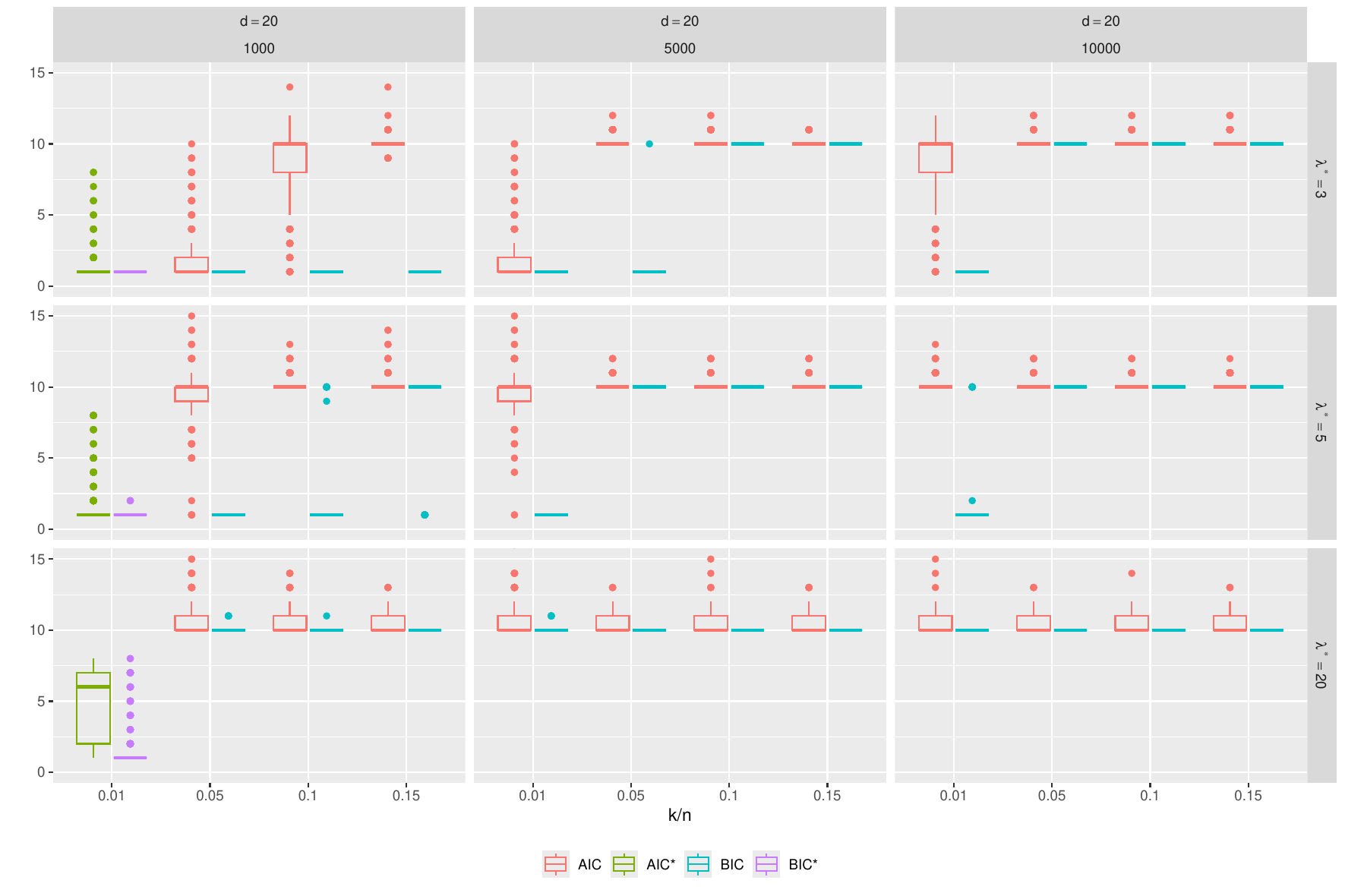}
    \caption{ Simulations for the directional model with $\ps = 10$ and dimension $d = 20$: From left to the right the  sample size increases from $ n = 1000$, $n = 5000$ to $n = 10000$. From top to bottom, the value of the relevant eigenvalues increases from $ \lambda^* = 3$, $\lambda^* = 5$ to $\lambda^* = 20$. In every subplot the ratio $k_n / n$ increases from left to right from  $0.01$, $ 0.05$, $0.1$ to $ 0.15$.  The box plots show the estimator $\widehat p_n$  for $\ps=10$ of the $\AIC$ and $\BIC$.}
    \label{fig:DFM}
\end{figure}

The results for $d = 20$  are presented in \Cref{fig:DFM}. The estimator $\widehat p_n$ of both information criteria gets closer to the true value $\ps = 10$ if $k_n$ increases. For $n = 1000$ and $k_n/n = 0.01$, we have $k_n = 10 < d=20$ and therefore we use the $\QAICstar$ and $\QBICstar$. Both information criteria underestimate $\ps$, which is expected as the number of extreme observations $k_n$ equals $\ps$. In all other cases, the $\AIC$ and $\BIC$ are used. 
For $k_n/n \ge 0.05$ and $\lambda^* = 3$, the $\AIC$ either estimates $\ps$ or shows more outliers above $\ps$. 
Overall, the $\AIC$ performs better, when $\lambda^*$ or $k_n$ increases. The $\BIC$ estimates the true value of $\ps$ or underestimates $\ps$, where the number of cases with underestimation becomes smaller when  $\lambda^*$ or $k_n$ grows. This is also intuitive: for a higher value of $\lambda^*$  the spike is more pronounced. 
In comparison to the $\AIC$, the estimates of the $\BIC$ have, in general, fewer fluctuations and outliers.

\begin{figure}[ht]
    \centering
    \includegraphics[width=1\textwidth]{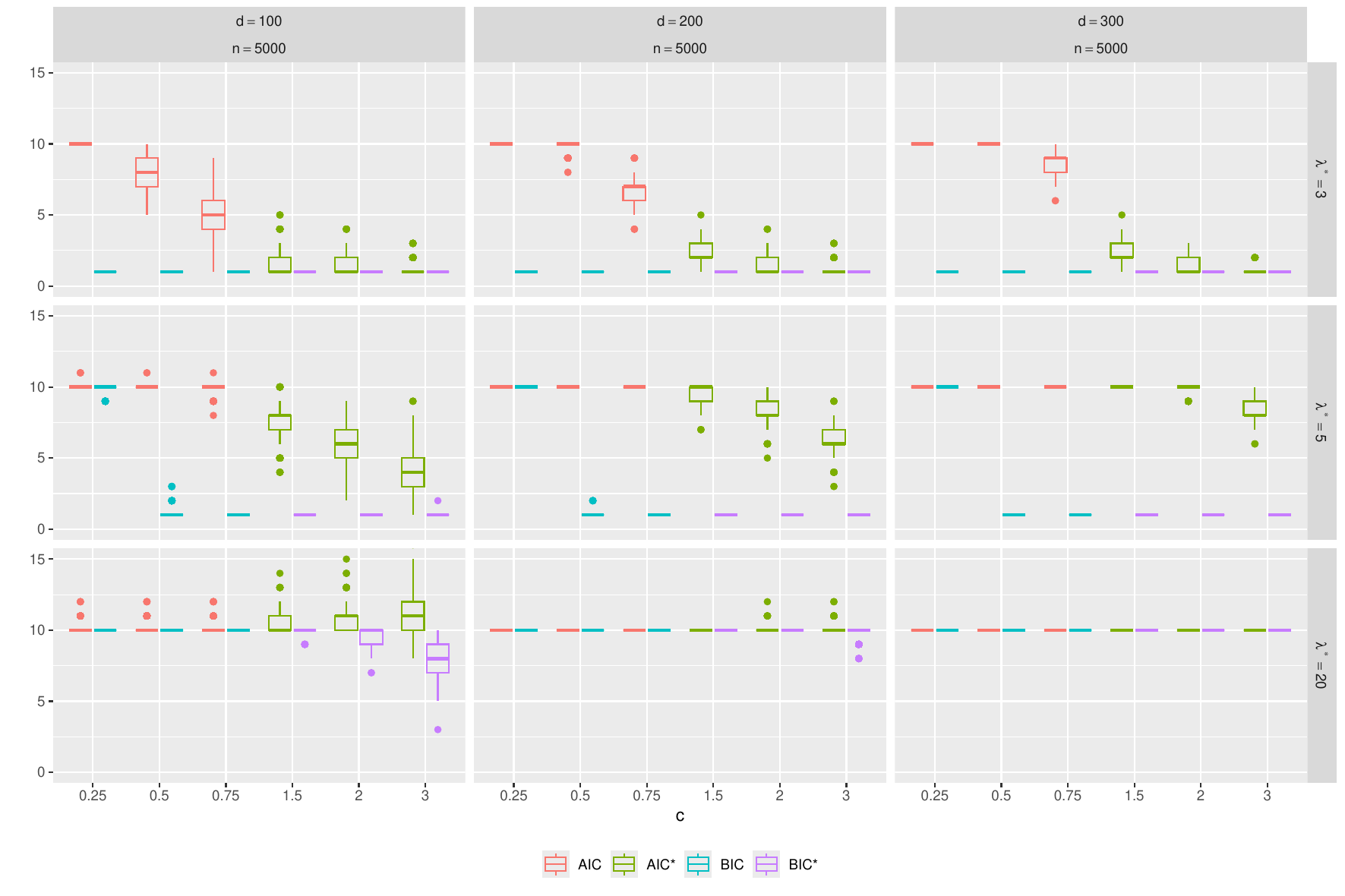}
    \caption{ Simulations for directional factor data with $\ps = 10$ and sample size $n = 5000$: From left to the right the dimension increases from $d=100$, $ d=200$ to $d =300$. From top to bottom, the value of the relevant eigenvalues increases from $ \lambda^* = 3$, $\lambda^* = 5$ to $\lambda^* = 20$. In every subplot the ratio $c = d / k_n$ increases from left to right from  $c= 0.25$, $c = 0.5$, $c = 0.75$, $c = 1.5$, $c = 2$ to $c = 3$.  The box plot shows the estimator $\widehat p_n$ for $\ps=10$ for the different information criteria.}
    \label{fig:DFM_high}
\end{figure}

For the high-dimensional case $d \ge 100$, \Cref{fig:DFM_high} depicts the simulation results. Note that for  $\lambda^* = 3$ the gap condition is satisfied when $c < 1$, and for $\lambda^* = 5$ and $20$  for all $c$. It should also be noted that for fixed $n$ and $d_n$ but increasing $c$, the number of extreme observations $k_n$ decreases, leading to a smaller sample size. The $\AIC$ and $\QAICstar$ both profit from an increase in dimension and $\lambda^*$.   Overall, the estimates of both criteria get better for a larger dimension. In comparison to \Cref{fig:DFM}, we see that the $\QAICstar$ has the tendency to underestimate $\ps$ for $\lambda^* \le 5$, $c = 2$ and $c = 3$, which is consistent with \Cref{th:cons_high_dim_greater_1}. The estimates $\widehat p_n$ of the $\AIC$ and $\QAICstar$ are closer to $\ps$ in comparison to the $\BIC$ and $\QBICstar$ as soon as the gap condition is fulfilled.  When the gap condition is not satisfied, the information criteria underestimate $\ps$, where for $c \ge 0.5$ the $\BIC$ and $\QBICstar$ only give usable results for $\lambda^* = 20$. For $c > 1$ the $\QBICstar$ shows underestimation in all cases. 

\subsection{Directional model with noise} \label{sec:simulation_directional_noise}
In this example, we consider again the directional model with the same choice of distributions as in \Cref{sec:simulation_directional},  but additionally, we add noise. As noise, we use the $d$-dimensional random vector
\begin{align*}
    \bepsilon  \sim \Big\vert \mathcal{N}_{d} \Big( \mathbf{0}_d, \frac{100}{d} \bI_d \Big)\Big\vert,
\end{align*}
where the absolute value is entry-wise. Due to the scaling of the covariance matrix by $100/d$ the variance of the norm of $\bepsilon$ converges as $d \rightarrow \infty$ to $100/\sqrt{2}$ (see \Cref{lemma:norm}).
Then, we construct the regularly varying random vector
\begin{align*}
    \bXn = \frac{ \phantom{\Vert} \sqrtsbSigman \bVn  \phantom{\Vert}}{\Vert \sqrtsbSigman \bVn \Vert} \cdot Z + \bepsilon \in \R^{d_n},
\end{align*}
where $\sbSigman, \bVn$ and $Z$ are defined as in \Cref{sec:simulation_directional} and $\bepsilon$ is given as above.

\begin{figure}[ht]
    \centering
    \includegraphics[width=1\textwidth]{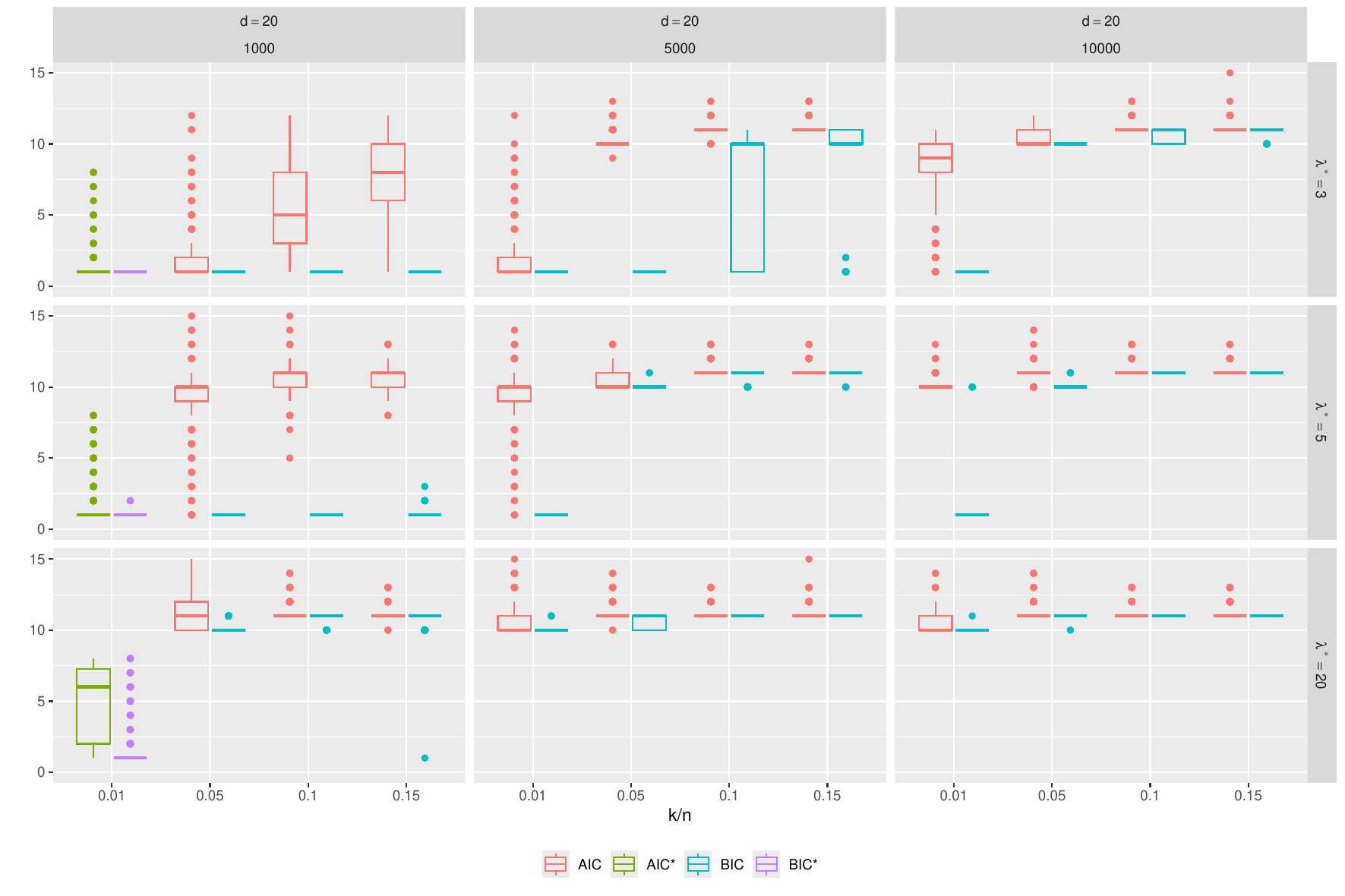}
    \caption{ Simulations for noisy directional factor data with $\ps = 10$ and dimension $d = 20$: From left to the right the  sample size increases from $ n = 1000$, $n = 5000$ to $n = 10000$. From top to bottom, the value of the relevant eigenvalues increases from $ \lambda^* = 3$, $\lambda^* = 5$ to $\lambda^* = 20$. In every subplot the ratio $k_n / n$ increases from left to right from  $0.01$, $0.05$, $0.1$ to $ 0.15$.  The box plots show the estimator $\widehat p_n$ for $\ps=10$ for the $\AIC$ and $\BIC$.}
    \label{fig:DFM_noise}
\end{figure}

The results are shown for $d = 20$  in \Cref{fig:DFM_noise}. Overall, the results are similar to \Cref{fig:DFM}, but with more deviation from the true value $\ps$. In most cases (e.g. $n = 5000, 10000$, $k_n / n \ge 0.05$ and $\lambda^* \ge 5$), the information criteria estimated $\widehat p_n=11$ relevant eigenvalues, therefore identifying not only the 10 dominant eigenvalues but also including the noise. The noise leads to more fluctuation of the $\AIC$ estimates, especially to overestimation of $\ps$. For the $\BIC$ there are cases (e.g. $n = 1000$, $\lambda^* \leq 5$ and $k_n/n = 0.15$), where the $\BIC$ estimates $\widehat p_n=1$ instead $\ps = 10$ and without noise the estimate is concentrated near $\ps=10$. The $\AIC$ does not show this behavior. The influence of the noise decreases for larger $\lambda^*$, resulting in a larger spike.

\begin{figure}[ht]
    \centering
    \includegraphics[width=1\textwidth]{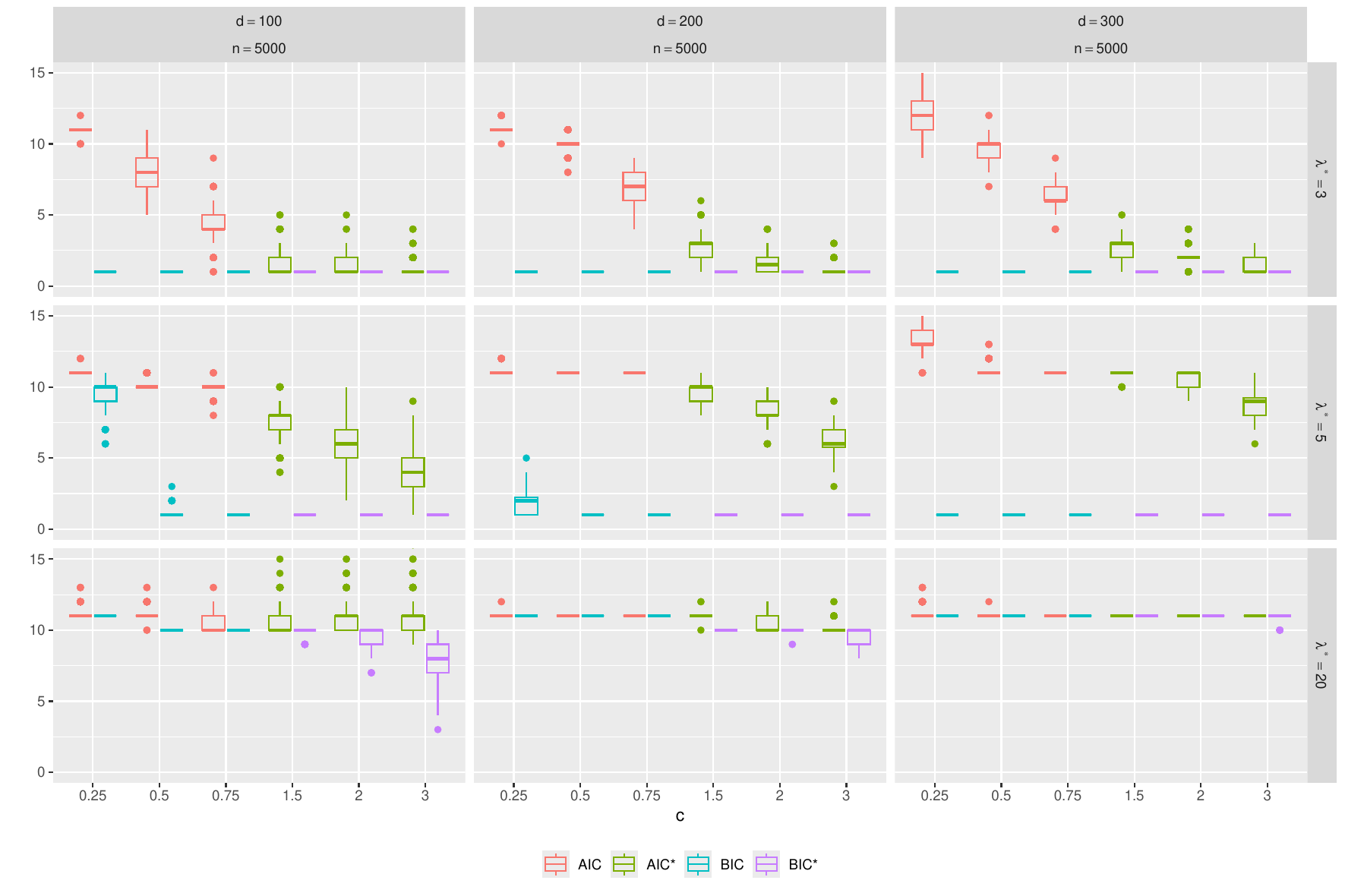}
    \caption{ Simulations for noisy directional factor data with $\ps = 10$ and sample size $n = 5000$: From left to the right the dimension increases from $d=100$, $ d=200$ to $d =300$. From top to bottom, the value of the relevant eigenvalues increases from $ \lambda^* = 3$, $\lambda^* = 5$ to $\lambda^* = 20$. In every subplot the ratio $c = d / k_n$ increases from left to right from  $c= 0.25$, $c = 0.5$, $c = 0.75$, $c = 1.5$, $c = 2$ to $c = 3$.  The box plot shows the estimator $\widehat p_n$ for $\ps=10$ for the different information criteria.}
    \label{fig:DFM_high_noise}
\end{figure}

\Cref{fig:DFM_high_noise}
provides a visualization of the results in the high-dimensional cases $d= 100$, $d= 200$ and $d= 300$. We see that the effect of the noise is similar to the low-dimensional case. The overall fluctuation increases compared to the simulation without noise in \Cref{fig:DFM_high}. The $\AIC$ and $\QAICstar$ estimate the noise as an additional direction, {for example, when $\lambda^* = 20$ and $d = 300$.}  The $\BIC$ and the $\QBICstar$ give as estimation $\widehat p_n=10$ in some cases (e.g. $d = 100$, $\lambda^* = 20$ and $c \ge 0.5$), therefore they are able to differentiate between the noise and the true value $\ps = 10$.

\subsection{Spiked angular Gaussian model} \label{sec:AGM}

 In this section, we consider the contaminated spiked angular Gaussian model, which can also be found in \citet{AMDS:22}. For  $1 \le \ps \le d$
we define the  regularly varying random vector
\begin{align*}
    \bX =  \bN Z \in \R^d,
\end{align*}
where $Z$ is a univariate standard Fréchet distributed random variable, $\bN$ follows a $d$-dimensional centered normal distribution with covariance matrix
\begin{align*}
    \bH \coloneqq \sum_{i=1}^{\ps} \lambda_i \bv_i \bv_i^\top + \lambda \bI_d,
\end{align*}
where $\bv_i, \, i = 1, \ldots, {\ps}$ are orthogonal vectors and  $\lambda_1 \ge \lambda_2 \ge \cdots \ge \lambda_{\ps} > \lambda = \cdots = \lambda > 0$. 
Note that the distribution of $\bX$ differs from the directional model in \Cref{sec:simulation_directional}, since the normal distribution is not standardized when $\bX$ is generated. 
The spectral vector arising from $\bX$ concentrates on a $p$-dimensional subspace and is given by (see \citet{AMDS:22})
\begin{align*}
    \P( \bTheta \in \cdot) = \frac{\E  [  \Vert \bN \Vert \delta_{\bN / \Vert \bN \Vert } ( \cdot ) ] }{ \E[ \Vert \bN \Vert ]}.
\end{align*}
\begin{figure}[ht]
    \centering
    \includegraphics[width=1\textwidth]{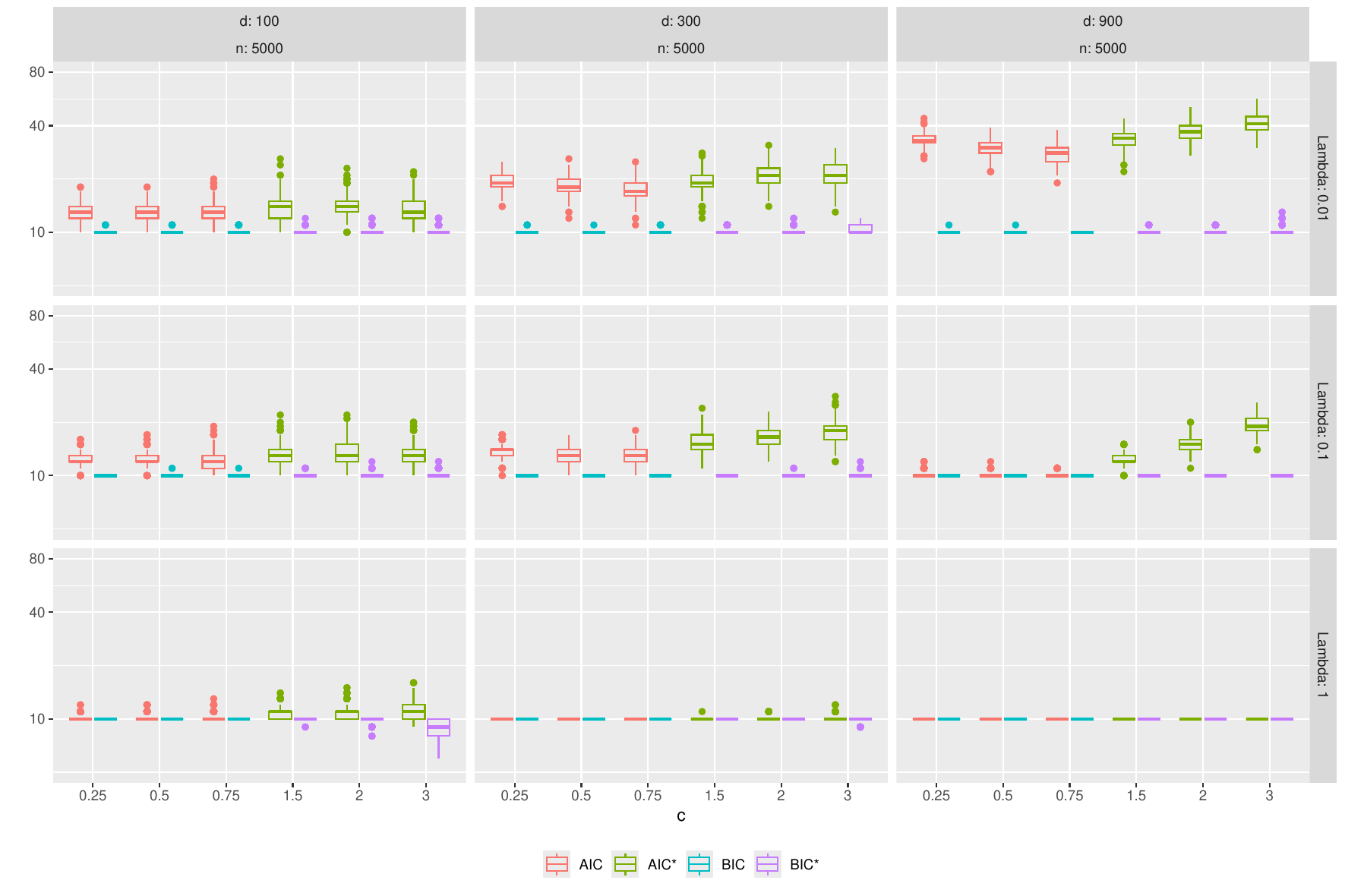}
    \caption{Simulations for spiked angular Gaussian data with $\ps = 10$: From left to the right the dimension increases from $d=100$, $d =300$ to $d =900$. From top to bottom, the value of $\lambda$ increases from $\lambda = 0.01$, $\lambda = 0.1$ to $\lambda = 1$. In every subplot the ratio $c = d / k_n$ increases from left to right from  $c= 0.25$, $c= 0.5$, $c = 0.75$, $c = 1.5$, $c = 2$ to $c = 3$.  The box plot is log-scaled and shows the estimator $\widehat p_n$ for the different information criteria.}
    \label{fig:AGM}
\end{figure}

For the comparison, we run simulations with sample size $n =5000$, dimension $d=100$, $d =300$ to $d =900$.
The matrix $\bH$ is fixed for each sample but is initially randomly generated for the simulation, where the eigenvalues $\lambda_1= \cdots= \lambda_{10}=20$ are equal to $20$, $\ps=10$ and the last eigenvalue $\lambda$
varies; we analyze the behavior of the information criteria when $\lambda$ gets closer to $0$ and thus, the spiked covariance assumption is closer to being violated.
Therefore, we compare the results for $\lambda = 0.01$, $\lambda = 0.1,$ and $\lambda = 1$. The eigenvectors $\bv_i$ are generated with the R package pracma. 

 The results are illustrated in \Cref{fig:AGM}. It is evident that, when the gap is sufficiently large, then the $\BIC$ and $\QBICstar$ are less affected by a small eigenvalue $\lambda$ than the $\AIC$ and $\QAICstar$. The smaller $\lambda$ is chosen, the larger the overestimation of the $\AIC$ and $\QAICstar$ is, whereby for $d = 900$ and $\lambda = 0.01, 0.1$ the $\QAICstar$ overestimates $\ps$ more than $\AIC$. When $\lambda = 1$ and $d \ge 300$ the performance of all criteria is nearly identical.

\section{Application to precipitation data} \label{sec:data_set}
In this section, the information criteria are applied to precipitation data in Germany taken from \citet{DWD}. The data set consists of daily station observations of the precipitation height for Germany between  January 1, 1951 and  March 31, 2022 at $d = 500$ stations. The stations are marked by black dots in \Cref{fig:map_of_germany_stations}.
 The data is preprocessed to include only observations from January, February and March, and transformed to standard Fréchet margins. After data cleaning, the resultant dataset contains $n = 2546$ observations, each with precipitation records from at least one station. In \Cref{fig:map_of_germany_stations} we see the stations of the empirical eigenvectors $\widehat \bv_i=(v_{1}^{(i)},\ldots,v_{d}^{(i)})^\top$, where $v_{j}^{(i)}\geq 0.6 v_{(1)}^{(i)}$, $i=1,\ldots,5$, of the 5 largest empirical eigenvalues $\well_{n,1}, \ldots, \well_{n,5}$ if $k_n=76$; the stations of each eigenvector are colored differently.

\begin{figure}
     \centering
     \begin{subfigure}[t]{0.49\textwidth}
        \centering
         \includegraphics[width=0.7\textwidth]{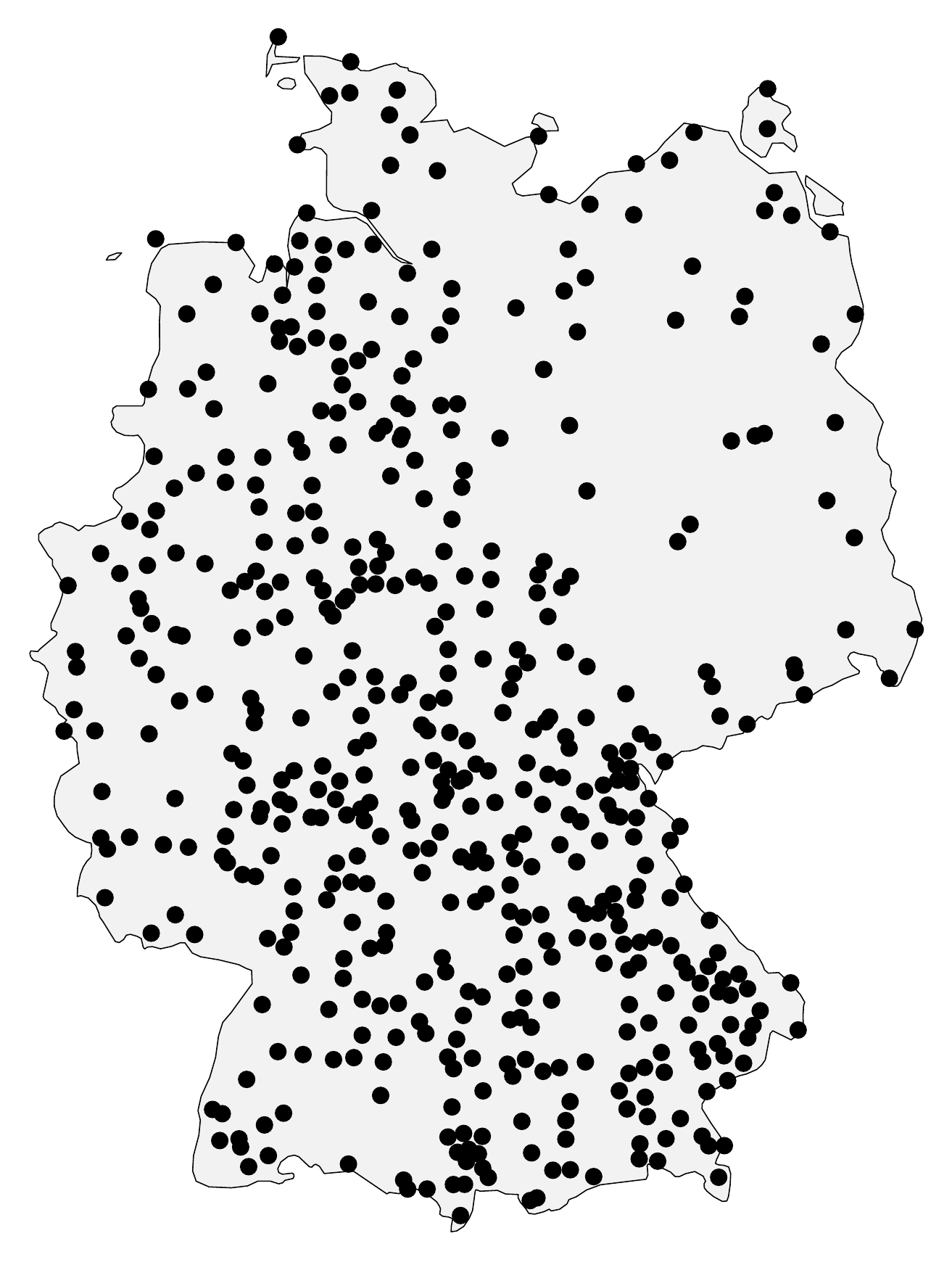}
     \end{subfigure}
     \hfill
     \begin{subfigure}[t]{0.49\textwidth}
         \centering
         \includegraphics[width=0.7\textwidth]{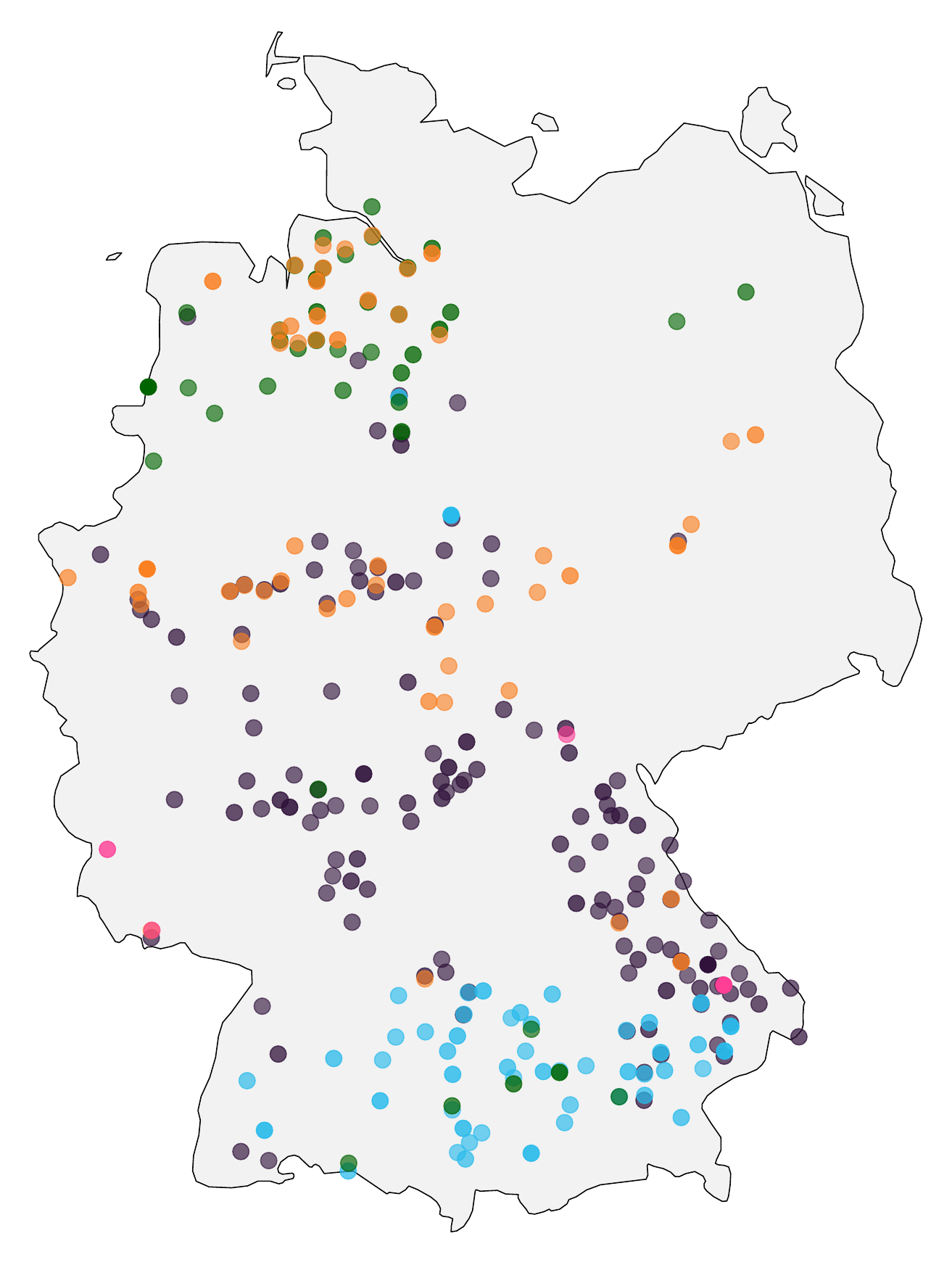}
     \end{subfigure}
        \caption{Left figure: Map of Germany with all stations highlighted by black dots. Right figure: Map of Germany with the most extreme stations of the empirical eigenvectors of the five largest empirical eigenvalues, colored by eigenvectors. }
        \label{fig:map_of_germany_stations}
\end{figure}

We consider $1\%$ to $15\%$ of the data as extreme, corresponding to $25$ to $382$ observations. In these cases $d > k_n$ and therefore, we assume to be in the high-dimensional setting with $c > 1$ and apply $\QAICstar$ and
$\QBICstar$ from \Cref{def:aic_high_dim}. The number of candidate models $q_n$ for the $\QAICstar$ is chosen as $d/2 = 250$ to account for the assumption of \Cref{th:cons_high_dim_greater_1}. 

\begin{figure}[ht]
    \centering
    \includegraphics[width=1\textwidth]{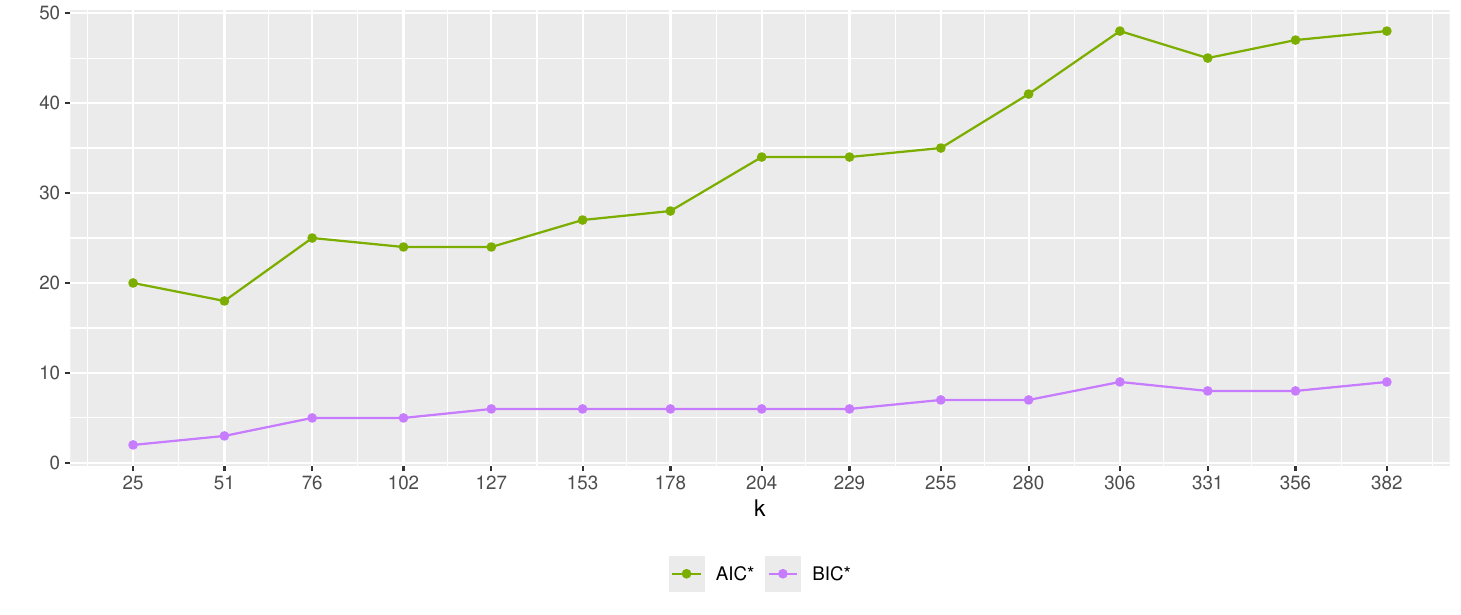}
    \caption{The estimated number $\widehat p_n$ of significant eigenvalues  determined by $\QAICstar$ and $\QBICstar$ plotted against $k_n$. \hspace{10cm}}
    \label{fig:precipitation}
\end{figure}

\begin{figure}[ht]
    \centering
    \includegraphics[width=0.9\textwidth]{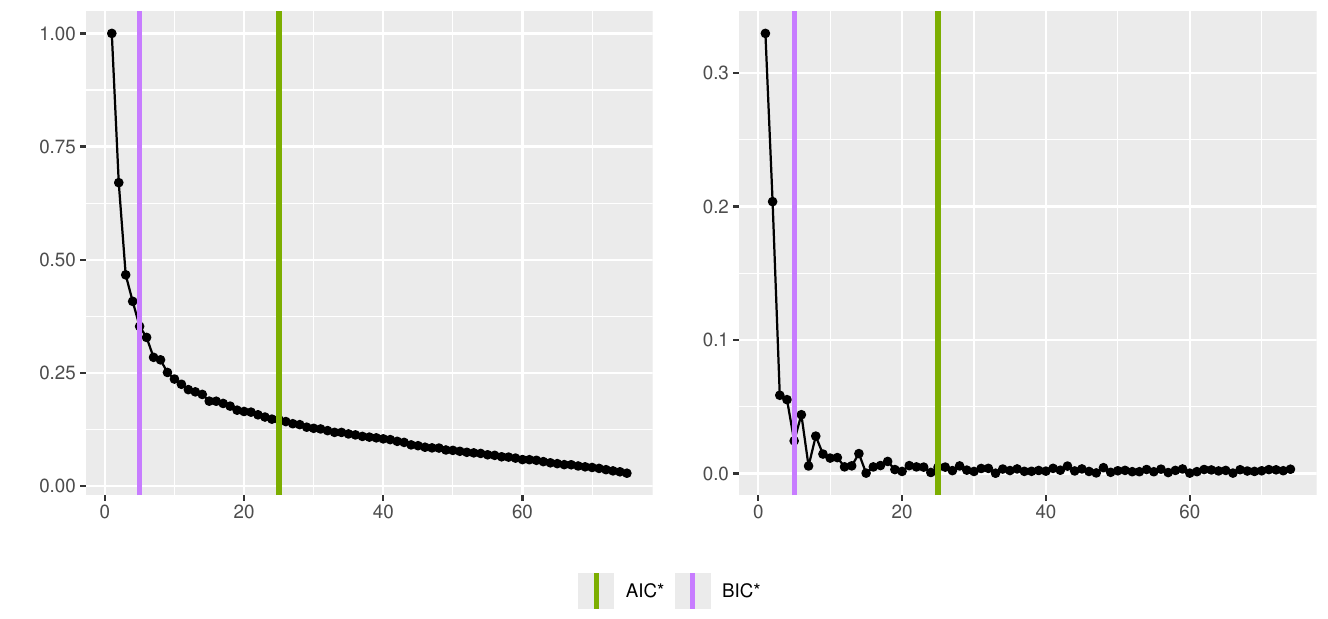}
    \caption{ For $k_n = 76$, on the left hand side the scaled ordered empirical eigenvalues $\well_{n,i}/\well_{n,1}$, $i=1,\ldots,75$  and on the right hand side the differences of the ordered empirical eigenvalues divided by the value of the largest eigenvalue $(\well_{n,i}-\well_{n,i+1})/\well_{n,1}$, $i=1,\ldots,75$ are plotted. The vertical line indicates the  $\QAICstar$ estimator $\widehat p_n=25$ and the $\QBICstar$ estimator $\widehat p_n=5$.}
    \label{fig:precipitation_eigenvalues}
\end{figure}

\begin{table}[ht]
    \centering
    \begin{tabular}[t]{c||c|c|c|c|c|c|c|c|c|c|c|c|c|c|c}
       $k_n$  & 25 & 51 & 76 & 102 & 127 & 153 & 178 & 204 & 229 & 255 & 280 & 306 & 331 & 356 & 382\\
       \hline
        $\QAICstar$ & 20 & 18 & 25 & 24 & 24 & 27 & 28 & 34 & 34 & 35 & 41 & 48 & 45 & 47 & 48\\
        \hline
        $\QBICstar$ & 2 & 3 & 5 & 5 & 6 & 6 & 6 & 6 & 6 & 7 & 7 & 9 & 8 & 8 & 9 \\
    \end{tabular}
    \caption{ Estimated number $\widehat p_n$ of significant eigenvalues in the precipitation data for different values of $k_n$ and information criteria $\QAICstar$  and $\QBICstar$.}
    \label{tab:my_label}
\end{table}

\Cref{fig:precipitation} shows the number of estimated significant eigenvalues $\widehat{p}_n$ mapped against $k_n$. 
 The estimates using $\QAICstar$ stabilize between $k_n = 76$ and $k_n = 178$, ranging from $24$ to $28$, before increasing further. In contrast, $\QBICstar$ stabilizes for $k_n$ between $76$ and $229$, with values of $5$ and $6$. Even for $k_n \geq 255$, the $\QBICstar$ remains between $7$ and $9$, whereas $\QAICstar$ continues to increase. This difference between the estimates aligns with the heavier penalty imposed by $\QBICstar$, which leads to smaller estimates compared to $\QAICstar$.
 These estimates reduce the dimensionality of $d = 500$ by factors of $20$ and $100$, respectively. For comparison of these different estimates, the scaled empirical eigenvalues $\well_{n,i}/\well_{n,1}$, $i=1,\ldots,75$, are plotted in 
 the left picture of \Cref{fig:precipitation_eigenvalues}. At first view, they seem not to be constant after some point, contradicting the spiked covariance assumption. But if we investigate the scaled increments of the eigenvalues $(\well_{n,i}-\well_{n,i+1})/\well_{n,1}$, $i=1,\ldots,75$ in the right plot of \Cref{fig:precipitation_eigenvalues}, we realize that after some point these increments are nearly constant. The information criteria seem to estimate the point where these increments are constant because in the interval $[5,24]$, which is spanned by our estimators, this happens. 
 
\section{Conclusion} \label{sec:conclusion}
The paper proposed information criteria based on the AIC and BIC for Gaussian random vectors to detect the number $p^*$ of significant principal components in multivariate extremes, which corresponds to the location of the spike in the eigenvalues of the covariance matrix of the angular measure.
Our analysis encompassed both the classical large-sample setting and the high-dimensional setting,  which has become increasingly relevant for extreme value theory in today's applications. We established the consistency of the $\BIC$ in the large-sample setting and sufficient criteria for the $\AIC$ and the $\BIC$ to be consistent in the high-dimensional setting of a directional model. The results of this paper are in accordance with the results in the non-extreme world. For the proofs we derived some new results on the asymptotic properties of the empirical eigenvalues of $\bSigma$ in both the large-dimensional case, but, in particular, in the high-dimensional case using methods from random matrix theory. The performance of the information criteria was further validated through a simulation study and a real-world example. 

The case $c=0$ was not covered in that paper because the $\AIC$ and $\BIC$, as defined here, are not consistent even when a type of gap condition is satisfied. From a practical point of view, we believe that in the context of multivariate extremes, this is also not a realistic scenario because usually $k_n$, the number of extremes, is small, and then $d_n/k_n$ will be large.

The paper
focused on eigenvalues of $\bSigma$  that satisfy the spiked covariance structure in \eqref{sec:model}, where 
$\slambdaohnen_{\ps}$ is a distant spiked eigenvalue in the sense that
 $\slambdaohnen_{\ps} > 1 + \sqrt{c}$ and $c=\lim_{n\to\infty}d_n/k_n>0$. 
For applications, these eigenvalue assumptions are restrictive, as we see in our data example in \Cref{fig:precipitation_eigenvalues}, where the empirical eigenvalues decrease but do not stabilize at some point. Therefore, it is worth exploring more general eigenvalue structures of the covariance matrix to estimate the number of significant components of $\bSigma$ such as, e.g., $\slambdaohnen_{\ps} > 1 + \sqrt{c}$ and $\slambdaohnen_{\ps+1}<1 + \sqrt{c}$ where all eigenvalues 
$\slambdaohnen_{j}$ for $j=\ps+1,\ldots,d_n$ are in a  neighborhood of $1$ or $0$.

Additionally, as a starting point of this line of research on PCA for high-dimensional extremes, the consistency results of the information criteria were based on the assumption that the underlying model is a directional model, similar to multivariate statistics, where the first results were derived for Gaussian models with a special covariance structure. Of course, it would also be interesting to explore generalizations or alternatives to the directional model. 

Finally, we would like to point out that changing $k_n$ changes not only the $\AIC$ and  $\BIC$ estimators $\widehat{p}_n$, but also the empirical eigenvalues and hence, the scree plot as in \Cref{fig:precipitation_eigenvalues}. Therefore, the optimal choice of $k_n$ is nontrivial in this context, and some further research, as discussed in
\cite{butsch2024fasen} and \cite{meyer_muscle23} for the choice of $k_n$ is needed.

\appendix

\section{Proofs of Section 2} \label{sec:asymptotic_proof}
\label{sec:high_dim_asymptotic_proof}

\begin{proof}[Proof of \Cref{cor:ev_sqrtn_normality}]~

\begin{enumerate}[(a)]
    \item We use Theorem A.46 in \citet{Bai:Silverstein:2010}, which states that for Hermitian matrices $\bA, \bB \in \R^{d \times d}$ with eigenvalues $\lambda_i(\bA)$ and $\lambda_i(\bB)$, $i = 1, \ldots, d$, the inequality
\begin{align} \label{eq:Weyl}
    \max_{i = 1, \ldots, d} \vert \lambda_i(\bA) - \lambda_i(\bB) \vert \le \Vert \bA - \bB \Vert
\end{align}
holds.
A conclusion from \Cref{th:asymp_theta_cov} is that $\sqrt{k_n} ( \bhSigma_n-\bSigma  ) = O_\P(1)$  and therefore \Cref{eq:Weyl} yields
 \begin{align*}
       (\well_{n,1}, \ldots, \well_{n,d-1}) = (\lambda_1, \ldots, \lambda_{d-1}) + O_\P(1/ \sqrt{k_n})  .  
 \end{align*}

  \item The result corresponds to \citet[Proposition 8]{Dauxois:Pousse:Romain:1982}, which is based on a similar convergence as \Cref{th:asymp_theta_cov}.
\end{enumerate}
\end{proof}

\begin{proof}[Proof of Lemma \ref{lem:hd_spiked_covariance_structure}]
Note that
\begin{align*}
    \bSigman = \Cov( \bThetan)  = {\sbSigman}^{1/2} \Cov \left(  \frac{ \bVn }{\Vert {\sbSigman}^{1/2} \bVn \Vert} \right) {\sbSigman}^{1/2} .
\end{align*}
Utilizing the spectral decomposition $\sbSigman = \bW \bXin {\bW}^\top,$ where $\bW = (\bW_1, \ldots,$ $\bW_{d_n})$ is a $d_n \times d_n$-dimensional orthogonal matrix and 
\begin{equation*}
    \bXin \coloneqq \diag( \obXi, \bI_{d_n - \ps} ) : = \diag( \slambda{1}, \ldots, \slambda{\ps}, 1, \ldots, 1)\in\R^{d_n\times d_n}
\end{equation*}
is a diagonal matrix consisting of the eigenvalues of $\sbSigman$, we receive with $\Vert \bW \bx \Vert = \Vert  \bx \Vert$ for $\bx \in \R^{d_n}$ that
\begin{align*}
        \bSigman=\Cov \left(  \frac{{\sbSigman}^{1/2} \bVn }{\Vert {\sbSigman}^{1/2} \bVn \Vert} \right) &=  \bW \Cov \left(  \frac{ { \bXin}^{1/2} \bVn }{\Vert  { \bXin}^{1/2} \bVn \Vert} \right) {\bW}^\top.
\end{align*}
Hence, the matrices
\begin{align*}
    \Cov \left(  \frac{{\sbSigman}^{1/2} \bVn }{\Vert {\sbSigman}^{1/2} \bVn \Vert} \right) 
\quad
\text{ and } \quad
    \Cov \left(  \frac{{\bXin}^{1/2} \bVn }{\Vert {\bXin}^{1/2} \bVn \Vert} \right) 
\end{align*}
are similar and share the same eigenvalues (\citet[Theorem 1.3.22]{matrix_analysis}). Therefore, we  assume in the following w.l.o.g. that
    $\sbSigman = \bXin$
and hence,
\begin{align*}
    \bSigman  = \Cov \left(  \frac{{\bXin}^{1/2} \bVn }{\Vert {\bXin}^{1/2} \bVn \Vert} \right) 
\end{align*}
is a diagonal matrix. Indeed, since $\bXin$ is a diagonal matrix and $V_1, \ldots, V_{d_n}$ are symmetric and i.i.d., the components of $ {\bXin}^{1/2} \bVn /\Vert {\bXin}^{1/2} \bVn \Vert$ are uncorrelated.
Further, the eigenvalues of $\bSigman$ are the diagonal entries
\begin{align*}
    \diag(\bSigman)_i = \E \left[ \frac{\slambda{i} V_i^2}{\Vert {\sbSigman}^{1/2} \bVn \Vert^2} \right] = \E \left[ \frac{\slambda{i} V_i^2}{ \sum_{j=1}^{\ps}  \slambda{j} V_j^2 + \sum_{j=\ps + 1}^{d_n} V_j^2}  \right], \quad i = 1, \ldots, \ps
\end{align*}
and  
\begin{align*}
    \diag(\bSigman)_i=\diag(\bSigman)_{d_n}=\E \left[ \frac{ V_{d_n}^2}{ \sum_{j=1}^{\ps}  \slambda{j} V_j^2 + \sum_{j=\ps + 1}^{d_n} V_j^2}  \right],
     \quad i = \ps+1,\ldots,d_n,
\end{align*}
which has multiplicity $(d_n - \ps)$.
For $1 \le i \le \ps$  and $l > \ps$, the function
\begin{align*} 
    \frac{\xi V_i^2  - V_l^2}{\xi V_i^2 +  \sum_{ \substack{j=1\\ j \ne i}}^{\ps}  \slambda{j} V_j^2 + \sum_{j=\ps + 1}^{d_n} V_j^2} 
\end{align*}
is strictly increasing function in $\xi$ since  the derivative in $\xi$ is strictly positive.
A conclusion is then for $1 \le i \le \ps$ with $\slambda{i} > 1$ and $l > \ps$ that 
\begin{align*}
    \diag(\bSigman)_i - \diag(\bSigman)_l &= \E \left[  \frac{\slambda{i} V_i^2 -  V_l^2}{ \sum_{j=1}^{\ps}  \slambda{\ps} V_j^2 + \sum_{j=\ps + 1}^{d_n} V_j^2 }  \right] \\
    &> \E \left[  \frac{ V_i^2 -  V_l^2}{ \sum_{\substack{j=1 \\ j \ne i}}^{\ps}  \slambda{\ps} V_j^2 + V_i^2+ \sum_{j=\ps + 1}^{d_n} V_j^2 }  \right]=0.
\end{align*}
Therefore, we receive that the first $\ps$ diagonal entries of $\bSigman$ correspond to the $\ps$ largest eigenvalues of $\bSigman$ namely $\diag(\bSigman)_1$, \ldots, $\diag(\bSigman)_{p^*}$ and the remaining $(d_n-p^*)$ eigenvalues are strictly smaller and identical to $\diag(\bSigman)_{d_n}$. 
\end{proof}

\subsection{Proof of \Cref{lem:Bai_Yaio2}} 
For the proof of \Cref{lem:Bai_Yaio2} we combine ideas for the spiked covariance model from \citet{johnstone:2018} and for compositional data from \citet{jiang:2023}.
First, we derive an alternative representation for $\bhSigman$ in \Cref{def:sigma_estimator_hd}. As a consequence of the independence between the radial components $Z_1, \ldots, Z_n$ and the directional components  $\bXn_1/\Vert \bXn_1 \Vert, \ldots,  \bXn_{k_n} /\Vert \bXn_{k_n} \Vert$ we obtain 
\begin{align*}
  \widehat{\bTheta}{}^{(n)}   &= \sum_{j=1}^n \frac{\sqrtsbSigman \bVn_{j}}{\Vert \sqrtsbSigman \bVn_{j} \Vert} \mathbbm{1} \{Z_j   >  Z_{(k_n +1, n)}   \} \overset{\mathcal{D}}{=} \sum_{j=1}^{k_n} \frac{\bXn_j}{\Vert \bXn_j \Vert},
\end{align*}
and similarly
\begin{align}
    \bhSigmanprime \coloneqq{}&  \frac{1}{k_n } \sum_{j=1}^{k_n} \left( \frac{\bXn_j}{\Vert \bXn_j \Vert} - \frac{1}{k_n} \sum_{i=1}^{k_n} \frac{\bXn_i}{\Vert \bXn_i \Vert} \right) \left( \frac{\bXn_j}{\Vert \bXn_j \Vert} - \frac{1}{k_n} \sum_{i=1}^{k_n} \frac{\bXn_i}{\Vert \bXn_i \Vert} \right)^\top  
     \overset{\mathcal{D}}{=}{}   \bhSigman. \label{eq:cov_distr_equality}
\end{align}
The eigenvalues of $\bhSigmanprime$ are denoted by $\well_{n,1}', \ldots, \well_{n,d_n}'$ and due to  \Cref{eq:cov_distr_equality} we receive that
\begin{align} \label{eq lambda}
    (\well_{n,1}', \ldots, \well_{n,d_n}') \overset{\mathcal{D}}{=} (\well_{n,1}, \ldots, \well_{n,d_n}).
\end{align}
Thus, to prove \Cref{lem:Bai_Yaio2} it suffices to derive the asymptotic behavior of $(\well_{n,1}', \ldots, \well_{n,d_n}')$, which relies on the spectral analysis of the empirical covariance matrix of $\sqrtsbSigman \bVn$.  
Therefore, assume that $\bVn_{1}, \ldots, \bVn_{k_n}$ is an i.i.d. sequence with distribution $\bVn$, i.e. $ \bVn_{i} \in \R^{d_n}$ has i.i.d. entries with mean $0$ and variance $1$. Then we define the sequence of matrices 
\begin{align}
    \sbhSigman \coloneqq \frac{1}{k_n} \sum_{i=1}^{k_n} & \bigg( \sqrtsbSigman \bVn_{i} - \frac{1}{k_n} \sum_{j=1}^{k_n} \sqrtsbSigman \bVn_{j} \bigg)  \nonumber \\
    & \quad  \cdot \bigg( \sqrtsbSigman \bVn_{i} - \frac{1}{k_n} \sum_{j=1}^{k_n} \sqrtsbSigman \bVn_{j} \bigg)^\top, \quad n \in \N, \label{eq:def_gamma_empricial_cov}
\end{align}
whose eigenvalues are denoted by $\hslambda_{n,1} > \cdots > \hslambda_{n, d_n} > 0$. 
The aim now is to write $\bhSigmanprime$ and $ \sbhSigman$ as matrix products. Therefore, define 
\begin{align*}
    \wbVn \coloneqq (\bVn_{1}, \ldots, \bVn_{k_n}) \in \R^{d_n \times k_n}
\end{align*}
and 
\begin{align*}
    \bTn \coloneqq \diag( \Vert \sqrtsbSigman \bVn_{1} \Vert^{-1}, \ldots, \Vert \sqrtsbSigman \bVn_{k_n} \Vert^{-1}) \in \R^{k_n \times k_n},
\end{align*}
  which allows us to write 
\begin{align*}
 \biggl( \frac{\bXn_1}{\Vert \bXn_1 \Vert}, \ldots, \frac{\bXn_{k_n}}{\Vert \bXn_{k_n} \Vert} \biggr)^{\top}  
 &= \sqrtsbSigman  \wbVn \bTn .
\end{align*} 
Finally, with the projection matrix $\bCn \coloneqq (\bI_{k_n} -  \mathbf{1}_{k_n} \mathbf{1}_{k_n}^\top /{k_n})$, the  matrices  $ \bhSigmanprime$ and $\sbhSigman$, as defined in \Cref{eq:def_gamma_empricial_cov,eq:cov_distr_equality}, can be written as
\begin{equation}
    \begin{aligned} \label{eq:representation_emp_cov_mat}
     \bhSigmanprime &=  \frac{1}{k_n} ( \wbTheta \bCn ) (  \wbTheta \bCn)^\top, \\ 
     \sbhSigman &= \frac{1}{ k_n} ( \sqrtsbSigman  \wbVn \bCn) ( \sqrtsbSigman  \wbVn \bCn)^\top. 
\end{aligned}
\end{equation}

In the following theorem the connection between the eigenvalues $\hslambda_{n,i}$ and $d_n \wellprime_{n,i}$ is derived.

\begin{theorem} \label{th:ev_high_dim_asymptotic} 
    Let \Cref{asu:high_dim} be given. Suppose that $\bGamman \rightarrow \bGamma$ and $(\slambda{1}, \ldots, \slambda{\ps}) \rightarrow (\slambdaohnen_1, \ldots, \slambdaohnen_{\ps})$ as $\ninf$. If  $\hslambda_{n,1}, \ldots, \hslambda_{n,d_n}$ denote the eigenvalues of $\sbhSigman$ in \Cref{eq:def_gamma_empricial_cov} and $\wellprime_{n,1}, \ldots, \wellprime_{n,d_n}$ denote the eigenvalues of $ \bhSigmanprime$ in \Cref{eq:cov_distr_equality}, then as $n\to\infty$,
    \begin{align*}
        \max_{1 \le i \le d_n}  \big\vert   \hslambda_{n,i}  -  d_n \wellprime_{n,i}  \big\vert \Pconv 0. 
    \end{align*}
\end{theorem}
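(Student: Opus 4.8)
The plan is to reduce the statement to an operator-norm estimate and then invoke Weyl's perturbation inequality (Theorem A.46 in \citet{Bai:Silverstein:2010}, the tool already used in the proof of \Cref{cor:ev_sqrtn_normality} (a)): since $\max_{1\le i\le d_n}|\hslambda_{n,i}-d_n\wellprime_{n,i}|\le\Vert\sbhSigman-d_n\bhSigmanprime\Vert$, it suffices to show $\Vert d_n\bhSigmanprime-\sbhSigman\Vert\Pconv 0$. To prepare for this, I would first renormalise the diagonal matrix $\bTn$ in \eqref{eq:representation_emp_cov_mat}, writing $\bTn=d_n^{-1/2}\bTn'$ with $\bTn'=\diag(t_{n,1},\ldots,t_{n,k_n})$ and $t_{n,j}:=\sqrt{d_n}/\Vert\sqrtsbSigman\bVn_j\Vert$. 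The representations in \eqref{eq:representation_emp_cov_mat} then become $d_n\bhSigmanprime=\tfrac1{k_n}(\sqrtsbSigman\wbVn\bTn'\bCn)(\sqrtsbSigman\wbVn\bTn'\bCn)^\top$ and $\sbhSigman=\tfrac1{k_n}(\sqrtsbSigman\wbVn\bCn)(\sqrtsbSigman\wbVn\bCn)^\top$, so the two matrices differ only through the insertion of the diagonal factor $\bTn'$.

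The core step is to prove $\delta_n:=\Vert\bTn'-\bI_{k_n}\Vert=\max_{1\le j\le k_n}|t_{n,j}-1|\Pconv 0$, i.e.\ that the normalising constants $\Vert\sqrtsbSigman\bVn_j\Vert$ are uniformly of order $\sqrt{d_n}$. Writing $\Vert\sqrtsbSigman\bVn_j\Vert^2=\Vert\bVn_j\Vert^2+\bVn_j^\top(\sbSigman-\bI_{d_n})\bVn_j$ and using that $\sbSigman-\bI_{d_n}$ has rank at most $\ps$ and operator norm at most $\slambda{1}+1=O(1)$, the spiked part obeys $\max_{1\le j\le k_n}\tfrac1{d_n}|\bVn_j^\top(\sbSigman-\bI_{d_n})\bVn_j|\le(\slambda{1}+1)\ps\,\tfrac1{d_n}\max_{j\le k_n,\,i\le\ps}V_{j,i}^2=o_\P(1)$, since $\E[|V_1|^4]<\infty$ makes the maximum of $O(k_n)$ copies of $V_1^2$ equal to $o_\P(k_n^{1/2})$ and $d_n$ is of the same order as $k_n$. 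It then remains to show $\max_{1\le j\le k_n}\bigl|\tfrac1{d_n}\sum_{i=1}^{d_n}(V_{j,i}^2-1)\bigr|\Pconv 0$. Because only a fourth moment of $V_1$ is assumed, a direct second-moment union bound over the $k_n$ (which is of the order of $d_n$) indices $j$ is insufficient, and one has to truncate the $V_{j,i}$ at a level growing slowly with $d_n$, estimate the truncated fourth moments and the probability of the discarded event, and only then apply a union bound. This uniform control of the norms, in the spirit of the spiked-model arguments of \citet{johnstone:2018} and \citet{jiang:2023}, is the step I expect to be the main obstacle.

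Granting $\delta_n\Pconv 0$, the perturbation estimate is routine. Set $\bB:=\sqrtsbSigman\wbVn\bCn$ and $\bE:=\sqrtsbSigman\wbVn(\bTn'-\bI_{k_n})\bCn$, so that $\sqrtsbSigman\wbVn\bTn'\bCn=\bB+\bE$ and hence $d_n\bhSigmanprime-\sbhSigman=\tfrac1{k_n}(\bB\bE^\top+\bE\bB^\top+\bE\bE^\top)$. With $\Vert\bCn\Vert=1$ and $\Vert\bTn'-\bI_{k_n}\Vert=\delta_n$ this gives $\Vert\bE\Vert\le\delta_n\Vert\sqrtsbSigman\wbVn\Vert$, $\Vert\bB\Vert\le\Vert\sqrtsbSigman\wbVn\Vert$, and therefore $\Vert d_n\bhSigmanprime-\sbhSigman\Vert\le\tfrac1{k_n}\Vert\sqrtsbSigman\wbVn\Vert^2\,(2\delta_n+\delta_n^2)$. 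Finally $\tfrac1{k_n}\Vert\sqrtsbSigman\wbVn\Vert^2=\lambda_{\max}\bigl(\tfrac1{k_n}\sqrtsbSigman\wbVn\wbVn^\top\sqrtsbSigman\bigr)\le\slambda{1}\,\lambda_{\max}\bigl(\tfrac1{k_n}\wbVn\wbVn^\top\bigr)$, and $\lambda_{\max}\bigl(\tfrac1{k_n}\wbVn\wbVn^\top\bigr)\to(1+\sqrt c)^2$ almost surely by the Bai--Yin theorem (see \citet{Bai:Silverstein:2010}), which applies because the entries of $\wbVn$ are i.i.d.\ with mean $0$, variance $1$ and finite fourth moment and $d_n/k_n\to c$; combined with $\slambda{1}\to\slambdaohnen_1<\infty$ this yields $\tfrac1{k_n}\Vert\sqrtsbSigman\wbVn\Vert^2=O_\P(1)$. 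Hence $\Vert d_n\bhSigmanprime-\sbhSigman\Vert=O_\P(1)\cdot(2\delta_n+\delta_n^2)=o_\P(1)$, and Weyl's inequality finishes the argument.

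Summarising, apart from the uniform norm estimate $\delta_n\Pconv 0$, the proof only uses the algebraic identity for the difference of the two sample covariance matrices, elementary operator-norm bookkeeping, and the classical Bai--Yin and Weyl bounds; I would therefore expect most of the effort to go into the truncation argument controlling $\delta_n$.
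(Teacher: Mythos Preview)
Your proposal is correct and follows essentially the same route as the paper: reduce to a Weyl/Hoffman--Wielandt perturbation bound, factor out $\Vert\bCn\Vert=1$ and the Bai--Yin bound $\tfrac{1}{k_n}\Vert\sqrtsbSigman\wbVn\Vert^2\le\slambda{1}\lambda_{\max}(\tfrac{1}{k_n}\wbVn\wbVn^\top)=O_\P(1)$, and then prove $\delta_n=\Vert\sqrt{d_n}\,\bTn-\bI_{k_n}\Vert\Pconv0$ by splitting $\Vert\sqrtsbSigman\bVn_j\Vert^2$ into its spiked and non-spiked parts. The only notable differences are cosmetic: the paper applies Weyl to the rectangular factors (i.e.\ to singular values $\sqrt{\hslambda_{n,i}}$ and $\sqrt{d_n\wellprime_{n,i}}$) rather than to the covariance matrices directly, and for the step you flag as the main obstacle---$\max_{j\le k_n}\big|\tfrac{1}{d_n}\sum_{i=1}^{d_n}(V_{j,i}^2-1)\big|\to0$---the paper simply invokes \citet[Lemma~2]{Bai:Yin:1993}, which gives this uniform law of large numbers almost surely under the fourth-moment assumption, so no ad hoc truncation argument is needed.
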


\begin{proof}[Proof of \Cref{th:ev_high_dim_asymptotic}]
    Due to Theorem A.46 in \citet{Bai:Silverstein:2010} and the sub-multiplicativity of the spectral norm we receive that
     \begin{align}
        \max_{1 \le i \le d_n} \left\vert \sqrt{  \hslambda_{n,i} } - \sqrt{ d_n \wellprime_{n,i} } \right\vert  &\le   \left\Vert \frac{\sqrt{d_n} \sqrtsbSigman  \wbVn  \bTn  \bCn }{\sqrt{k_n}}  - \frac{\sqrtsbSigman  \wbVn \bCn }{ \sqrt{k_n}}  \right\Vert \nonumber \\
        &\le   \left\Vert  \bCn \right\Vert \cdot \left\Vert \sqrt{d_n}   \bTn - \bI_{k_n} \right\Vert \cdot \left\Vert \frac{\sqrtsbSigman  \wbVn}{  \sqrt{k_n}} \right\Vert\nonumber \\
        &=   \left\Vert \sqrt{d_n}   \bTn - \bI_{k_n} \right\Vert \cdot \left\Vert \frac{\sqrtsbSigman  \wbVn}{ \sqrt{k_n}} \right\Vert=:J_n\cdot H_n, \label{eq:ev_high_dim_1}
    \end{align}
where we used that the spectral norm of $\bCn$ is bounded by $1$, because  the only eigenvalues of $\bCn$ are $1$ and $0$ as $\bCn$ is a projection matrix. 

\textbf{Step 1.} First, we show that $J_n$ in \eqref{eq:ev_high_dim_1} converges to $0$
 in probability. Therefore, we use the  partitioning of the random vector $ \sqrtsbSigman \bVn_j$ into the first $\ps$ dependent entries and the remaining $d_n -\ps$ independent entries
\begin{align*}
    \sqrtsbSigman \bVn_j = \begin{pmatrix}
        \bGamman^{1/2} \bVn_{j, \{1, \ldots, \ps\} }\\
        \bVn_{j, \{\ps + 1, \ldots, d_n\} }
    \end{pmatrix}
     \eqqcolon \begin{pmatrix}
        (\Un_{j, 1}, \ldots,\Un_{j,  \ps})^\top\\
        (V_{j , (\ps+1)}, \ldots, V_{j,  d_n})^\top
    \end{pmatrix}. 
\end{align*} 
The eigenvalues of $  (\sqrt{d_n}    \bTn - \bI_{k_n} )$ correspond to the  the diagonal entries.  Since we apply the spectral norm, we receive that
\begin{align*}
    J_n^{1/2}=\left\Vert \sqrt{d_n}   \bTn - \bI_{k_n} \right\Vert^{1/2}=\max_{1 \le i \le k_n}   \left\vert   \frac{  \sqrt{d_n}}{ \big( \sum_{l=1}^{\ps} {\Un_{i,l}}^2 +  \sum_{l={\ps}+1}^{d_n} {V_{i,l}}^2 \big)^{1/2}} -1 \right\vert.
\end{align*}
On the one hand, by  $\E[ {V_{i,j}}^2] = 1,  d_n/k_n \rightarrow c > 0$  and \citet[Lemma 2]{Bai:Yin:1993}  we obtain that as $\ninf$
\begin{align*}
    \max_{1 \le i \le k_n} \left\vert   \frac{  \sum_{l={\ps}+1}^{d_n} {V_{i,l}}^2 }{  d_n} -1 \right\vert  \asconv 0.
\end{align*}
On the other hand,  for $1 \le i \le k_n$,
\begin{align*}
   \sum_{l=1}^{\ps} {\Un_{i,l}}^2 =  \Vert  \bGamman^{1/2} \bVn_{i, \{1, \ldots, \ps\} } \Vert^2 \le \Vert  \bGamman^{1/2} \Vert^2   \Vert  \bVn_{i, \{1, \ldots, \ps\} } \Vert^2 = {\slambda{1}} { \sum_{l=1}^{\ps} {V_{i,l}}^2}. 
\end{align*}
{Since the second moment of $V_{1}^2$ exists, we can conclude from 
Markov inequality  for $\varepsilon >0$ 
\begin{align}
    \P \left( \frac{\slambda{1}}{d_n} \max_{1 \le i \le k_n} \left\vert    \sum_{l=1}^{p^*} {V_{i,l}}^2 \right\vert > \varepsilon\right) &\le \sum_{i=1}^{k_n} \P \left(   \left\vert    \sum_{l=1}^{p^*} {V_{i,l}}^2 \right\vert > \frac{d_n}{\slambda{1}} \varepsilon\right) \nonumber \\
    & = k_n \P \left(   \left\vert    \sum_{l=1}^{p^*} {V_{1,l}}^2 \right\vert > \frac{d_n}{\slambda{1}} \varepsilon\right) \nonumber \\
    &\le k_n \frac{\slambda{1}^2}{d_n^2 \varepsilon^2} \E \left\vert    \sum_{l=1}^{p^*} {V_{1,l}}^2 \right\vert^2 , \label{eq:conv_vn}
\end{align}
where the right-hand side converges to $0$ as $\ninf$, since  $k_n / d_n \to c^{-1}$ and $\slambda{1}^2/d_n \to 0$ as $\ninf$. Therefore, we get
\begin{align*}
     \max_{1 \le i \le k_n}  \left\vert  \frac{  \sum_{l=1}^{\ps} {\Un_{i,l}}^2  }{  d_n} \right\vert &\leq \frac{\slambda{1}}{d_n} \max_{1 \le i \le k_n} \left\vert    \sum_{l=1}^{p^*} {V_{i,l}}^2 \right\vert \Pconv 0.
\end{align*}
}
To summarize,
  \begin{align*}
    \max_{1 \le i \le k_n} &  \left\vert  \left( \frac{ \sum_{l=1}^{\ps} {\Un_{i,l}}^2 +  \sum_{l={\ps}+1}^{d_n} {\Vn_{i,l}}^2 }{  d_n} \right) -1 \right\vert \nonumber \\
    & \le   \max_{1 \le i \le k_n}   \left\vert \left(  \frac{  \sum_{l=1}^{\ps} {\Un_{i,l}}^2  }{  d_n} \right) \right\vert + \max_{1 \le i \le k_n}   \left\vert  \left( \frac{  \sum_{l={\ps}+1}^{d_n} {\Vn_{i,l}}^2 }{  d_n} \right) -1 \right\vert \Pconv 0.
\end{align*}
Finally, by the mean value theorem the inequality 
\begin{align*}
     \left\vert 1 - 1/ \sqrt{x}   \right\vert  
     \le   2 \left\vert x -1 \right\vert \quad \text{ for }  x> \frac{1}{2}
 \end{align*}
holds and hence, as $n\to\infty$,
  \begin{align} 
     J_n^{1/2}=\max_{1 \le i \le k_n}    \left\vert 1 -  \left( \frac{1}{d_n} \sum_{l=1}^{\ps} {\Un_{i,l}}^2 + \frac{1}{d_n} \sum_{l={\ps}+1}^{d_n} {V_{i,l}}^2 \right)^{-1/2} \right\vert \Pconv 0. \label{eq:tn_conv}
 \end{align}
 \textbf{Step 2.} Next, we show that $H_n$ in \eqref{eq:ev_high_dim_1} is $\P$-a.s. bounded.  By    \citet[Theorem 3.1 ]{yin:bai:krishnaiah:1988} (cf. \citet[Theorem 5.8]{Bai:Silverstein:2010}) 
\begin{eqnarray*}
    H_n=\left\Vert \frac{\sqrtsbSigman  \wbVn}{  \sqrt{k_n}} \right\Vert    \le \left\Vert \sqrtsbSigman \right\Vert \cdot \left\Vert \frac{\wbVn}{  \sqrt{k_n}} \right\Vert^2 =  \slambda{1} \frac{\lambda_{\text{max}}({\wbVn}^\top \wbVn)}{k_n} 
    \asconv  \slambdaohnen_{1} 
\end{eqnarray*}
as $n\to\infty$, where $\lambda_{\text{max}}(\cdot)$ denotes the largest eigenvalue of a matrix.

Finally, a combination of \eqref{eq:ev_high_dim_1}, Step 1 and Step 2 result in the statement.
\end{proof}

\begin{remark} \label{rem:order_slambda}
    For the convergence of the right-hand side of \Cref{eq:conv_vn} and hence, \eqref{eq:tn_conv} to zero, it is not necessary that $\slambda{1}$ is bounded; it is sufficient that $\slambda{1} = o(\sqrt{d_n})$ as $n\to\infty$. 
        But if all moments of $V_1$ exist, it is even sufficient to assume that $\slambda{1} = o(d_n^\beta)$ as $n\to\infty$ for some $\beta < 1$. Indeed, 
    we get analog to \Cref{eq:conv_vn} for $\varepsilon > 0$  that
    \begin{align*}
    \P \left( \frac{\slambda{1}}{d_n} \max_{1 \le i \le k_n} \left\vert    \sum_{l=1}^{p^*} {V_{i,l}}^2 \right\vert > \varepsilon\right) & \le k_n \frac{\slambda{1}^{1/(1-\beta)}}{d_n^{1/(1-\beta)}} \varepsilon^{1/(1-\beta)} \E \left[ \left\vert    \sum_{l=1}^{p^*} {V_{1,l}}^2 \right\vert^{1/(1-\beta)}  \right] \to 0
\end{align*}
as $n\to\infty$, since $k_n / d_n \to c$ and $\slambda{1}^{(1-\beta)^{-1}} /d_n^{(1-\beta)^{-1}-1}=(\slambda{1} /d_n^{\beta})^{1/(1-\beta)} = o( 1)$  as $n\to\infty$.
\end{remark}

Next, we repeat results on the asymptotic distribution of the eigenvalues of $\sbhSigman$ which is mainly based on \citet{Bai:Yao:2012} and \citet{Bai:Choi:Fujikoshi:2018}.
 
\begin{lemma} \label{lem:Bai_Yaio} 
    Let   \Cref{asu:high_dim} be given. Suppose that  $\bGamman \rightarrow \bGamma$ and $(\slambda{1}, \ldots, \slambda{\ps}) \rightarrow (\slambdaohnen_1, \ldots, \slambdaohnen_{\ps})$ as $\ninf$ with $\slambdaohnen_{\ps} > 1 + \sqrt{c}$. Then the following statements hold.
    \begin{enumerate}[(a)]
        \item If $1 \le i \le \ps$ (i.e. $\slambdaohnen_i > 1 + \sqrt{c}$), then $\hslambda_{n,i} \asconv \varphi_c(\slambdaohnen_i)$ as $\ninf$.
        \item Define $l^*:=0$ if $ c\leq 1$ and $l^*:=1-c^{-1}$ if $ c> 1$. Then 
        \begin{align*}
     \lim_{n\to\infty}\sup_{\alpha\in(l^*,1)} \left\vert F^{\sbhSigman\,\leftarrow}(\alpha)-F_{c}^{\leftarrow}(\alpha) \right\vert=0 \quad \P\text{-a.s.},
\end{align*}
 where $F^{\sbhSigman\,\leftarrow}$ is the generalized inverse of the empirical spectral distribution function of $\sbhSigman$ and  $F_{c}(x)$ is defined as in \Cref{lem:Bai_Yaio2}.
        \item If  $i_n(\alpha) > \ps$ (i.e. $\slambdaohnen_{i_n(\alpha)} = 1$) and $i_n(\alpha)/d_n \rightarrow \alpha \in (0,1)$ as $\ninf$, then 
        \begin{align*}
       \sup_{\alpha\in(0,1)} \left\vert
        \hslambda_{n,i_n(\alpha)} - F^\leftarrow_c(1 - \alpha) \right\vert&\asconv 0, \quad  \text{ as } \ninf.
        \end{align*}
        In particular, if $(q_n)_{n\in\N}$ is a sequence in $\N$ with $q_n=o(d_n)$ as $n\to\infty$ and $q_n>p^*$, then $\hslambda_{n,q_n}\asconv (1+\sqrt{c})^2$.
        \item Suppose $0 < c \le 1$ and $(q_n)_{n\in\N}$ is a sequence in $\N$ with $q_n=o(d_n)$ as $n\to\infty$.
        Then as $n\to\infty$,
        \begin{eqnarray*}
             \frac{1}{d_n-q_n} \sum_{i=q_n+1}^{d_n} \hslambda_{n,i}  \asconv 1
\end{eqnarray*}
        and for $q_n>p^*$ we receive that $\hslambda_{n,q_n}\asconv (1+\sqrt{c})^2$.
        \item Suppose $c>1$ and $(q_n)_{n\in\N}$ is a sequence in $\N$ with $q_n=o(d_n)$ as $n\to\infty$.
        Then as $n\to\infty$,
        \begin{eqnarray*}
             \frac{1}{k_n-q_n} \sum_{i=q_n+1}^{k_n} \hslambda_{n,i}  \asconv c
\end{eqnarray*}
        and for $q_n>p^*$ we receive that $\hslambda_{n,q_n}\asconv (1+\sqrt{c})^2$.
    \end{enumerate}
\end{lemma}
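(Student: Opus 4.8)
The overall plan is to reduce everything to known facts about the generalized spiked population model and then read off the five statements. Let $\bG_n:=\frac1{k_n}\sqrtsbSigman\wbVn\wbVn^\top\sqrtsbSigman$ be the \emph{uncentered} sample covariance matrix of the i.i.d.\ columns $\sqrtsbSigman\bVn_1,\dots,\sqrtsbSigman\bVn_{k_n}$, whose common population covariance is $\sbSigman=\diag(\bGamman,\bI_{d_n-\ps})$. From the representation $\sbhSigman=\frac1{k_n}(\sqrtsbSigman\wbVn\bCn)(\sqrtsbSigman\wbVn\bCn)^\top$ in \eqref{eq:representation_emp_cov_mat} together with $\bCn^2=\bCn$ and $\bCn=\bI_{k_n}-\mathbf 1_{k_n}\mathbf 1_{k_n}^\top/k_n$, one gets $\sbhSigman=\bG_n-k_n^{-2}\sqrtsbSigman\bigl(\sum_{i}\bVn_i\bigr)\bigl(\sum_{i}\bVn_i\bigr)^\top\sqrtsbSigman$, so $\sbhSigman$ is obtained from $\bG_n$ by subtracting a rank-one positive semidefinite matrix. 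By Cauchy interlacing for rank-one perturbations, $\hslambda_{n,i+1}(\bG_n)\le\hslambda_{n,i}\le\hslambda_{n,i}(\bG_n)$ for every $i$, i.e.\ the ordered eigenvalues of $\sbhSigman$ and of $\bG_n$ differ only by a shift of at most one index. Every claim below is invariant under such a shift — (a) concerns finitely many largest eigenvalues, (b)--(c) are statements about empirical quantiles (perturbed by $O(1/d_n)$), and (d)--(e) are Ces\`aro averages over $\asymp d_n$ (resp.\ $\asymp k_n$) terms — so it suffices to prove everything for $\bG_n$, which is precisely the generalized spiked population model of \citet{Bai:Yao:2012} with base limiting spectral distribution $\delta_1$ (the LSD of $\sbSigman$), aspect ratio $c$, and $\ps$ fixed spikes; the centering step can also be handled following \citet{jiang:2023,johnstone:2018}.

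For (a), since $\slambda{i}\to\slambdaohnen_i\ge\slambdaohnen_{\ps}>1+\sqrt c$, the first $\ps$ population eigenvalues are eventually distant spikes, and the generalized spiked-model result of \citet{Bai:Yao:2012} — valid under $\E|V_1|^4<\infty$ — gives $\hslambda_{n,i}\asconv\varphi_c(\slambdaohnen_i)$ for $i=1,\dots,\ps$; here one uses that $\varphi_c$ is strictly increasing on $(1+\sqrt c,\infty)$ with $\varphi_c(1+\sqrt c)=(1+\sqrt c)^2$, so the corresponding sample eigenvalues separate from the bulk.

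For (b) and (c) I would combine two inputs. First, because the empirical spectral distribution of $\sbSigman$ converges to $\delta_1$ and $\ps/d_n\to0$, the Mar\v{c}enko-Pastur theorem (\citet{marchenko:pastur:1967}; see also \citet[Ch.~3]{Bai:Silverstein:2010}) yields that the empirical spectral distribution of $\bG_n$ converges weakly, $\P$-a.s., to $F_c$, with its point mass $1-c^{-1}$ at $0$ when $c>1$, consistently with $l^*$. Second, the extreme bulk eigenvalues converge to the support edges: $\hslambda_{n,\ps+1}\asconv(1+\sqrt c)^2$ by \citet[Thm.~3.1]{yin:bai:krishnaiah:1988} (the $\ps$ largest eigenvalues being the spikes, this is the top of the bulk), and for $0<c<1$ also $\hslambda_{n,d_n}\asconv(1-\sqrt c)^2$ by \citet{Bai:Yin:1993}. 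Since $F_c$ is continuous and strictly increasing on the interior of its support, weak convergence of the distribution functions upgrades to uniform convergence of the generalized inverses on every compact subinterval of $(l^*,1)$, and the two edge limits extend this uniformly to all of $(l^*,1)$ — this is (b). Statement (c) then follows: if $i_n(\alpha)/d_n\to\alpha$ then $F^{\sbhSigman}(\hslambda_{n,i_n(\alpha)})=1-\alpha+O(1/d_n)$, so (b) gives $\hslambda_{n,i_n(\alpha)}\to F_c^\leftarrow(1-\alpha)$ uniformly in $\alpha\in(0,1)$; and for $q_n=o(d_n)$ with $q_n>\ps$ one sandwiches $\hslambda_{n,\lceil\alpha d_n\rceil}\le\hslambda_{n,q_n}\le\hslambda_{n,\ps+1}$ for each fixed small $\alpha>0$ and lets $\alpha\downarrow0$, using $F_c^\leftarrow(1-\alpha)\uparrow(1+\sqrt c)^2$.

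For (d) and (e) a direct trace computation suffices, since $\sum_{i}\hslambda_{n,i}=\tr(\sbhSigman)$. From $\E[\bVn_i\bVn_i^\top]=\bI_{d_n}$ one gets $\E[\tr(\sbhSigman)]=\tfrac{k_n-1}{k_n}\tr(\sbSigman)=\tfrac{k_n-1}{k_n}\bigl(\sum_{i=1}^{\ps}\slambda{i}+d_n-\ps\bigr)$, and computing the variance using the finite fourth moment gives concentration, which I would turn into $\P$-a.s.\ convergence via Chebyshev and Borel--Cantelli along a polynomially growing subsequence together with monotonicity in between. As $\sum_{i=1}^{\ps}\slambda{i}=O(1)$ and $\ps$ is fixed, this yields $d_n^{-1}\tr(\sbhSigman)\asconv1$ and $k_n^{-1}\tr(\sbhSigman)\asconv c$. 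Removing the first $q_n$ eigenvalues is harmless: the first $\ps$ are $O(1)$ by (a) and each of $\hslambda_{n,\ps+1},\dots,\hslambda_{n,q_n}$ is at most $\hslambda_{n,\ps+1}$, which is a.s.\ bounded, so $d_n^{-1}\sum_{i\le q_n}\hslambda_{n,i}\to0$ when $q_n=o(d_n)$; for $c>1$ one also uses $\hslambda_{n,k_n}=\cdots=\hslambda_{n,d_n}=0$ a.s.\ for large $n$ (rank $\le k_n-1$), so that $\sum_{i=q_n+1}^{k_n}\hslambda_{n,i}=\tr(\sbhSigman)-\sum_{i\le q_n}\hslambda_{n,i}$. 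Combining the pieces gives the two Ces\`aro limits, and the remaining ``in particular'' assertions $\hslambda_{n,q_n}\asconv(1+\sqrt c)^2$ are the special case of (c) already treated. I expect the main obstacle to be the endpoint behavior in (b): weak convergence of distribution functions only gives uniform convergence of quantile functions on compact subsets of the interior of the support, so the cases $\alpha\uparrow1$ and $\alpha\downarrow l^*$ genuinely need the almost-sure convergence of the extreme bulk eigenvalues to $(1\pm\sqrt c)^2$, which is precisely where the finite-fourth-moment hypothesis is essential; a secondary difficulty is upgrading both the trace concentration in (d)--(e) and the quoted spiked-model limits from convergence in probability to almost-sure convergence.
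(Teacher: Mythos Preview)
Your proposal is correct and broadly in line with the paper, but the arguments you give for (b), (d), and (e) differ from the paper's in instructive ways.

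For (b), the paper avoids the edge-eigenvalue machinery entirely: it simply cites the a.s.\ weak convergence $F^{\sbhSigman}\Rightarrow F_c$ (via \citet{silverstein:1995}), invokes Polya's theorem to get uniform convergence of the distribution functions on the support of $F_c$, and then applies a standard quantile-convergence lemma (\citet[Lemma 1.1.1]{HF:2006}) together with Polya again to pass to uniform convergence of the generalized inverses. Your route---first getting compact-subset uniformity and then using the a.s.\ edge limits from \citet{yin:bai:krishnaiah:1988} and \citet{Bai:Yin:1993} to close the endpoints---is more explicit about where the finite-fourth-moment hypothesis enters, and arguably more honest about the endpoint behavior (your ``main obstacle'' is real: near $\alpha=1$ the empirical quantile picks up the $\ps$ spike eigenvalues, so some care is needed). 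The paper's version is shorter but somewhat elliptical on this point.

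For (d) and (e), your trace-plus-variance-plus-Borel--Cantelli argument works but is considerably heavier than necessary. The paper observes that
\[
\frac{1}{d_n-q_n}\sum_{i=q_n+1}^{d_n}\hslambda_{n,i}=\frac{d_n}{d_n-q_n}\int_0^{1-q_n/d_n}F^{\sbhSigman\,\leftarrow}(1-\alpha)\,d\alpha,
\]
and then simply passes to the limit using the uniform quantile convergence already established in (b) together with $\int_0^1 F_c^\leftarrow(1-\alpha)\,d\alpha=1$; part (e) is identical with the integration range $[1-k_n/d_n,\,1-q_n/d_n]$. This integral representation makes (d)--(e) immediate corollaries of (b), with no separate concentration argument and no subsequence trick.

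Your rank-one interlacing reduction from $\sbhSigman$ to the uncentered $\bG_n$ is a clean way to handle the centering; the paper does not isolate this step explicitly but relies on references (in particular \citet{Bai:Choi:Fujikoshi:2018}) that already treat the centered sample covariance.
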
 
\begin{proof} $\mbox{}$
\begin{enumerate}[(a)]
    \item When the eigenvalues  $(\slambda{1}, \ldots, \slambda{\ps}) = (\slambdaohnen_1, \ldots, \slambdaohnen_{\ps})$ do not depend on $n$, (a) goes back to \citet[Theorem 4.1]{Bai:Yao:2012} (cf.  \citet[Lemma 2.1]{Bai:Choi:Fujikoshi:2018}).
    In the case $\bGamman \rightarrow \bGamma$ and $
        (\slambda{1}, \ldots, \slambda{\ps}) \rightarrow  (\slambdaohnen_1, \ldots, \slambdaohnen_{\ps}) $
    as $\ninf$ the assertion also holds because by \citet[Theorem A.46]{Bai:Silverstein:2010} it can be shown with the same arguments as before that 
    \begin{align*}
        \max_{1 \le i \le \ps} & \left\vert \sqrt{\hslambda_{n,i} ( \bGamman )} - \sqrt{\hslambda_{n,i}( \bGamma)} \right\vert \le \Vert \bGamman - \bGamma  \Vert \left\Vert \frac{\wbVn}{  \sqrt{k_n}} \right\Vert \Vert \bCn \Vert \asconv 0,
    \end{align*}
    where $\hslambda_{n,i} ( \bGamman  )$ and $\hslambda_{n,i} ( \bGamma  )$ is the empirical eigenvalue when $\bGamman$ and $\bGamma$, respectively is used.
    \item The second part is similar to \citet[Theorem 4.1]{Bai:Yao:2012} however, the wording is not clear and therefore we prefer to include the proper statement and proof here.
    Note, if $d_n / k_n \rightarrow c > 0$ as $\ninf$, then for almost all $\omega\in\Omega$, $F^{\sbhSigman}(\omega)$ converges in distribution to $F_{c}$ (cf. \citet[p. 1054]{Bai:Choi:Fujikoshi:2018}, \citet[Theorem 1.1]{silverstein:1995}). This means that there exists a set $\Omega_0\in\mathcal{F}$ with $\P(\Omega_0)=1$ and for any $\omega \in\Omega_0$ and any continuity point $x \in \R$ of $F_c$,
\begin{align*}
    \lim_{n\to\infty}F^{\sbhSigman}(x,\omega)= F_{c}(x). 
\end{align*}
Since the distribution function $F_c$ is continuous on the interval $I:=((1-\sqrt{c})^2,(1+\sqrt{c})^2)$, a conclusion of Polya's Theorem  is the uniform convergence 
\begin{align*}
    \lim_{n\to\infty}\sup_{x\in I} \left\vert F^{\sbhSigman}(x,\omega)-F_{c}(x) \right\vert=0, 
\end{align*}
which implies by \citet[Lemma 1.1.1]{HF:2006} and again Polya's Theorem as well as the uniform convergence of the quantile function
\begin{align*}
     \lim_{n\to\infty}\sup_{\alpha\in(l^*,1)} \left\vert F^{\sbhSigman\,\leftarrow}(\alpha,\omega)-F_{c}^{\leftarrow}(\alpha) \right\vert=0. 
\end{align*}
\item Since $\hslambda_{n,i_n(\alpha)} =F^{\sbhSigman\,\leftarrow}(1- i_n(\alpha)/d_n)$ the statement follows directly from (b).
\item Due to (b), we receive that
\begin{eqnarray*}
             \frac{1}{d_n-q_n} \sum_{i=q_n+1}^{d_n} \hslambda_{n,i} &=&\frac{d_n}{d_n-q_n}\int_0^{1-\frac{q_n}{d_n}} F^{\sbhSigman\,\leftarrow}(1-\alpha)\,d\alpha \asconv  1\cdot \! \int_0^1 \! F_{c}^{\leftarrow}(1-\alpha)\,d\alpha=1.
\end{eqnarray*}
\item Similarly to (d) we have
\begin{eqnarray*}
             \frac{1}{k_n-q_n} \sum_{i=q_n+1}^{k_n} \hslambda_{n,i} &=&\frac{d_n}{k_n-q_n}\int_{1-\frac{k_n}{d_n}}^{1-\frac{q_n}{d_n}} F^{\sbhSigman\,\leftarrow}(1-\alpha)\,d\alpha  \\
             &\asconv &  c\cdot \int_{1-c^{-1}}^1F_{c}^{\leftarrow}(1-\alpha)\,d\alpha=c. 
\end{eqnarray*}
\end{enumerate}
\end{proof}

Finally, we have all auxiliary results for the proof of \Cref{lem:Bai_Yaio2}.

\begin{proof}[Proof of \Cref{lem:Bai_Yaio2}]  (a) An assumption is that $\slambdaohnen_i > 1 + \sqrt{c}$  and hence,  $\slambdaohnen_i$ is a distant spiked eigenvalue for $i = 1, \ldots, \ps$. A conclusion of \Cref{lem:Bai_Yaio}(a) is then that $\hslambda_{n,i} \asconv \varphi_c(\slambdaohnen_i)$. Combined with \Cref{th:ev_high_dim_asymptotic} we receive that
    $d_n \wellprime_{n,i} \Pconv \varphi_c(\slambdaohnen_i)$ as $\ninf$.
    Due to \eqref{eq lambda}, the identical distribution of $\wellprime_{n,i}$ and $\well_{n,i}$, we obtain the final statement, $ d_n \well_{n,i} \Pconv \varphi_c(\slambdaohnen_i)$ as $\ninf$.
    
 Similarly as in (a), the statements  (b)-(d) are combinations of \Cref{lem:Bai_Yaio},
    \Cref{th:ev_high_dim_asymptotic} and \eqref{eq lambda}. 
\end{proof}

\subsection{Proof of \Cref{th:Bai_Lemma2_2_lambda}} 

For the proof of \Cref{th:Bai_Lemma2_2_lambda}, \Cref{th:ev_high_dim_asymptotic} is not useful and an adapted version does not exist. Therefore, the approach is slightly different.
First, the next lemma gives the asymptotic distribution of the eigenvalues from $\bhSigmanprime$, which is then used for the proof of \Cref{th:Bai_Lemma2_2_lambda}. 

\begin{lemma} \label{hilf:Theorem36}
       Let \Cref{asu:high_dim}  with  $\slambda{\ps}  \rightarrow \infty$ and $\slambda{1} = o(d_n^{1/2})$ as $\ninf$ be given. If $i \in\{ 1, \ldots, \ps\}$ then
        \begin{align*}
            \frac{d_n \wellprime_{n,i}}{\slambda{i}} \Pconv 1  \quad \text{ as }  \ninf. 
        \end{align*}
\end{lemma}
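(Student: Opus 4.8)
The plan is to reduce, in two steps, the eigenvalues $\wellprime_{n,i}$ to the eigenvalues $\hslambda_{n,i}$ of the non--self-normalized sample covariance $\sbhSigman$ of $\sqrtsbSigman\bVn$ from \eqref{eq:def_gamma_empricial_cov}, whose block structure can then be exploited directly. \emph{Step 1 (removing the self-normalization).} By the representations in \eqref{eq:representation_emp_cov_mat}, $\sqrt{d_n}\,\wbTheta\bCn=\sqrtsbSigman\wbVn(\sqrt{d_n}\,\bTn)\bCn$ whereas the matrix attached to $\sbhSigman$ is $\sqrtsbSigman\wbVn\bCn$, and Step~1 of the proof of \Cref{th:ev_high_dim_asymptotic} shows $\|\sqrt{d_n}\,\bTn-\bI_{k_n}\|\Pconv 0$ (see \eqref{eq:tn_conv}), using only $\slambda{1}=o(d_n^{1/2})$ (cf.\ \Cref{rem:order_slambda}). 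Weyl's inequality for singular values (Theorem~A.46 in \citet{Bai:Silverstein:2010}) then gives
\begin{align*}
 \max_{1\le i\le d_n}\bigl|\sqrt{d_n\wellprime_{n,i}}-\sqrt{\hslambda_{n,i}}\bigr|
 \le \frac{\|\sqrtsbSigman\wbVn\|}{\sqrt{k_n}}\,\bigl\|\sqrt{d_n}\,\bTn-\bI_{k_n}\bigr\|,
\end{align*}
and since $\|\sqrtsbSigman\wbVn\|/\sqrt{k_n}\le\slambda{1}^{1/2}\|\wbVn\|/\sqrt{k_n}=O_\P(\slambda{1}^{1/2})$ by the Bai--Yin theorem and $\max_i(d_n\wellprime_{n,i}\vee\hslambda_{n,i})=O_\P(\slambda{1})$, one obtains $\max_i|d_n\wellprime_{n,i}-\hslambda_{n,i}|=O_\P(\slambda{1}\|\sqrt{d_n}\,\bTn-\bI_{k_n}\|)=o_\P(1)$, hence $o_\P(\slambda{i})$ because $\slambda{i}\ge\slambda{\ps}\to\infty$. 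It thus suffices to prove $\hslambda_{n,i}/\slambda{i}\Pconv 1$ for $i=1,\dots,\ps$.

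\emph{Step 2 (spectrum of $\sbhSigman$).} As in the proof of \Cref{lem:hd_spiked_covariance_structure} one may assume w.l.o.g.\ that $\sbSigman=\bXin=\diag(\slambda{1},\dots,\slambda{\ps},1,\dots,1)$, so that $\sbhSigman=\bXin^{1/2}\widehat{\bC}_n\bXin^{1/2}$ with $\widehat{\bC}_n=\tfrac1{k_n}\wbVn\bCn{\wbVn}^\top$ the centred sample covariance of the i.i.d.\ entries of $\bVn$. Partition $\sbhSigman$ into its leading $\ps\times\ps$ block $\bA_n$, its trailing $(d_n-\ps)\times(d_n-\ps)$ block $\bD_n$, and the off-diagonal block $\bB_n$. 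Since $\ps$ is fixed and $k_n\to\infty$, the $\ps\times\ps$ centred sample covariance of the first $\ps$ coordinates of $\bVn$ equals $\bI_\ps+O_\P(k_n^{-1/2})$ (finite fourth moment), so $\bA_n=\bGamman^{1/2}(\bI_\ps+O_\P(k_n^{-1/2}))\bGamman^{1/2}$ has $i$-th eigenvalue $\slambda{i}(1+o_\P(1))$; the Bai--Yin theorem gives $\|\bD_n\|=O_\P(1)$; and a direct moment computation shows $\bB_n\bB_n^\top$ concentrates around $\tfrac{d_n}{k_n}\bGamman$. Because $\slambda{\ps}\to\infty$, the top $\ps$ eigenvalues of $\sbhSigman$ lie well above the $O_\P(1)$ range of $\bD_n$, and expanding the secular equation $\det\bigl(\bA_n-\mu\bI_\ps-\bB_n(\bD_n-\mu\bI_{d_n-\ps})^{-1}\bB_n^\top\bigr)=0$ about $\mu^{-1}\approx0$ identifies its $i$-th root as the $i$-th eigenvalue of $\bA_n+\mu^{-1}\bB_n\bB_n^\top+(\text{lower order})$, i.e.\ $\slambda{i}(1+o_\P(1))$ up to a shift of order $O_\P(1)$ (the $\varphi_c$-type correction); dividing by $\slambda{i}\to\infty$ gives $\hslambda_{n,i}/\slambda{i}\Pconv 1$, and combining with Step~1 proves the lemma.

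I expect the error bookkeeping to be the main obstacle. Every random fluctuation that enters --- the deviation of $\sqrt{d_n}\,\bTn$ from $\bI_{k_n}$, of the $\ps$-dimensional sample covariance from $\bI_\ps$, of $\bB_n\bB_n^\top$ from its mean --- is of Wishart type, i.e.\ $O_\P(k_n^{-1/2})$ up to the $\bVn$-concentration factors handled in \eqref{eq:conv_vn}, but gets amplified by the spike scale, which is at most $\slambda{1}$; the point to verify is $\slambda{1}\cdot O_\P(k_n^{-1/2})=o_\P(1)$, hence $o_\P(\slambda{\ps})$, which is exactly where the hypothesis $\slambda{1}=o(d_n^{1/2})$ together with $d_n\asymp k_n$ is used, and it is also why the cross block $\bB_n$, although of operator norm $\Theta_\P(\slambda{1}^{1/2})$, perturbs the large eigenvalues only through the $O_\P(1)$ ratio $\bB_n\bB_n^\top/\slambda{i}$ rather than additively at scale $\slambda{1}^{1/2}$.
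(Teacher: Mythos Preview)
Your Step~1 contains the decisive gap, and it is precisely the point the paper flags just before the lemma: ``\Cref{th:ev_high_dim_asymptotic} is not useful and an adapted version does not exist.'' The Weyl bound
\[
\max_i\bigl|\sqrt{d_n\wellprime_{n,i}}-\sqrt{\hslambda_{n,i}}\bigr|
\le \frac{\|\sqrtsbSigman\wbVn\|}{\sqrt{k_n}}\,\bigl\|\sqrt{d_n}\,\bTn-\bI_{k_n}\bigr\|
= O_\P(\slambda{1}^{1/2})\cdot o_\P(1)
\]
does yield $\max_i|d_n\wellprime_{n,i}-\hslambda_{n,i}|=O_\P(\slambda{1})\cdot\|\sqrt{d_n}\,\bTn-\bI_{k_n}\|$, but your claim that this is $o_\P(1)$ fails: $\|\sqrt{d_n}\,\bTn-\bI_{k_n}\|$ is the \emph{maximum} over $k_n$ diagonal entries, each of which is $O_\P(d_n^{-1/2})$ individually, and the paper only proves this maximum is $o_\P(1)$ via Bai--Yin. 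It is \emph{not} $O_\P(k_n^{-1/2})$ as you assert in your last paragraph---that would be the rate for a single entry or an average, not a maximum. Since $\slambda{1}$ can be as large as $o(d_n^{1/2})$, the product $\slambda{1}\cdot\|\sqrt{d_n}\,\bTn-\bI_{k_n}\|$ need not vanish. Even the weaker conclusion $|d_n\wellprime_{n,i}-\hslambda_{n,i}|=o_\P(\slambda{i})$ that you actually need is not available, because the bound carries a factor $\slambda{1}/\slambda{i}$ and there is no hypothesis controlling the ratio of the largest to the smallest spike.

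The paper circumvents this by never comparing $d_n\bhSigmanprime$ to $\sbhSigman$ globally. Instead it applies Courant--Fischer directly to $d_n\bhSigmanprime/\slambda{i}$ and compares the \emph{inner} matrices $d_n\bAn/k_n$ and $\bBn/k_n$ (your $\widehat{\bC}_n$), which are the sample second moments \emph{before} sandwiching by $\bXin^{1/2}$. The key estimate \eqref{A.17}, namely $\|d_n\bAn/k_n-\bBn/k_n\|=o_\P(1)$, involves only $\|\wbVn\|^2/k_n=O_\P(1)$ and $\|\sqrt{d_n}\,\bTn-\bI_{k_n}\|=o_\P(1)$, with no $\slambda{1}$ factor. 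The spike scales enter only through the quadratic forms $\slambda{j}\,\bW_j^\top(d_n\bAn/k_n)\bW_j/\slambda{i}$, where the ratio $\slambda{j}/\slambda{i}\ge 1$ (lower bound) or the decomposition into span$(\bW_i,\dots,\bW_{\ps})$ plus orthogonal complement (upper bound) keeps everything controlled. Your Step~2 sketch for $\sbhSigman$ is essentially Bai--Choi--Fujikoshi's Lemma~2.2 and would be fine on its own; the problem is that Step~1 does not let you transport it back.
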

\begin{proof}
    We proceed similarly to the proof of \citet[Lemma 2.2]{Bai:Choi:Fujikoshi:2018} and use the spectral decomposition of $\sbSigman$. Let $\sbSigman = \bW \bXin {\bW}^\top,$ where $\bW = (\bW_1, \ldots, \bW_{d_n})$ is a $(d_n \times d_n)$-dimensional orthogonal matrix and $\bXin := \diag( \obXi, \bI_{d_n - \ps} ) : = \diag( \slambda{1}, \ldots, \slambda{\ps}, 1, \ldots, 1)\in\R^{d_n\times d_n}$ consists of the eigenvalues of $\sbSigman$.  Then with  representation \eqref{eq:representation_emp_cov_mat} and
\begin{equation}  \label{def:bAn}
    \bAn \coloneqq  \wbVn \bTn \bCn  \bTn  {\wbVn}^\top
\end{equation} 
we receive 
     \begin{align} \label{sigmastrich1}
         \bhSigmanprime &= \frac{1}{ k_n}  \bW \bXin^{1/2} {\bW}^\top   \wbVn \bTn \bCn  \bTn  {\wbVn}^\top \bW \bXin^{1/2} {\bW}^\top \nonumber\\
         &= \frac{1}{ k_n}  \bW \bXin^{1/2} {\bW}^\top   \bAn  \bW \bXin^{1/2} {\bW}^\top.
     \end{align}
     Further, the eigenvectors are partitioned into the first $\ps$ and the remaining eigenvectors by defining $\obWn =  ( \bW_{1}, \ldots, \bW_{\ps})$ in $\R^{d_n\times p^*}$and  $\tbWn = ( \bW_{\ps+1}, \ldots, \bW_{ d_n})$ in  $\R^{d_n\times (d_n-p^*)}$ such that 
        \begin{align*} 
                 \bhSigmanprime  &= \frac{ 1}{k_n} \bW \begin{pmatrix}
                     \obXi^{1/2} {\obWn}^\top \bAn  \obWn  \obXi^{1/2} &\, \obXi^{1/2} {\obWn}^\top \bAn  \tbWn   \\                    {\tbWnT}  \bAn  \obWn 
\obXi^{1/2}  &  {\tbWnT}  \bAn \tbWn 
                 \end{pmatrix} {\bW}^\top.
        \end{align*}
        Similarly, we receive with \eqref{eq:representation_emp_cov_mat} and
        \begin{equation} \label{def:bBn}
            \bBn \coloneqq \wbVn  \bCn   {\wbVn}^\top
        \end{equation}         
         that
        \begin{align*}
                 \sbhSigman &= \frac{1}{k_n} \bW  \begin{pmatrix}
                     \obXi^{1/2} \obWn^\top \bBn  \obWn  \obXi^{1/2} & \obXi^{1/2} \obWn^\top \bBn  \tbWn   \\
                    \tbWnT  \bBn  \obWn \obXi^{1/2}  &  \tbWnT  \bBn \tbWn
                 \end{pmatrix} {\bW}^\top.
        \end{align*}
Let $i \in\{1,\ldots,p^*\} $.    
        The Courant-Fischer min-max theorem (\citet[Theorem 4.2.6]{matrix_analysis}) gives
        \begin{align} \label{A.12}
            \frac{d_n \wellprime_{n,i}}{\slambda{i}} = \frac{d_n }{\slambda{i}} \inf_{\bv_1, \ldots, \bv_{i-1} \in \R^{d_n}}  \sup_{\bw \perp \bv_1, \ldots, \bv_{i-1}, \Vert \bw \Vert = 1} \bw^\top \bhSigmanprime \bw .
        \end{align}
    The proof is split into two parts, wherein we establish that ${d_n \wellprime_{n,i}}/{\slambda{i}}$ is bounded below and above by a random variable which converges in probability to $1$ as $\ninf$.  \\
    
    \textbf{Step 1:} First, we derive a lower bound of \eqref{A.12} which converges in probability to $1$.  Therefore, note for arbitrary $\bu_j \in \R^{\dnsub}$ with $ \Vert \bu_j \Vert = 1$ for $1 \le j \le \ps$,  \citet[Lemma A.2 ]{Bai:Choi:Fujikoshi:2018} yields  that as $\ninf$, 
    \begin{align} \label{conv:hilf5}
        \max_{1 \le j \le \ps} \left\vert \bu_j^\top \frac{\bBnsub}{ k_{n}} \bu_j - 1 \right\vert \asconv 0,
    \end{align}
    where $\bBnsub$ is defined as in \Cref{def:bBn}. Now, let $\bAnsub$ be defined as in \Cref{def:bAn}. 
    Then
\begin{align*}
    \left\vert \bu_j^\top \left( \frac{\bBnsub}{ k_{n}} - \frac{\dnsub \bAnsub}{ k_{n}} \right) \bu_j \right\vert &\le \left\Vert  \frac{\bBnsub}{ k_{n}} - \frac{\dnsub \bAnsub}{ k_{n}} \right\Vert \\
    &= \frac{1}{k_{n}}    \left\Vert \wbVnsub \left( \bCnsub - \dnsub \bTnsub \bCnsub {\bTnsub}^\top \right)  {\wbVnsub}^\top \right\Vert\\
    &\le \frac{1}{k_{n}} \Vert \wbVnsub  \Vert^2  \Vert \bCnsub \Vert \left( 1 + \Vert {\sqrt{\dnsub}} \bTnsub \Vert \right) 
     \left\Vert \sqrt{\dnsub}   \bTnsub - \bI_{k_{n}} \right\Vert.  
\end{align*}  
On the one hand,      \citet[Theorem 3.1]{yin:bai:krishnaiah:1988}
implies that $$\frac{1}{k_{n}} \Vert \wbVnsub  \Vert^2 = (\frac{1}{\sqrt{k_{n}}} \Vert \wbVnsub  \Vert)^2 \asconv (1+\sqrt{c})^2.$$ 
On the other hand, 
  since  $\slambdasub{1} = o(\dnsub^{1/2})$ as $n\to\infty$,  a conclusion of  \Cref{rem:order_slambda}  is that $\Vert \sqrt{\dnsub}   \bTnsub - \bI_{k_{n}} \Vert \Pconv  0$ and $\Vert \sqrt{\dnsub}   \bTnsub   \Vert \le \Vert \bI_{k_{n}} \Vert +   \Vert \sqrt{\dnsub}   \bTnsub - \bI_{k_{n}} \Vert\Pconv 1$. In summary, as $\ninf$
\begin{eqnarray} \label{A.17}
    \left\vert \bu_j^\top \left( \frac{\bBnsub}{ k_{n}} - \frac{\dnsub \bAnsub}{ k_{n}} \right) \bu_j \right\vert \le \left\Vert  \frac{\bBnsub}{ k_{n}} - \frac{\dnsub \bAnsub}{ k_{n}} \right\Vert \Pconv 0,
\end{eqnarray}
and finally, using \Cref{conv:hilf5} we have as well 
    \begin{align} \label{conv:hilf6}
        \max_{1 \le j \le \ps} \left\vert \bu_j^\top \frac{\dnsub \bAnsub}{ k_{n}} \bu_j - 1 \right\vert \Pconv 0. 
    \end{align}
Further, for arbitrary vectors $\bv_1, \ldots, \bv_{i-1} \in \R^{\dnsub}$ we take a vector  $\bw_{\bv} = \sum_{j=1}^i a_j \bWsub_j$    orthogonal to  $\bv_1, \ldots, \bv_{i-1}$ with  $ \sum_{j=1}^i a_j^2 = 1$ and hence, $\Vert \bw_{\bv} \Vert = 1$. Since $\bWsub$ is an orthogonal matrix,  we receive with representation \eqref{sigmastrich1} that 
\begin{align*}
     \frac{\dnsub }{\slambdasub{i}} & \bw^\top_{\bv} \bhSigmanprimesub \bw_{\bv} \\
     &=  \frac{\dnsub }{\slambdasub{i}} \sum_{j,l=1}^i a_j a_l {\bWsub_j}^\top    \bWsub \bXinsub^{1/2} {\bWsub}^\top  \frac{\bAnsub}{ k_{n}}    \bWsub \bXinsub^{1/2} {\bWsub}^\top   \bWsub_l \\
     &=   \sum_{j=1}^i a_j^2 \frac{ \slambdasub{j}}{\slambdasub{i}}    {\bWsub_j}^\top   \frac{\dnsub\bAnsub}{k_{n}}     \bWsub_j. 
\end{align*}
A conclusion of  \Cref{A.12},  $\Vert  \bWsub_j\Vert=1$ and \Cref{conv:hilf6}  is then
     \begin{align*}
            \frac{\dnsub \wellprime_{n,i}}{\slambdasub{i}} 
             &\ge  \inf_{\bv_1, \ldots, \bv_{i-1} \in \R^{\dnsub}}   \frac{\dnsub }{\slambdasub{i}} \bw_{\bv}^\top \bhSigmanprimesub \bw_{\bv} \\
              &\geq    \inf_{\ba\in\R^{i}:\sum_{j=1}^i a_j^2 = 1} \sum_{j=1}^i a_j^2 
      {\bWsub_j}^\top   \frac{\dnsub\bAnsub}{k_{n}}     \bWsub_j  \\
             &\ge   1-\max_{1 \le j \le \ps} \left| {\bWsub_j}^\top \frac{\dnsub \bAnsub}{ k_{n}} \bWsub_j - 1 \right| \Pconv 1
         \end{align*}   
as $n\to\infty$.\\
\textbf{Step 2:} Next, we derive an upper bound for \eqref{A.12} which converges in probability to $1$. Therefore, note that
\begin{align*}
     \frac{\dnsub \wellprime_{n,i}}{\slambdasub{i}} &= \frac{\dnsub }{\slambdasub{i}} \inf_{\bv_1, \ldots, \bv_{i-1} \in \R^{\dnsub}}  \sup_{\bw \perp \bv_1, \ldots, \bv_{i-1}, \Vert \bw \Vert = 1} \bw^\top \bhSigmanprimesub \bw \\ 
     &\le \frac{\dnsub }{\slambdasub{i}}   \sup_{\bw \perp \bWsub_1, \ldots, \bWsub_{i-1}, \Vert \bw \Vert = 1} \bw^\top \bhSigmanprimesub \bw.
\end{align*}
Since  $\bWsub_l \perp \bWsub_j$ for $l \neq j$ we can write a vector $\bw \perp \bWsub_1, \ldots, \bWsub_{i-1}$ with $ \Vert \bw \Vert = 1$ as
\begin{align*}
    \bw =  c^2 \bu + (1-c^2) \bv,
\end{align*}
where $c \in [0,1], \bu = \sum_{j=i}^{\ps} a_j \bWsub_j = \obWnsub \ba, \Vert \ba \Vert \! = \!\sum_{j=i}^{\ps} a_j^2 = 1$ and $\bv \! = \!\sum_{j=\ps+1}^{\dnsub} b_j \bWsub_j$ $= \tbWnsub\bb $ satisfying $ \sum_{j=p^*+1}^{\dnsub} b_j^2 = 1$. Recall that $\tbWnTsub \bhSigmanprimesub \tbWnsub= \tbWnTsub  \frac{\bAnsub}{k_{n}} \tbWnsub$. Then,
\begin{align*}
    \frac{\dnsub }{\slambdasub{i}}&   \sup_{\bw \perp \bWsub_1, \ldots, \bWsub_{i-1}, \Vert \bw \Vert = 1} \bw^\top \bhSigmanprimesub \bw \\
    &\le \frac{\dnsub }{\slambdasub{i}} \sup_{c \in [0,1]} \bigg\{ c^2  \sup_{ \substack{ \ba \in \R^{\ps - i +1} ,\\ \Vert \ba \Vert = 1}} \ba^\top \obWnsub^\top \bhSigmanprimesub \obWnsub \ba \\
    &\hspace*{5cm} + (1-c^2) \sup_{\substack{\bb \in \R^{d - \ps}, \\ \Vert \bb \Vert = 1}} \bb^\top \tbWnTsub \bhSigmanprimesub \tbWnsub\bb  \bigg\}\\
    &\leq \sup_{c \in [0,1]} \bigg\{ c^2  \sup_{ \substack{ \ba \in \R^{\ps - i+ 1} ,\\ \Vert \ba \Vert = 1} } \sum_{j=i}^{\ps} a_j^2   {\bWsub_j}^\top \frac{\dnsub \bAnsub}{k_{n}} \bWsub_j \\
    &\hspace*{5cm}+ (1-c^2) \frac{\dnsub}{\slambdasub{i}} \bigg\Vert  \tbWnTsub  \frac{\bAnsub}{k_{n}} \tbWnsub \bigg\Vert \bigg\}.
\end{align*}
 Note that \eqref{A.17} and $\tbWnsub$ being a orthogonal matrix imply that
\begin{equation*}
  \left\Vert  \tbWnTsub \left(\frac{\dnsub \bAnsub}{k_{n}} - \frac{\bBnsub}{k_{n}} \right) \tbWnsub \right\Vert \le   \left\Vert  \frac{\dnsub \bAnsub}{k_{n}} - \frac{\bBnsub}{k_{n}}   \right\Vert  \Pconv 0.
\end{equation*}
 We then conclude from $ \Vert  \tbWnTsub  \frac{\bBnsub}{k_{n}} \tbWnsub \Vert \Pconv (1 + \sqrt{c})^2$ (cf. proof of  \citet[Lemma 2.2 (i)]{Bai:Choi:Fujikoshi:2018}) and $\slambdasub{i} \rightarrow \infty$ that as $n\to\infty$,
\begin{equation*}
    \frac{1}{\slambdasub{i}}  \bigg\Vert  \tbWnTsub  \frac{\dnsub\bAnsub}{k_{n}} \tbWnsub \bigg\Vert \Pconv 0.
\end{equation*}
Additionally, with ${\bWsub_j}^\top \frac{\dnsub \bAnsub}{k_{n}} \bWsub_j \Pconv 1$ for $j = i+1, \ldots, \ps$ by 
\Cref{conv:hilf6} we get, 
\begin{equation*}
   \frac{\dnsub \wellprime_{n,i}}{\slambdasub{i}}   \le   \frac{\dnsub }{\slambdasub{i}}   \sup_{\substack{\bw \perp \bWsub_1, \ldots, \bWsub_{i-1},\\ \Vert \bw \Vert = 1}} \! \bw^\top \bhSigmanprimesub \bw    
  \Pconv  \sup_{c \in [0,1]} \! \bigg\{ c^2  \sup_{ \substack{ \sum_{j=i}^{\ps} a_j^2 = 1}} \sum_{j=i}^{\ps} a_j^2  \bigg\} =1
\end{equation*}
as $n\to\infty$, which proves Step 2.
\end{proof}

\begin{lemma} \label{hilf:Theorem37}
       Let \Cref{asu:high_dim}  with  $\slambda{\ps}   \rightarrow \infty$ and $\slambda{1} = o(d_n^{1/2})$ as $\ninf$ be given. Then  as $n\to\infty$,
\begin{align*}
       \sup_{x \in ((1 - \sqrt{c})^2,(1 + \sqrt{c})^2)}\left| F^{\dnsub \bhSigmanprimesub}(x)- F_{c}(x)\right|  \Pconv 0, 
\end{align*}
where $F^{d_n \bhSigmanprime}$ is  the empirical spectral distribution function of $d_n \bhSigmanprime$ and  $F_{c}(x)$ is defined as in \Cref{lem:Bai_Yaio2}.
\end{lemma}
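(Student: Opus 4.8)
The plan is to realize $d_n\bhSigmanprime$ as a small perturbation of a genuine sample covariance matrix with i.i.d.\ entries, splitting the perturbation into a low-rank part (caused by the $\ps$ spiked directions) and a small-spectral-norm part (caused by the random row normalization $\bTn$). Starting from the representation \eqref{eq:representation_emp_cov_mat}, and using that $\bCn$ is a symmetric projection and $\bTn$ is diagonal, one has $d_n\bhSigmanprime=\frac1{k_n}(\sbSigman)^{1/2}(d_n\bAn)(\sbSigman)^{1/2}$ with $\bAn=\wbVn\bTn\bCn\bTn\wbVn^\top$ as in \eqref{def:bAn}; the comparison matrix will be $\frac1{k_n}\bBn=\frac1{k_n}\wbVn\bCn\wbVn^\top$ with $\bBn$ as in \eqref{def:bBn}, whose empirical spectral distribution is known to converge to the Mar\v{c}enko--Pastur law $F_c$.

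First I would remove the spike. As $(\sbSigman)^{1/2}$ agrees with $\bI_{d_n}$ on the orthogonal complement of the $\ps$-dimensional spiked eigenspace, the matrix $\bD_n:=(\sbSigman)^{1/2}-\bI_{d_n}$ has rank at most $\ps$; expanding $(\bI_{d_n}+\bD_n)(d_n\bAn)(\bI_{d_n}+\bD_n)$ shows that $d_n\bhSigmanprime-\frac{d_n}{k_n}\bAn=\frac1{k_n}\bigl(\bD_n(d_n\bAn)+(d_n\bAn)\bD_n+\bD_n(d_n\bAn)\bD_n\bigr)$ has rank at most $2\ps$, which is $o(d_n)$ since $\ps$ is fixed. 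Hence the rank inequality for empirical spectral distribution functions (\citet[Theorem A.43]{Bai:Silverstein:2010}) gives $\sup_{x\in\R}\bigl|F^{d_n\bhSigmanprime}(x)-F^{\frac{d_n}{k_n}\bAn}(x)\bigr|\le 2\ps/d_n\to 0$. Next I would replace $\frac{d_n}{k_n}\bAn$ by $\frac1{k_n}\bBn$ in spectral norm: writing $\sqrt{d_n}\,\bTn=\bI_{k_n}+\bR_n$, Step~1 of the proof of \Cref{hilf:Theorem36}---which is precisely where the hypothesis $\slambda{1}=o(d_n^{1/2})$ is used, via \Cref{rem:order_slambda}---yields $\Vert\bR_n\Vert\Pconv 0$, and combined with $\frac1{k_n}\Vert\wbVn\Vert^2\asconv(1+\sqrt c)^2$ (\citet[Theorem 3.1]{yin:bai:krishnaiah:1988}, using $\E|V_1|^4<\infty$) this gives $\eta_n:=\bigl\Vert\tfrac{d_n}{k_n}\bAn-\tfrac1{k_n}\bBn\bigr\Vert\le\tfrac1{k_n}\Vert\wbVn\Vert^2\bigl(2\Vert\bR_n\Vert+\Vert\bR_n\Vert^2\bigr)\Pconv 0$, which is essentially \eqref{A.17}. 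By Weyl's inequality (\citet[Theorem A.46]{Bai:Silverstein:2010}), $F^{\frac1{k_n}\bBn}(x-\eta_n)\le F^{\frac{d_n}{k_n}\bAn}(x)\le F^{\frac1{k_n}\bBn}(x+\eta_n)$ for every $x$.

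It remains to identify the limit of $F^{\frac1{k_n}\bBn}$. Since $\frac1{k_n}\bBn=\frac1{k_n}\wbVn\wbVn^\top-\frac1{k_n^2}(\wbVn\mathbf{1}_{k_n})(\wbVn\mathbf{1}_{k_n})^\top$ differs from the sample covariance matrix $\frac1{k_n}\wbVn\wbVn^\top$ of $d_n$-dimensional vectors with i.i.d.\ mean-zero, unit-variance, finite-fourth-moment entries (and $d_n/k_n\to c$) only by a rank-one matrix, the Mar\v{c}enko--Pastur theorem (\cite{marchenko:pastur:1967}; see also \citet[Chapter~3]{Bai:Silverstein:2010}) together with the rank inequality again gives $F^{\frac1{k_n}\bBn}(x,\omega)\to F_c(x)$ for $\P$-a.e.\ $\omega$ and every continuity point $x$ of $F_c$. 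Choosing a compact interval $J$ that contains $\overline{I}:=[(1-\sqrt c)^2,(1+\sqrt c)^2]$ in its interior and on which $F_c$ is continuous (for $c\neq 1$ one may take $J$ disjoint from $\{0\}$, the possible location of the atom of $F_c$), Polya's theorem upgrades this to $\sup_{x\in J}\bigl|F^{\frac1{k_n}\bBn}(x,\omega)-F_c(x)\bigr|\to 0$ a.s., exactly as in the proof of \Cref{lem:Bai_Yaio}(b). Combining the two reductions of the previous paragraph with the uniform continuity of $F_c$ on $J$---so that $\sup_{x\in I}|F_c(x\pm\eta_n)-F_c(x)|\to 0$ on the high-probability events where $\eta_n$ is small enough that $x\pm\eta_n\in J$ for all $x\in I$---yields $\sup_{x\in((1-\sqrt c)^2,(1+\sqrt c)^2)}\bigl|F^{d_n\bhSigmanprime}(x)-F_c(x)\bigr|\Pconv 0$, which is the claim.

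I expect the genuine obstacle to be the second reduction, i.e.\ showing that the random normalization matrix $\bTn$ is uniformly (over all $k_n$ extreme observations) within $d_n^{-1/2}\bI_{k_n}$ up to an error tending to zero; this is exactly where $\slambda{1}=o(d_n^{1/2})$ cannot be dispensed with, because otherwise the $\ps$ large coordinates of $(\sbSigman)^{1/2}\bVn_j$ would distort the squared norms $\Vert(\sbSigman)^{1/2}\bVn_j\Vert^2$ non-negligibly. Since this estimate has already been carried out in the course of proving \Cref{hilf:Theorem36}, the present lemma needs little beyond assembling the two perturbation bounds with the classical Mar\v{c}enko--Pastur convergence; the only remaining care is bookkeeping at the edges of $I$ and, when $c>1$, around the atom of $F_c$ at the origin, which is routine.
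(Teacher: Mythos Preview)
Your proposal is correct and complete; the key ingredients---the spectral-norm bound $\Vert\tfrac{d_n}{k_n}\bAn-\tfrac1{k_n}\bBn\Vert\Pconv 0$ from \eqref{A.17}, the Mar\v{c}enko--Pastur limit for $\tfrac1{k_n}\bBn$ via a rank-one correction, and the Polya/uniform-continuity upgrade---are exactly those in the paper's proof.

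The route, however, differs from the paper's in how the spike is removed. You observe directly that $(\sbSigman)^{1/2}-\bI_{d_n}$ has rank at most $\ps$, conclude that $d_n\bhSigmanprime-\tfrac{d_n}{k_n}\bAn$ has rank at most $2\ps$, and invoke the rank inequality for empirical spectral distributions. The paper instead rotates into the eigenbasis of $\sbSigman$, identifies $\tbWnT\tfrac{d_n\bAn}{k_n}\tbWn$ as a $(d_n-\ps)\times(d_n-\ps)$ principal submatrix of $\bW^\top(d_n\bhSigmanprime)\bW$, and then uses Cauchy's interlacing theorem to sandwich the eigenvalues $d_n\wellprime_{n,\ps+i}$ between those of the submatrix, yielding the same $O(\ps/d_n)$ bound on the ESD difference. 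For the spectral-norm perturbation the paper appeals to \citet[Theorem~A.45]{Bai:Silverstein:2010} directly, whereas you go through Weyl plus uniform continuity of $F_c$; both are equivalent here. Your argument is slightly more self-contained (no interlacing needed), while the paper's reuses the eigenspace decomposition already set up for \Cref{hilf:Theorem36}; neither approach is longer or harder than the other.
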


\begin{proof}
For the ease of notation define the interval $I:=((1 - \sqrt{c})^2,(1 + \sqrt{c})^2)$.
Let $F^{  \tbWnTsub  \bBnsub \tbWnsub/k_{n}}$ and $F^{  \tbWnTsub  \dnsub\bAnsub \tbWnsub/k_{n}}$ be the empirical spectral  distribution function of $\tbWnTsub  \bBnsub \tbWnsub/k_{n}$ and $  \tbWnTsub  \dnsub\bAnsub \tbWnsub/k_{n}$, respectively. Due to  \eqref{A.17}, 
it follows by \citet[Theorem A.45]{Bai:Silverstein:2010}  that as $n\to\infty $, 
\begin{align*}
    \sup_{x \in I}\left|F^{  \tbWnTsub  \frac{\bBnsub}{k_{n}} \tbWnsub}(x) - F^{\tbWnTsub  \frac{\dnsub\bAnsub}{k_{n}} \tbWnsub}(x)\right|  \Pconv 0.
\end{align*} {By \citet[Theorem 1.1]{silverstein:1995}} and \citet[Theorem A.44]{Bai:Silverstein:2010} combined with $\rank(\bI - \bCnsub) = \rank( \frac{1}{{k_{n}}} \mathbf{1}_{k_{n}} \mathbf{1}_{k_{n}}^\top) = 1$ there exists a set $\Omega_0\in\mathcal{F}$ with $\P(\Omega_0)=1$ so that for any $\omega \in\Omega_0$ 
the convergence
\begin{align*}
    \lim_{n\to\infty}\sup_{x \in I}\left|F^{  \tbWnTsub  \frac{\bBnsub}{k_{n}} \tbWnsub}(x,\omega)- F_{c}(x)\right|=0 
\end{align*} 
holds which ends in 
\begin{align}
    \sup_{x \in I} \left| F^{ \tbWnTsub  \frac{\dnsub\bAnsub}{k_{n}} \tbWnsub}(x) - F_{c}(x) \right|\Pconv 0. 
\label{conv:mp_law_submatrix}
\end{align} 
Since the matrices $ {\bWsub}^\top \dnsub\bhSigmanprimesub \bWsub$ and $\dnsub \bhSigmanprimesub$ share the same eigenvalues \linebreak$\dnsub \wellprime_{n,\ps + 1}, \ldots, \dnsub \wellprime_{n, \dnsub}$,   we get for any $i \in \{\ps+1, \ldots,  \dnsub -\ps\}$ with the interlacing theorem for the eigenvalues (\citet[Theorem 4.3.28]{matrix_analysis}) $\P$-a.s. that
    \begin{align} 
       \lambda_{i} \bigg(   \tbWnTsub  \frac{ \dnsub\bAnsub}{k_{n}} \tbWnsub  \bigg)  &\ge \lambda_{\ps + i} \left(     {\bW}^\top  \dnsub\bhSigmanprimesub \bW   \right) \nonumber \\
       &=  \dnsub \wellprime_{n,\ps + i} \label{Aa}\\
       &\ge \lambda_{\ps + i} \bigg(   \tbWnTsub  \frac{\dnsub\bAnsub}{k_{n}} \tbWnsub \bigg). \nonumber
    \end{align}
Therefore, due to \Cref{conv:mp_law_submatrix} and \Cref{Aa}, 
\begin{align*}
   \sup_{x \in I}\left| F^{\dnsub \bhSigmanprimesub}(x)- F_{c}(x)\right| &=\sup_{x \in I}\left|  \frac{1}{\dnsub  }\sum_{i=1}^{\dnsub} \mathbbm{1}  \left\{ \dnsub \wellprime_{n, i}\le x  \right\} - F_{c}(x)\right| \\
   & \leq   \sup_{x \in I} \left| F^{ \tbWnTsub  \frac{\dnsub\bAnsub}{k_{n}} \tbWnsub}(x) - F_{c}(x) \right|
   +\frac{4p^*}{d_n}\Pconv 0,
\end{align*}
which is the statement.
\end{proof}

\begin{proof}[Proof of \Cref{th:Bai_Lemma2_2_lambda}]
The proof of \Cref{th:Bai_Lemma2_2_lambda} (a)-(d) follows with the same arguments as the proof of \Cref{lem:Bai_Yaio} using only \Cref{hilf:Theorem36} and \Cref{hilf:Theorem37} in combination with $\bhSigman \overset{\mathcal{D}}{=} \bhSigmanprime$ (cf. \Cref{eq:cov_distr_equality}).
Only the proof (e) remains. Therefore, note that for $i < \ps$ the asymptotic behavior  $\frac{d_n  \well_{n,i}}{\slambda{i}} \Pconv 1$ and  $\frac{1}{d_n-i} \sum_{j=\ps + 1}^{d_n} \frac{d_n  \well_{n,j}}{\slambda{i}} \Pconv 0$ as $n \rightarrow \infty$ hold by (a) and (d), respectively. Hence,
\begin{align*}
     \frac{d_n  \well_{n,i}}{\frac{1}{d_n-i} \sum_{j=i+1}^{d_n} d_n \well_{n,j}} &= \frac{d_n  \well_{n,i}}{\frac{1}{d_n-i} \sum_{j=i+1}^{\ps} d_n \well_{n,j}   + \frac{1}{d_n-i} \sum_{j=\ps + 1}^{d_n} d_n \well_{n,j}} \\
     &\ge \frac{ \frac{d_n  \well_{n,i}}{\slambda{i}}   }{\frac{\ps - i }{d_n-i}  \frac{d_n  \well_{n,i}}{\slambda{i}}   + \frac{1}{d_n-i} \sum_{j=\ps + 1}^{d_n} \frac{d_n  \well_{n,j}}{\slambda{i}} } \Pconv \infty,
\end{align*}
which shows (e).
\end{proof}

\section{Proofs of Section 3}  \label{sec:fixed_dim_proof}

\begin{proof}[Proof of \Cref{th:AIC_Cons}]
Since by \Cref{rem:scale_invariant} (b) the $\AIC$ is scale invariant and hence, we assume w.l.o.g. that $\lambda = 1$. 
$\mbox{}$\\
\textbf{Step 1:} Suppose $p > {\ps}$. Note
\begin{equation*}
    2(p+1) (d - p/2 ) - (d-1)(d+2) = -  (d-p-2)(d-p+1).
\end{equation*}
By the definition of the $\AIC$ we obtain
 \begin{align}
         \AIC_{k_n}(p) -  \AIC_{k_n}(\ps) 
         &=    k_n \sum_{i=\ps+1}^{p} \log  ( \well_{n,i}  )  + k_n  (d - 1 - p)  \log \left( \frac{1}{d - 1-p} \sum_{j = p+1}^{d - 1} \well_{n,j} \right)  \nonumber \\
    & \quad -  k_n (d - 1 - \ps)  \log \left( \frac{1}{d - 1-\ps} \sum_{j = \ps+1}^{d - 1} \well_{n,j} \right)  \nonumber \\
    &\quad-  (d-p-2)(d-p+1) + (d-\ps-2)(d-\ps+1), \nonumber
    \end{align}
where we used that $p > \ps$.   Inserting the alternative representation 
  \begin{equation*}
      (\well_{n,\ps +1}, \ldots, \well_{n,d} )^\top = \mathbf{1}_{d-\ps} + \frac{1}{\sqrt{k_n}} \bM_n,
  \end{equation*} 
where
\begin{equation*}
    \bM_n \coloneqq \sqrt{k_n} ( (\well_{n,\ps +1}, \ldots, \well_{n,d} )^\top -  \mathbf{1}_{d-\ps}),
\end{equation*}
gives that
\begin{align*}
     \AIC_{k_n}(p) -  \AIC_{k_n}(\ps)   &=    k_n \sum_{i=\ps+1}^{p} \log  \left( 1 + \frac{1}{\sqrt{k_n}} M_{n,i}  \right) \nonumber  \\
    & \qquad  +  k_n  (d - 1 - p)  \log \left(1 + \frac{1}{d - 1-p} \sum_{j = p+1}^{d - 1}  \frac{1}{\sqrt{k_n}} M_{n,j} \right)  \nonumber \\
    &  \qquad -  k_n (d - 1 - \ps)  \log \left(1 + \frac{1}{d - 1-\ps} \sum_{j = \ps+1}^{d - 1}  \frac{1}{\sqrt{k_n}} M_{n,j} \right)   \nonumber  \\
    & \qquad -   (d-p-2)(d-p+1) +  (d-\ps-2)(d-\ps+1).   \nonumber 
\end{align*}
Furthermore, $\bM_n = O_\P(1)$ due to \Cref{cor:ev_sqrtn_normality} (b). Additionally the Taylor expansion of the logarithm as $x \rightarrow 0$,
\begin{equation*}
\log(1 + x) = x - \frac{1}{2} x^2 + O(x^3), 
\end{equation*}
gives that
 \begin{align*}
         \AIC&_{k_n}(p) -  \AIC_{k_n}(\ps)  \\
    &=    k_n \sum_{i=\ps+1}^{p}   \left(  \frac{1}{\sqrt{k_n}} M_{n,i} - \frac{1}{2} \frac{1}{ k_n } M_{n,i}^2 + O_\P(k_n^{-3/2}) \right)   \nonumber \\
         & \quad + k_n   \left(\sum_{j = p+1}^{d - 1}  \frac{1}{\sqrt{k_n}} M_{n,j}  - \frac{1}{2(d - 1-p)} \left(\sum_{j = p+1}^{d - 1}  \frac{1}{\sqrt{k_n}} M_{n,j} \right)^2 + O_\P(k_n^{-3/2}) \right)  \nonumber \\
    &  \quad -  k_n \left(\sum_{j = \ps+1}^{d - 1}  \frac{1}{\sqrt{k_n}} M_{n,j}  - \frac{1}{2(d - 1-\ps)} \left(\sum_{j = \ps+1}^{d - 1}  \frac{1}{\sqrt{k_n}} M_{n,j} \right)^2 + O_\P(k_n^{-3/2}) \right)  \nonumber  \\
    & \quad -  (d-p-2)(d-p+1) +  (d-\ps-2)(d-\ps+1)   \nonumber \\
    &=    - \frac{1}{2} \sum_{i=\ps+1}^{p}   M_{n,i}^2 - \frac{1}{2(d - 1-p)} \left(\sum_{j = p+1}^{d - 1}  M_{n,j} \right)^2   + \frac{1}{2(d - 1-\ps)} \left(\sum_{j = \ps+1}^{d - 1}  M_{n,j} \right)^2  \nonumber  \\
    & \quad -  (d-p-2)(d-p+1) +   (d-\ps-2)(d-\ps+1)  + O_\P(k_n^{-1/2}). \nonumber
\end{align*}
An application of \Cref{cor:ev_sqrtn_normality} (b) gives then
\begin{align} \label{eq:AIC_cons2}
\AIC&_{k_n}(p) -  \AIC_{k_n}(\ps) \\
&\Dconv - \frac{1}{2} \sum_{i=\ps+1}^{p}    M_{i}^2 - \frac{1}{2(d - 1-p)} \left(\sum_{j = p+1}^{d - 1}  M_{j} \right)^2    + \frac{1}{2(d - 1-\ps)} \left(\sum_{j = \ps+1}^{d - 1}  M_{j} \right)^2  \nonumber  \\
    & \qquad -  (d-p-2)(d-p+1) +   (d-\ps-2)(d-\ps+1). 
    \end{align}
Hence, the assertion follows.\\
\textbf{Step 2:} Suppose $p < {\ps}$. Again by the definition of the $\AIC$ we receive
    \begin{align}
     \frac{\AIC_{k_n}(p) - \AIC_{k_n}(\ps)}{k_n}
    &\;= - \sum_{j= p+1}^{\ps}  \log( \well_{n,j} ) + (d - 1 - p)  \log \left( \frac{1}{d - 1-p} \sum_{j = p+1}^{d - 1} \well_{n,j} \right) \nonumber  \\
    &\; \quad -(d - 1 - \ps)  \log \left( \frac{1}{d - 1-\ps} \sum_{j = \ps+1}^{d - 1} \well_{n,j} \right) \nonumber  \\
    & \; \quad - \frac{ (d-p-2)(d-p+1) + (d-\ps-2)(d-\ps+1)}{k_n}.\nonumber 
\end{align}
Due to \Cref{cor:ev_sqrtn_normality} (a),  $\well_{n,i} \Pconv \lambda_i$ for $i = 1, \ldots, d-1$ holds and therefore,
    \begin{eqnarray*}
     \frac{\AIC_{k_n}(p) - \AIC_{k_n}(\ps)}{k_n} &\Pconv &- \sum_{j=p+1}^{\ps}\log \left( \lambda_j  \right) + (d - 1 - p)  \log \big( \frac{1}{d - 1-p} \sum_{j = p+1}^{d - 1} \lambda_{j} \big) \nonumber \\
    &=  &  - \sum_{j=p+1}^{d} \log \left( \lambda_j  \right) + (d - 1 - p)  \log \big( \frac{1}{d - 1-p} \sum_{j = p+1}^{d - 1} \lambda_{j} \big) \nonumber  \\
     &=  &  - \log \left(  \frac{\prod_{j=p+1}^{d} \lambda_j}{ \big( \frac{1}{d - 1-p} \sum_{j = p+1}^{d - 1} \lambda_{j} \big)^{(d - 1 - p)}}  \right) > 0, 
\end{eqnarray*}
due to the inequality of arithmetic and geometric means (\citet{AMGM}) which says that
\begin{equation*}
    \frac{ \big(\prod_{j=p+1}^{d} \lambda_j\big)^{1/(d - 1 - p)}}{  \frac{1}{d - 1-p} \sum_{j = p+1}^{d - 1} \lambda_{j}} < 1. \qedhere
\end{equation*}
\end{proof}


\begin{proof}[Proof of \Cref{th:BIC_Cons}]
$\mbox{}$\\
\textbf{Step 1:} Suppose $p > {\ps}$. Due to \Cref{eq:AIC_cons2} we receive
\begin{align*}
    \BIC&_{k_n}(p) -  \BIC_{k_n}(\ps) \\
    &=\frac{\log(k_n)}{2} (d-\ps-2)(d-\ps+1)  - \frac{\log(k_n)}{2} (d-p-2)(d-p+1) + O_\P(1).
\end{align*}
A division by $\log(k_n)$ provides
\begin{equation*}
    \frac{\BIC_{k_n}(p) -  \BIC_{k_n}(\ps)}{\log(k_n)} \Pconv \frac{1}{2} (d-\ps-2)(d-\ps+1)  - \frac{1}{2} (d-p-2)(d-p+1),
\end{equation*}
which is strictly positive.\\
\textbf{Step 2:} Suppose $p < {\ps}$. Since 
as $\ninf$,
 \begin{align*}
        &\frac{\BIC_{k_n}(p) - \BIC_{k_n}(\ps)}{k_n} - \frac{\AIC_{k_n}(p) - \AIC_{k_n}(\ps)}{k_n} \\
        & \quad= \frac{\log(k_n) -2}{k_n} \left( \frac{(d-\ps-2)(d-\ps+1)}{2} -  \frac{(d-p-2)(d-p+1)}{2} \right) \rightarrow 0,
    \end{align*}
the statement follows from \Cref{th:AIC_Cons}.

\end{proof}

\section{Proofs of Section 4} \label{sec:proofs:C}

\begin{proof}[Proof of \Cref{th:cons_aic_high_dim}]
Note, as stated in \Cref{rem:scale_invariant}, the information criteria are scale invariant and hence
\begin{equation*} 
    \QAIC_{k_n}(\pn; \well_{n,1}, \ldots, \well_{n,d_n-1})=:\QAIC_{k_n}(\pn)= \QAIC_{k_n}(\pn; d_n \well_{n,1}, \ldots, d_n \wellprime_{n,d_n-1}).
\end{equation*}
Due to \Cref{lem:Bai_Yaio2} for (a,b) and \Cref{th:Bai_Lemma2_2_lambda} for (c), the proof of \citet[Theorem 3.1]{Bai:Choi:Fujikoshi:2018}
for $\hslambda_{n,1}, \ldots, \hslambda_{n,d_n-1}$ can be carried out step by step for $d_n \well_{n,1}, \ldots, d_n \well_{n,d_n-1}$.
 The only difference is that there we have almost sure convergence and here we have convergence in probability.
\end{proof}

\begin{proof}[Proof of \Cref{th:cons_bic_high_dim}]
Due to the scale invariance of the  $\QBIC_{k_n}(\ps)$, $\log(d_n)/\log(k_n)$ $\to 1$ as $\ninf$, \Cref{lem:Bai_Yaio2}  and \Cref{th:Bai_Lemma2_2_lambda}, the proof of \citet[Theorem 3.2]{Bai:Choi:Fujikoshi:2018}
for $\hslambda_{n,1}, \ldots, \hslambda_{n,d_n-1}$ 
can be carried out step by step for $d_n \well_{n,1}, \ldots, d_n \well_{n,d_n-1} $. 
\end{proof}

\begin{proof}[Proof of \Cref{th:cons_high_dim_greater_1}]
Due to the scale invariance of the  $\QAICstar$, \Cref{lem:Bai_Yaio2}  and \Cref{th:Bai_Lemma2_2_lambda}, the proof of \citet[Theorem 3.3]{Bai:Choi:Fujikoshi:2018}
for $\hslambda_{n,1}, \ldots, \hslambda_{n,d_n-1}$ can be carried out step by step for $d_n \well_{n,1}, \ldots, d_n \well_{n,d_n-1} $. 
\end{proof}

\begin{proof}[Proof of \Cref{th:cons_bic_high_dim_greater_1}]
Due to the scale invariance of the  $\QBICstar$, $\log(d_n)/\log(k_n)$ $\to 1$ as $\ninf$, \Cref{lem:Bai_Yaio2} and \Cref{th:Bai_Lemma2_2_lambda}, the proof of \citet[Theorem 3.4]{Bai:Choi:Fujikoshi:2018}
for $\hslambda_{n,1}, \ldots, \hslambda_{n,d_n-1}$ 
can be carried out step by step for $d_n \well_{n,1}, \ldots, d_n \well_{n,d_n-1} $. 
\end{proof}

\section{Proofs of Section 5}

\begin{lemma} \label{lemma:norm}
Let
\begin{equation*}
    \bepsilon_d  \sim \Big\vert \mathcal{N}_{d} \Big( \mathbf{0}_d, \frac{100}{d} \bI_d \Big)\Big\vert,
\end{equation*}
where the absolute value is entry-wise. Then
\begin{equation*}
    \lim_{d\to\infty}\V (\Vert \bepsilon_d \Vert)=100/\sqrt{2}.
\end{equation*}
\end{lemma}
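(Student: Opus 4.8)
The plan is to reduce the statement to a classical computation involving the chi distribution and then to invoke the asymptotics of a ratio of Gamma functions.

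First I would observe that the Euclidean norm is insensitive to an entrywise absolute value, so that $\|\bepsilon_d\|$ has the same law as $\|\boldsymbol\eta\|$ for $\boldsymbol\eta\sim\mathcal N_d(\bzero_d,\tfrac{100}{d}\bI_d)$. Writing $\boldsymbol\eta=\tfrac{10}{\sqrt d}(Z_1,\dots,Z_d)^\top$ with i.i.d.\ standard normal $Z_i$ gives $\|\bepsilon_d\|^2\eqd\tfrac{100}{d}\sum_{i=1}^{d}Z_i^2=\tfrac{100}{d}\,\chi_d^2$, and hence $\|\bepsilon_d\|\eqd\tfrac{10}{\sqrt d}\,\chi_d$, where $\chi_d$ is chi-distributed with $d$ degrees of freedom.

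Second, using the classical moment formulas $\E[\chi_d^2]=d$ and $\E[\chi_d]=\sqrt2\,\Gamma\!\bigl(\tfrac{d+1}{2}\bigr)/\Gamma\!\bigl(\tfrac d2\bigr)$, I obtain $\E[\|\bepsilon_d\|^2]=100$ and therefore the exact identity
\[
  \V\bigl(\|\bepsilon_d\|\bigr)=100-\frac{200}{d}\left(\frac{\Gamma\bigl((d+1)/2\bigr)}{\Gamma(d/2)}\right)^{\!2}.
\]
The limit as $d\to\infty$ is then extracted by substituting the classical large-$x$ expansion of $\Gamma(x+\tfrac12)/\Gamma(x)$ (Wendel's inequality, or Stirling's formula) with $x=d/2$ into this identity.

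I expect this last step to be the only delicate point. The leading-order relation $\Gamma\bigl((d+1)/2\bigr)/\Gamma(d/2)\sim\sqrt{d/2}$ is insufficient on its own, since this quantity is multiplied by $200/d$; one has to carry the expansion of the Gamma ratio (equivalently, of the discrepancy between $d$ and $\bigl(\E[\chi_d]\bigr)^2$) one order further before passing to the limit. An alternative route for the same step is to combine a central limit theorem for $\chi_d^2/d$ with a uniform-integrability bound, which yields convergence of the second moment of $\tfrac{10}{\sqrt d}\chi_d$ and hence of its variance; the remaining arguments are elementary.
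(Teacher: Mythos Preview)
Your reduction to the chi distribution and the exact identity
\[
\V\bigl(\|\bepsilon_d\|\bigr)=100-\frac{200}{d}\left(\frac{\Gamma\bigl((d+1)/2\bigr)}{\Gamma(d/2)}\right)^{2}
\]
coincide with the paper's argument; the paper handles the final step with Gautschi's inequality on the Gamma ratio rather than Stirling/Wendel, but that is the same device.

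The gap is not in your method but in the target: carrying your second-order expansion through (equivalently, using the classical fact $\V(\chi_d)\to 1/2$) yields $\V(\|\bepsilon_d\|)=\tfrac{100}{d}\,\V(\chi_d)\to 0$, not $100/\sqrt{2}$. The paper arrives at $100/\sqrt{2}$ via an arithmetic slip: from $\tfrac{100}{d}\bigl(d-(\sqrt{2}\,r)^{2}\bigr)$ with $r=\Gamma((d+1)/2)/\Gamma(d/2)$ it in effect replaces $(\sqrt{2})^{2}=2$ by $\sqrt{2}$ and bounds $\tfrac{100}{d}\sqrt{2}\,(d-r^{2})$ instead of $\tfrac{100}{d}(d-2r^{2})$. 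With the correct squaring, Gautschi's bounds give $-100/d\le\V(\|\bepsilon_d\|)\le 100/d$. So your plan is sound, but what it actually establishes is $\lim_{d\to\infty}\V(\|\bepsilon_d\|)=0$; the lemma as stated is false.
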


\begin{proof}

Indeed, since $\Vert \bepsilon_d \Vert \sim \sqrt{100/d} \sqrt{\chi^2_d}$, where $\chi^2_d$ is a chi-square distribution with $d$ degrees of freedom, the formula for the moments of a chi-square distribution (cf. Theorem 3.3.2 in \citet{Hogg:2005}) gives
\begin{equation*}
    \V (\Vert \bepsilon_d \Vert) =  \E[\Vert \bepsilon_d \Vert^2] - ( \E[ \Vert \bepsilon_d \Vert ]) ^2   = \frac{100}{d} \Big( d - \Big(  \frac{\sqrt{2} \Gamma( (d+1)/2)}{ \Gamma(d/2)} \Big) ^2   \Big).
\end{equation*}
Further by Gautschi's inequality (cf. \citet[p. 1]{Gautschi}) we have
\begin{equation*}
    \left( \frac{d-1}{2} \right)^{1/2} \le   \frac{ \Gamma( (d+1)/2)}{ \Gamma(d/2)} \le \left( \frac{d-1}{2} + 1 \right)^{1/2}
\end{equation*}
and therefore
\begin{align*}
   \sqrt{2} \cdot 100  \frac{d-1}{2d} &=  \frac{100}{d} \sqrt{2}\Big( d - \frac{d-1}{2} - 1   \Big)  \\
   &\le  \V (\Vert \bepsilon_d \Vert) \le  \frac{100}{d} \sqrt{2} \Big( d -  \frac{d-1}{2}   \Big) = \sqrt{2} \cdot 100  \frac{d+1}{2d}.
\end{align*}
Letting $d\to \infty$ on the left and on the right-hand side gives the statement.
\end{proof}

  \end{bibunit}

\begin{thebibliography}{41}

\bibitem[\protect\citeauthoryear{Akaike}{1974}]{akaike}
\begin{barticle}[author]
\bauthor{\bsnm{Akaike},~\bfnm{Hirotugu}\binits{H.}}
(\byear{1974}).
\btitle{A new look at the statistical model identification}.
\bjournal{IEEE Trans. Automatic Control}
\bvolume{AC-19}
\bpages{716--723}.
\end{barticle}
\endbibitem

\bibitem[\protect\citeauthoryear{Anderson}{2003}]{anderson:2003}
\begin{bbook}[author]
\bauthor{\bsnm{Anderson},~\bfnm{T.~W.}\binits{T.~W.}}
(\byear{2003}).
\btitle{An introduction to multivariate statistical analysis},
\bedition{Third} ed.
\bpublisher{Wiley-Interscience}.
\end{bbook}
\endbibitem

\bibitem[\protect\citeauthoryear{Avella-Medina, Davis and Samorodnitsky}{2022}]{AMDS:22}
\begin{barticle}[author]
\bauthor{\bsnm{Avella-Medina},~\bfnm{Marco}\binits{M.}}, \bauthor{\bsnm{Davis},~\bfnm{Richard~A.}\binits{R.~A.}} \AND \bauthor{\bsnm{Samorodnitsky},~\bfnm{Gennady}\binits{G.}}
(\byear{2022}).
\btitle{Kernel PCA for multivariate extremes}.
\bjournal{arXiv: 2211.13172}.
\end{barticle}
\endbibitem

\bibitem[\protect\citeauthoryear{Bai, Choi and Fujikoshi}{2018}]{Bai:Choi:Fujikoshi:2018}
\begin{barticle}[author]
\bauthor{\bsnm{Bai},~\bfnm{Zhidong}\binits{Z.}}, \bauthor{\bsnm{Choi},~\bfnm{Kwok~Pui}\binits{K.~P.}} \AND \bauthor{\bsnm{Fujikoshi},~\bfnm{Yasunori}\binits{Y.}}
(\byear{2018}).
\btitle{Consistency of {AIC} and {BIC} in estimating the number of significant components in high-dimensional principal component analysis}.
\bjournal{Ann. Statist.}
\bvolume{46}
\bpages{1050--1076}.
\end{barticle}
\endbibitem

\bibitem[\protect\citeauthoryear{Bai, Fujikoshi and Hu}{2020}]{MVT_regression_high_dim}
\begin{barticle}[author]
\bauthor{\bsnm{Bai},~\bfnm{Zhidong}\binits{Z.}}, \bauthor{\bsnm{Fujikoshi},~\bfnm{Yasunori}\binits{Y.}} \AND \bauthor{\bsnm{Hu},~\bfnm{Jiang}\binits{J.}}
(\byear{2020}).
\btitle{Strong consistency of the AIC, BIC, $C_p$ and KOO methods in high-dimensional multivariate linear regression}.
\bjournal{arXiv: 1810.12609}.
\end{barticle}
\endbibitem

\bibitem[\protect\citeauthoryear{Bai and Silverstein}{2010}]{Bai:Silverstein:2010}
\begin{bbook}[author]
\bauthor{\bsnm{Bai},~\bfnm{Zhidong}\binits{Z.}} \AND \bauthor{\bsnm{Silverstein},~\bfnm{Jack~W.}\binits{J.~W.}}
(\byear{2010}).
\btitle{Spectral analysis of large dimensional random matrices}.
\bpublisher{Springer}.
\end{bbook}
\endbibitem

\bibitem[\protect\citeauthoryear{Bai and Yao}{2012}]{Bai:Yao:2012}
\begin{barticle}[author]
\bauthor{\bsnm{Bai},~\bfnm{Zhidong}\binits{Z.}} \AND \bauthor{\bsnm{Yao},~\bfnm{Jianfeng}\binits{J.}}
(\byear{2012}).
\btitle{On sample eigenvalues in a generalized spiked population model}.
\bjournal{J. Multivariate Anal.}
\bvolume{106}
\bpages{167--177}.
\end{barticle}
\endbibitem

\bibitem[\protect\citeauthoryear{Bai and Yin}{1993}]{Bai:Yin:1993}
\begin{barticle}[author]
\bauthor{\bsnm{Bai},~\bfnm{Z.~D.}\binits{Z.~D.}} \AND \bauthor{\bsnm{Yin},~\bfnm{Y.~Q.}\binits{Y.~Q.}}
(\byear{1993}).
\btitle{Limit of the smallest eigenvalue of a large-dimensional sample covariance matrix}.
\bjournal{Ann. Probab.}
\bvolume{21}
\bpages{1275--1294}.
\end{barticle}
\endbibitem

\bibitem[\protect\citeauthoryear{Butsch and Fasen-Hartmann}{2024}]{butsch2024fasen}
\begin{barticle}[author]
\bauthor{\bsnm{Butsch},~\bfnm{Lucas}\binits{L.}} \AND \bauthor{\bsnm{Fasen-Hartmann},~\bfnm{Vicky}\binits{V.}}
(\byear{2024}).
\btitle{Information criteria for the number of directions of extremes in high-dimensional data}.
\bjournal{arxiv: 2409.10174}.
\end{barticle}
\endbibitem

\bibitem[\protect\citeauthoryear{{Chautru}}{2015}]{chautru}
\begin{barticle}[author]
\bauthor{\bsnm{{Chautru}},~\bfnm{Emilie}\binits{E.}}
(\byear{2015}).
\btitle{{Dimension reduction in multivariate extreme value analysis}}.
\bjournal{{Electron. J. Stat.}}
\bvolume{9}
\bpages{383--418}.
\end{barticle}
\endbibitem

\bibitem[\protect\citeauthoryear{Cl\'emen\c{c}on, Huet and Sabourin}{2024}]{Sabourin_et_al_2024}
\begin{barticle}[author]
\bauthor{\bsnm{Cl\'emen\c{c}on},~\bfnm{Stephan}\binits{S.}}, \bauthor{\bsnm{Huet},~\bfnm{Nathan}\binits{N.}} \AND \bauthor{\bsnm{Sabourin},~\bfnm{Anne}\binits{A.}}
(\byear{2024}).
\btitle{Regular variation in {H}ilbert spaces and principal component analysis for functional extremes}.
\bjournal{Stochastic Process. Appl.}
\bvolume{174}
\bpages{104375}.
\end{barticle}
\endbibitem

\bibitem[\protect\citeauthoryear{Cooley and Thibaud}{2019}]{CT:19}
\begin{barticle}[author]
\bauthor{\bsnm{Cooley},~\bfnm{D.}\binits{D.}} \AND \bauthor{\bsnm{Thibaud},~\bfnm{E.}\binits{E.}}
(\byear{2019}).
\btitle{Decompositions of dependence for high-dimensional extremes}.
\bjournal{Biometrika}
\bvolume{106}
\bpages{587--604}.
\end{barticle}
\endbibitem

\bibitem[\protect\citeauthoryear{Dauxois, Pousse and Romain}{1982}]{Dauxois:Pousse:Romain:1982}
\begin{barticle}[author]
\bauthor{\bsnm{Dauxois},~\bfnm{J.}\binits{J.}}, \bauthor{\bsnm{Pousse},~\bfnm{A.}\binits{A.}} \AND \bauthor{\bsnm{Romain},~\bfnm{Y.}\binits{Y.}}
(\byear{1982}).
\btitle{Asymptotic theory for the principal component analysis of a vector random function: some applications to statistical inference}.
\bjournal{J. Multivariate Anal.}
\bvolume{12}
\bpages{136--154}.
\end{barticle}
\endbibitem

\bibitem[\protect\citeauthoryear{Davis}{1977}]{davis:1977}
\begin{barticle}[author]
\bauthor{\bsnm{Davis},~\bfnm{A.~W.}\binits{A.~W.}}
(\byear{1977}).
\btitle{Asymptotic theory for principal component analysis: non-normal case}.
\bjournal{Austral. J. Statist.}
\bvolume{19}
\bpages{206--212}.
\end{barticle}
\endbibitem

\bibitem[\protect\citeauthoryear{de~Haan and Ferreira}{2006}]{HF:2006}
\begin{bbook}[author]
\bauthor{\bparticle{de} \bsnm{Haan},~\bfnm{L.}\binits{L.}} \AND \bauthor{\bsnm{Ferreira},~\bfnm{A.}\binits{A.}}
(\byear{2006}).
\btitle{Extreme Value Theory: An Introduction}.
\bpublisher{Springer}, \baddress{New York}.
\end{bbook}
\endbibitem

\bibitem[\protect\citeauthoryear{Drees}{2025}]{drees2025asymptoticbehaviorprincipalcomponent}
\begin{barticle}[author]
\bauthor{\bsnm{Drees},~\bfnm{Holger}\binits{H.}}
(\byear{2025}).
\btitle{Asymptotic Behavior of Principal Component Projections for Multivariate Extremes}.
\bjournal{arxiv: 2503.22296}.
\end{barticle}
\endbibitem

\bibitem[\protect\citeauthoryear{Drees and Sabourin}{2021}]{DS:21}
\begin{barticle}[author]
\bauthor{\bsnm{Drees},~\bfnm{Holger}\binits{H.}} \AND \bauthor{\bsnm{Sabourin},~\bfnm{Anne}\binits{A.}}
(\byear{2021}).
\btitle{Principal component analysis for multivariate extremes}.
\bjournal{Electron. J. Stat.}
\bvolume{15}
\bpages{908--943}.
\end{barticle}
\endbibitem

\bibitem[\protect\citeauthoryear{{DWD\textcolor{white}{-}Climate\textcolor{white}{-}Data\textcolor{white}{-}Center\textcolor{white}{-}(CDC)}}{1951 - 2022}]{DWD}
\begin{bmisc}[author]
\bauthor{\bsnm{{DWD\textcolor{white}{-}Climate\textcolor{white}{-}Data\textcolor{white}{-}Center\textcolor{white}{-}(CDC)}}}
(\byear{1951 - 2022}).
\btitle{Daily station observations precipitation height in mm for Germany, version v21.3, last accessed: May 03, 2023}.
\end{bmisc}
\endbibitem

\bibitem[\protect\citeauthoryear{Elezovi\'c, Giordano and Pe\u{c}ari\'c}{2000}]{Gautschi}
\begin{barticle}[author]
\bauthor{\bsnm{Elezovi\'c},~\bfnm{Neven}\binits{N.}}, \bauthor{\bsnm{Giordano},~\bfnm{Carla}\binits{C.}} \AND \bauthor{\bsnm{Pe\u{c}ari\'c},~\bfnm{Josip}\binits{J.}}
(\byear{2000}).
\btitle{The best bounds in {G}autschi's inequality}.
\bjournal{Math. Inequal. Appl.}
\bvolume{3}
\bpages{239--252}.
\end{barticle}
\endbibitem

\bibitem[\protect\citeauthoryear{Engelke and Ivanovs}{2021}]{Engelke:Ivanovs}
\begin{barticle}[author]
\bauthor{\bsnm{Engelke},~\bfnm{Sebastian}\binits{S.}} \AND \bauthor{\bsnm{Ivanovs},~\bfnm{Jevgenijs}\binits{J.}}
(\byear{2021}).
\btitle{Sparse structures for multivariate extremes}.
\bjournal{Annu. Rev. Stat. Appl.}
\bvolume{8}
\bpages{241--270}.
\end{barticle}
\endbibitem

\bibitem[\protect\citeauthoryear{Falk}{2019}]{Falk:Buch}
\begin{bbook}[author]
\bauthor{\bsnm{Falk},~\bfnm{Michael}\binits{M.}}
(\byear{2019}).
\btitle{Multivariate extreme value theory and {D}-norms}.
\bseries{Springer Series in Operations Research and Financial Engineering}.
\bpublisher{Springer, Cham}.
\end{bbook}
\endbibitem

\bibitem[\protect\citeauthoryear{Fujikoshi and Sakurai}{2016}]{fujikoshi:sakurai:2016}
\begin{barticle}[author]
\bauthor{\bsnm{Fujikoshi},~\bfnm{Yasunori}\binits{Y.}} \AND \bauthor{\bsnm{Sakurai},~\bfnm{Tetsuro}\binits{T.}}
(\byear{2016}).
\btitle{Some properties of estimation criteria for dimensionality in Principal Component Analysis}.
\bjournal{AJMMS}
\bvolume{35}
\bpages{133-142}.
\end{barticle}
\endbibitem

\bibitem[\protect\citeauthoryear{Hogg, McKean and Craig}{2005}]{Hogg:2005}
\begin{bbook}[author]
\bauthor{\bsnm{Hogg},~\bfnm{Robert~V.}\binits{R.~V.}}, \bauthor{\bsnm{McKean},~\bfnm{Joseph~W.}\binits{J.~W.}} \AND \bauthor{\bsnm{Craig},~\bfnm{Allen~T.}\binits{A.~T.}}
(\byear{2005}).
\btitle{Introduction to mathematical statistics},
\bedition{6.} ed.
\bpublisher{Pearson Prentice Hall}.
\end{bbook}
\endbibitem

\bibitem[\protect\citeauthoryear{Horn and Johnson}{2013}]{matrix_analysis}
\begin{bbook}[author]
\bauthor{\bsnm{Horn},~\bfnm{Roger~A.}\binits{R.~A.}} \AND \bauthor{\bsnm{Johnson},~\bfnm{Charles~R.}\binits{C.~R.}}
(\byear{2013}).
\btitle{Matrix analysis},
\bedition{Second} ed.
\bpublisher{Cambridge University Press}.
\end{bbook}
\endbibitem

\bibitem[\protect\citeauthoryear{Jiang, Qiu and Li}{2023}]{jiang:2023}
\begin{barticle}[author]
\bauthor{\bsnm{Jiang},~\bfnm{Qianqian}\binits{Q.}}, \bauthor{\bsnm{Qiu},~\bfnm{Jiaxin}\binits{J.}} \AND \bauthor{\bsnm{Li},~\bfnm{Zeng}\binits{Z.}}
(\byear{2023}).
\btitle{On eigenvalues of sample covariance matrices based on high dimensional compositional data}.
\bjournal{arXiv : 2312.14420}.
\end{barticle}
\endbibitem

\bibitem[\protect\citeauthoryear{Johnstone}{2001}]{johnstone:2001}
\begin{barticle}[author]
\bauthor{\bsnm{Johnstone},~\bfnm{Iain~M.}\binits{I.~M.}}
(\byear{2001}).
\btitle{On the distribution of the largest eigenvalue in principal components analysis}.
\bjournal{Ann. Statist.}
\bvolume{29}
\bpages{295--327}.
\end{barticle}
\endbibitem

\bibitem[\protect\citeauthoryear{Johnstone and Yang}{2018}]{johnstone:2018}
\begin{barticle}[author]
\bauthor{\bsnm{Johnstone},~\bfnm{Iain~M.}\binits{I.~M.}} \AND \bauthor{\bsnm{Yang},~\bfnm{Jeha}\binits{J.}}
(\byear{2018}).
\btitle{Notes on asymptotics of sample eigenstructure for spiked covariance models with non-Gaussian data}.
\bjournal{arXiv : 1810.10427}.
\end{barticle}
\endbibitem

\bibitem[\protect\citeauthoryear{Larsson and Resnick}{2012}]{Larsson:Resnick:2012}
\begin{barticle}[author]
\bauthor{\bsnm{Larsson},~\bfnm{Martin}\binits{M.}} \AND \bauthor{\bsnm{Resnick},~\bfnm{Sidney~I.}\binits{S.~I.}}
(\byear{2012}).
\btitle{Extremal dependence measure and extremogram: the regularly varying case}.
\bjournal{Extremes}
\bvolume{15}
\bpages{231--256}.
\end{barticle}
\endbibitem

\bibitem[\protect\citeauthoryear{Marčenko and Pastur}{1967}]{marchenko:pastur:1967}
\begin{barticle}[author]
\bauthor{\bsnm{Marčenko},~\bfnm{V~A}\binits{V.~A.}} \AND \bauthor{\bsnm{Pastur},~\bfnm{L~A}\binits{L.~A.}}
(\byear{1967}).
\btitle{Distribution of eigenvalues for some sets of random matrices}.
\bjournal{Mat. Sb.}
\bvolume{1}
\bpages{457}.
\end{barticle}
\endbibitem

\bibitem[\protect\citeauthoryear{Meyer and Wintenberger}{2023}]{meyer_muscle23}
\begin{barticle}[author]
\bauthor{\bsnm{Meyer},~\bfnm{Nicolas}\binits{N.}} \AND \bauthor{\bsnm{Wintenberger},~\bfnm{Olivier}\binits{O.}}
(\byear{2023}).
\btitle{Multivariate sparse clustering for extremes}.
\bjournal{J. Amer. Statist. Assoc.}
\bvolume{0}
\bpages{1-12}.
\end{barticle}
\endbibitem

\bibitem[\protect\citeauthoryear{Muirhead}{1982}]{muirhead:1982}
\begin{bbook}[author]
\bauthor{\bsnm{Muirhead},~\bfnm{Robb~J.}\binits{R.~J.}}
(\byear{1982}).
\btitle{Aspects of multivariate statistical theory}.
\bpublisher{John Wiley \& Sons, Inc.}
\end{bbook}
\endbibitem

\bibitem[\protect\citeauthoryear{Paul}{2007}]{Paul:2007}
\begin{barticle}[author]
\bauthor{\bsnm{Paul},~\bfnm{Debashis}\binits{D.}}
(\byear{2007}).
\btitle{Asymptotics of sample eigenstructure for a large dimensional spiked covariance model}.
\bjournal{Statist. Sinica}
\bvolume{17}
\bpages{1617--1642}.
\end{barticle}
\endbibitem

\bibitem[\protect\citeauthoryear{Resnick}{1987}]{resnick:1987}
\begin{bbook}[author]
\bauthor{\bsnm{Resnick},~\bfnm{S.~I.}\binits{S.~I.}}
(\byear{1987}).
\btitle{Extreme Values, Regular Variation, and Point Processes}.
\bpublisher{Springer}.
\end{bbook}
\endbibitem

\bibitem[\protect\citeauthoryear{Resnick}{2007}]{resnick:2007}
\begin{bbook}[author]
\bauthor{\bsnm{Resnick},~\bfnm{S.~I.}\binits{S.~I.}}
(\byear{2007}).
\btitle{Heavy-Tail Phenomena: Probabilistic and Statistical Modeling}.
\bpublisher{Springer}.
\end{bbook}
\endbibitem

\bibitem[\protect\citeauthoryear{Rohrbeck and Cooley}{2023}]{MR4582715}
\begin{barticle}[author]
\bauthor{\bsnm{Rohrbeck},~\bfnm{Christian}\binits{C.}} \AND \bauthor{\bsnm{Cooley},~\bfnm{Daniel}\binits{D.}}
(\byear{2023}).
\btitle{Simulating flood event sets using extremal principal components}.
\bjournal{Ann. Appl. Stat.}
\bvolume{17}
\bpages{1333--1352}.
\end{barticle}
\endbibitem

\bibitem[\protect\citeauthoryear{Schwarz}{1978}]{schwarz}
\begin{barticle}[author]
\bauthor{\bsnm{Schwarz},~\bfnm{Gideon}\binits{G.}}
(\byear{1978}).
\btitle{Estimating the dimension of a model}.
\bjournal{Ann. Statist.}
\bvolume{6}
\bpages{461--464}.
\end{barticle}
\endbibitem

\bibitem[\protect\citeauthoryear{Silverstein}{1995}]{silverstein:1995}
\begin{barticle}[author]
\bauthor{\bsnm{Silverstein},~\bfnm{Jack~W.}\binits{J.~W.}}
(\byear{1995}).
\btitle{Strong convergence of the empirical distribution of eigenvalues of large-dimensional random matrices}.
\bjournal{J. Multivariate Anal.}
\bvolume{55}
\bpages{331--339}.
\end{barticle}
\endbibitem

\bibitem[\protect\citeauthoryear{Silverstein and Choi}{1995}]{Silverstein:Choi:1995}
\begin{barticle}[author]
\bauthor{\bsnm{Silverstein},~\bfnm{Jack~W.}\binits{J.~W.}} \AND \bauthor{\bsnm{Choi},~\bfnm{Sang-Il}\binits{S.-I.}}
(\byear{1995}).
\btitle{Analysis of the limiting spectral distribution of large-dimensional random matrices}.
\bjournal{J. Multivariate Anal.}
\bvolume{54}
\bpages{295--309}.
\end{barticle}
\endbibitem

\bibitem[\protect\citeauthoryear{Uchida}{2008}]{AMGM}
\begin{barticle}[author]
\bauthor{\bsnm{Uchida},~\bfnm{Yasuharu}\binits{Y.}}
(\byear{2008}).
\btitle{A simple proof of the geometric-arithmetic mean inequality}.
\bjournal{J. Inequal. Pure Appl. Math.}
\bvolume{9}.
\end{barticle}
\endbibitem

\bibitem[\protect\citeauthoryear{Wan}{2024}]{wan2024characterizing}
\begin{barticle}[author]
\bauthor{\bsnm{Wan},~\bfnm{Phyllis}\binits{P.}}
(\byear{2024}).
\btitle{Characterizing extremal dependence on a hyperplane}.
\bjournal{arxiv:2411.00573}.
\end{barticle}
\endbibitem

\bibitem[\protect\citeauthoryear{Yin, Bai and Krishnaiah}{1988}]{yin:bai:krishnaiah:1988}
\begin{barticle}[author]
\bauthor{\bsnm{Yin},~\bfnm{Y.~Q.}\binits{Y.~Q.}}, \bauthor{\bsnm{Bai},~\bfnm{Z.~D.}\binits{Z.~D.}} \AND \bauthor{\bsnm{Krishnaiah},~\bfnm{P.~R.}\binits{P.~R.}}
(\byear{1988}).
\btitle{On the limit of the largest eigenvalue of the large-dimensional sample covariance matrix}.
\bjournal{Probab. Theory Related Fields}
\bvolume{78}
\bpages{509--521}.
\end{barticle}
\endbibitem

\end{thebibliography}
\end{document}